\documentclass[11pt]{article}

\usepackage{preamble}
\usepackage{makecell}

\usepackage{geometry}
 \geometry{
 a4paper,
 right = 25mm,
 left=25mm,
 top=15mm
 }

\newcommand\coeffPU{48}
\newcommand\coeffQU{72}

\title{
Algorithms for Discrepancy, Matchings and Approximations: Fast, Simple, and Practical
}

\author{M\'onika Csik\'os\footnote{Universit\'e Paris Cit\'e, IRIF, CNRS UMR 8243 and DI ENS, Universit\'e PSL. E-mail: csikos@irif.fr.} ~and Nabil H. Mustafa\footnote{Universit\'e Sorbonne Paris Nord, Laboratoire LIPN, CNRS 7030. E-mail: nabil.mustafa@univ-paris13.fr.}}

\date{}

\begin{document}

\maketitle

\begin{abstract}
\noindent
We study one of the key tools in data approximation and optimization: low-discrepancy colorings. Formally, given a finite set system $(X,\S)$, the \emph{discrepancy} of a two-coloring $\chi:X\to\{-1,1\}$ is defined as $\max_{S \in \S}\abs{\chi(S)}$, where $\chi(S)=\sum\limits_{x \in S}\chi(x)$.  \\ \\
We propose a randomized algorithm which, for any $d>0$ and $(X,\S)$ with dual shatter function $\pi^*(k)=O(k^d)$, returns a coloring with  expected discrepancy $O\round{\sqrt{|X|^{1-1/d}\log|\S|}}$ (this bound is tight) in time $\aO\round{|\S|\cdot|X|^{1/d}+|X|^{2+1/d}}$, improving upon the previous-best time of $O(|\S|\cdot|X|^3)$ by at least a factor of $|X|^{2-1/d}$ when $|\S|\geq|X|$.
This setup includes
many geometric classes, families of bounded dual VC-dimension, and others. As an immediate consequence, we obtain an improved algorithm to construct $\eps$-approximations of sub-quadratic size.\\ \\
Our method uses primal-dual reweighing with an improved analysis of randomly updated weights and exploits the structural properties of the set system via matchings with low crossing number---a fundamental structure in computational geometry. In particular, we get the same $|X|^{2-1/d}$ factor speed-up on the construction time of matchings with crossing number $O\round{|X|^{1-1/d}}$, which is the first improvement since the 1980s. \\ \\
The proposed algorithms are very simple, which makes it possible, for the first time, to compute colorings with near-optimal discrepancies and near-optimal sized approximations for abstract and geometric set systems in dimensions higher than $2$. 
 \\
\end{abstract}

\textbf{Keywords:} discrepancy, approximations,  low-crossing matchings, VC-dimension, MWU

\newpage

\tableofcontents

\newcommand\capx{C_{\mathrm{apx}}} 
\newcommand\dvc{\mathrm{d_{VC}}} 

\section{Introduction and Main Result}\label{sec:intro}

A \emph{set system} is a pair $(X,\S)$, where $X$ is a set and $\S$ is a collection of subsets of $X$. The elements of $\S$ are called \emph{ranges}. We consider finite set systems, where both $|X|$ and $|\S|$ are finite, and use the notation $n = |X|$ and $m =|\S|$ throughout this paper. 
We study the \emph{discrepancy problem}, which asks for a two-coloring $\chi \colon X \to \left\{-1, 1\right\}$  that minimizes the \emph{discrepancy}
			\[
			 \disc_{ \S } \round{\chi} 
			 =\max_{S \in \S} \left| \sum_{x \in S} \chi \left( x \right) \right|.
			 \]


Starting from 1950s, the study of low-discrepancy colorings has been an active area of research with applications in various branches of mathematics and computer science. 
As it is often termed, the `discrepancy method' inspired many approximation algorithms for discrete optimisation problems. For instance, it is an important tool in rounding fractional solutions of a linear system of equations to integral ones \citep{LVS86, rothvoss2012, BN17,rothvoss2013} and in recent combinatorial algorithms proposed for problems such as  bin-packing \citep{rothvoss2013,EPR13,HR17}, or scheduling problems \citep{BCKL14,BDJR22}.
In data approximation, a coloring with discrepancy $o(\sqrt{n})$ can 
be used to construct  $o(1/\eps^2)$-sized $\eps$-approximations (or $\eps$-samples), outperforming the $\Theta(1/\eps^2)$ guarantee of a single random sample~\citep{MWW93}.
Discrepancy is also closely connected to the sample complexity of learning. For instance, in the paper of~\cite{BBL02}, discrepancy of a random balanced coloring is used to construct penalized empirical risk minimization algorithms, leading to improved statistical guarantees. Furthermore, the study of Rademacher complexity
can be seen as a study of discrepancy of a random coloring. In more recent works, the discrepancy method had a key role in core-set constructions for kernel density estimation~\citep{phil13,PT20,tai20} and quantizing neural networks \citep{LS21}.
For further details and other examples of applications, we refer the interested reader to dedicated books on discrepancy~\citep{Chazelle:2000:DMR:507108, CST14,MatDiscBook}.

 \subsection*{Previous algorithms} 
 \cite{S85} showed that for any set system $(X,\S)$ there exists a two-coloring of $X$
with discrepancy $O \left( \sqrt{n \ln(m/n)}  \right)$, which is tight for $m = O(n)$.
However, his original proof only demonstrated the existence of such a coloring, without any efficient algorithm to construct it.
Finding a polynomial-time algorithm to construct colorings with optimal discrepancy had remained a major open problem for $25$ years, until a breakthrough result of \cite{DBLP:conf/focs/Bansal10}, who gave a randomized polynomial-time algorithm with near-optimal discrepancy guarantees. 
Since then, several researchers have proposed new polynomial-time algorithms with optimal discrepancy guarantees~\citep{hss14,DBLP:journals/siamcomp/LovettM15,DBLP:conf/ipco/LevyRR17,DBLP:conf/stoc/BansalDGL18}.
At the core of these methods is a random walk algorithm  which starts with the uniformly $0$ coloring, and at each step, updates the color of an element by adding a small increment to its coloring. If a variable reaches `$-1$' or `$1$', its value is fixed. The increment is determined by either solving an appropriate SDP \citep{DBLP:conf/focs/Bansal10,DBLP:journals/siamcomp/LovettM15, hss14}, or Gram-Schmidt orthogonalization \citep{DBLP:conf/stoc/BansalDGL18}, or by a deterministic algorithm using the multiplicative weights update (MWU) method \citep{DBLP:conf/ipco/LevyRR17}. 
The next table contains a summary:

\begin{table}[ht!]{}
\centering
\resizebox{0.95\textwidth}{!}{
\begin{tabular}{c|c|c|c|c}
\toprule
\multicolumn{1}{c|}{\textsc{Set system}}                       
& \multicolumn{1}{c|}{\textsc{Discrepancy}}                                               
& \multicolumn{1}{c|}{\textsc{Time}} 
& \multicolumn{1}{c|}{\textsc{Method}} 
& \multicolumn{1}{c}{\textsc{Citation}} 

\\ \midrule
\rule{0pt}{3ex}
\multirow{4}{*}{arbitrary} 
& \multirow{4}{*}{$O\left(\sqrt{n \ln \round{\frac{m}{n}}}\right)$} 

                            

                            
 & $\aO\round{n^3 + m^3}$    
 & random walk
 & \cite{DBLP:journals/siamcomp/LovettM15}        

 \\ 
   \rule{0pt}{3.5ex}                         
 &  
 & $O\round{n^4m}$    
 & random walk via MWU    
 & \cite{DBLP:conf/ipco/LevyRR17}        

 \\ 
    \rule{0pt}{3.5ex}   
 & 
 & $O\round{n^{3.38}+ nm^{2.38}}$    
 & Gram-Schmidt walk    
 & \cite{DBLP:conf/stoc/BansalDGL18}

 \\ \midrule

 \rule{0pt}{3ex}
{$\pi_{\S}(k) = O(k^d)$} 
& {$O\left(\sqrt{n^{1-1/d} }\right)$} 
& $\aO\round{n^3 + m^3}$
& partial coloring      
& \makecell*{\cite{DBLP:journals/dcg/Matousek95a} \\ \cite{DBLP:journals/siamcomp/LovettM15}}        

\\[1ex] 

  \midrule

 \rule{0pt}{3ex}
\multirow{1}{*}{$\pi_{\S}^*(k) = O(k^d)$} 
& \multirow{1}{*}{$O\left(\sqrt{n^{1-1/d} \ln m}\right)$} 
& $O(mn^3)$    
& MWU      
& \cite{MWW93}

\\[1ex] 

  \midrule \midrule

\rule{0pt}{3ex}
\multirow{1}{*}{$\pi_{\S}^*(k) = O(k^d)$} 
& \multirow{1}{*}{$O\left(\sqrt{n^{1-1/d} \ln m}\right)$} 
& $O(mn^{1/d} + n^{2+1/d})$    
& \makecell*{Sampling + \\ Primal-Dual + MWU }
& \textbf{This Paper}

\\[1ex] 

  \midrule  

\rule{0pt}{3ex}
\multirow{1}{*}{$\pi_{\S}^*(k) = O(k^d)$} 
& \multirow{1}{*}{$O\round{\sqrt{  n^{1-\alpha/d}\ln m   }}$} 
& $O\round{mn^{\alpha/d} + n^{1+\alpha(1+1/d)} }$    
& \makecell*{Sampling + Pruning + \\ Primal-Dual + MWU  }
& \textbf{This Paper}

\\[1ex] 



 \bottomrule

\end{tabular}
}
		\label{table:prev-results}
\end{table}

\noindent \textbf{Despite heavy interest for the past decades, still no efficient implementations with these guarantees exist. Indeed, that remains one of the open questions; see \href{https://homepages.cwi.nl/~dadush/workshop/discrepancy-ip/open-problems.html}{here}.}   \\

\noindent
In this work, we consider set systems with polynomially bounded dual shatter function. 

\begin{definition}[Dual-shatter function]
Let $(X,\S)$ be a set system. For any $\R \subseteq \S$, we say that the elements $x,y \in X$ are equivalent with respect to $\R$ if $x$ belongs to the same sets of $\R$ as $y$. Then $\pi_{\S}^* (k)$, where  $\pi_\S^*$  is called the dual-shatter function of $\S$, is defined to be the maximum number of equivalence classes on $X$ defined by any $k$-element subfamily $\R \subseteq \S$.
\end{definition}
\noindent
The class of set systems with polynomially bounded $\pi_\S^*(k)$ contains several fundamental cases:
\begin{itemize}
\item  set systems with dual VC-dimension at most $d$ (it implies  $\pi_{\S}^*(k) \leq \round{\frac{ek}{d}}^d$~\citep{Sa72,Sh72});
\item geometric set systems  induced by (unions or intersections of) half-spaces, balls, etc;
\item geometric set systems where $X$ is a set of $n$ points in $\RR^d$ and each range in $\S$ can be obtained as an intersection of $X$ with a semialgebraic set of bounded complexity;
 \item  set systems $(X, \S)$ with the property that the common intersection of any $d$ ranges from $\S$ has size at most $c$, for given constants $c$ and $d$ \citep{mat97}.
\end{itemize}

\bigskip

\newpage
\noindent We now present our five main algorithms. 
\textit{A highlight of our algorithms, besides near-quadratic improvement over previous-best running times, is that they avoid any input-specific tools, such as spatial partitioning. Thus we get improved pratical
constructions for many fundamental geometric set systems, narrowing the gap between theory and practice.}

\bigskip	

\noindent
\textbf{{\Large 1.} \textsc{Discrepancy}.} Our main result on low-discrepancy colorings is the following.
\begin{tcolorbox}
\begin{restatable}[Main Theorem]{thm}{maintheorem}
\label{thm:main}
	Let $(X,\S)$ be a finite set system, $n = |X|, m = |\S|$, and $c,d$ be constants such that $\pi^*_\S(k) \leq c \cdot k^d$.  Then there is a randomized algorithm that constructs a coloring $\chi$ of $X$  with expected discrepancy at most
	   \[
	   3\sqrt{ \frac{ 9c^{1/d}}{2} \cdot n^{1-1/d}\ln m  + 19 \ln^2 m \ln n}
	   \]
	   with at most
	   \[
		 \frac{34n^{2+ 1/d} \ln n}{c^{1/d}}  + \frac{25mn^{1/d}\ln (mn)\cdot\log n}{c^{1/d}} 
		\] 
		expected calls to the membership Oracle of $(X,\S)$.
\end{restatable}
\end{tcolorbox} 
 \medskip

\noindent 
Our algorithm is very simple and does not use any advanced subroutines or data structures: 

\medskip

{\centering
\begin{minipage}{.928\linewidth}
\begin{algorithm}[H] 
\algotitle{ \textsc{LowDiscColor-DualShatter}}{discalgoDS}
\caption{  \textsc{LowDiscColor-DualShatter}$\big((X,\S),  d\big)$}
	\label{algo:discrepancy-outline-dualshat}
	\While{ $|X| \geq 4$ }{
	$n \leftarrow |X|$, $E \leftarrow \binom{X}{2}$ \tcp*{\text{$E$ is the set of all edges on $X$}}

	set the weight of each element in $E, \S$ to $1$ 

 	\For{$i = 1, \dots, n/4$ }{
		sample $e_i = \{x_i,y_i\}$ from $E$ and \ $S_i $ from $\S$ (according to their weights)

		set $\chi(x_i) =  \pm 1$ with prob. $1/2$; set $\chi(y_i) = - \chi (x_i)$

        set $X \leftarrow X \setminus \left\{x_i, y_i \right\}$, and the weight of $e_i$ and its  adjacent edges to zero 

		$E_i \subseteq E:$ uniform sample of size $\aO\round{|E|/n^{1-1/d}}$

		halve weight of each $e \in E_i$ satisfying $|e \cap S_i| = 1$
		
	    $\S_i \subseteq \S:$ uniform sample of size $\aO\round{|\S|/n^{1-1/d}}$
	    
	    double weight of each $S \in \S_i$ satisfying $|S \cap e_i| = 1$ 
	}	}
	\textbf{return} $\chi$ (color the remaining at most $4$ elements of $X$ arbitrarily)
\end{algorithm}
\end{minipage}
\par 
}

\medskip
 
\noindent Importantly, the improved running time and the simplicity of this new method make it possible  to perform an empirical study of low-discrepancy colorings of abstract and high-dimensional geometric set systems. 
As an illustration,  the figures below show the average discrepancies in set systems induced by half-spaces in dimensions $2,3,$ and $4$, observed over $10$ repetitions of our method, compared with a purely random coloring (the shaded areas denote $\pm 1$ standard deviation from the mean). 


\begin{figure*}[ht]
			\centering
			\includegraphics[width=0.32\textwidth]{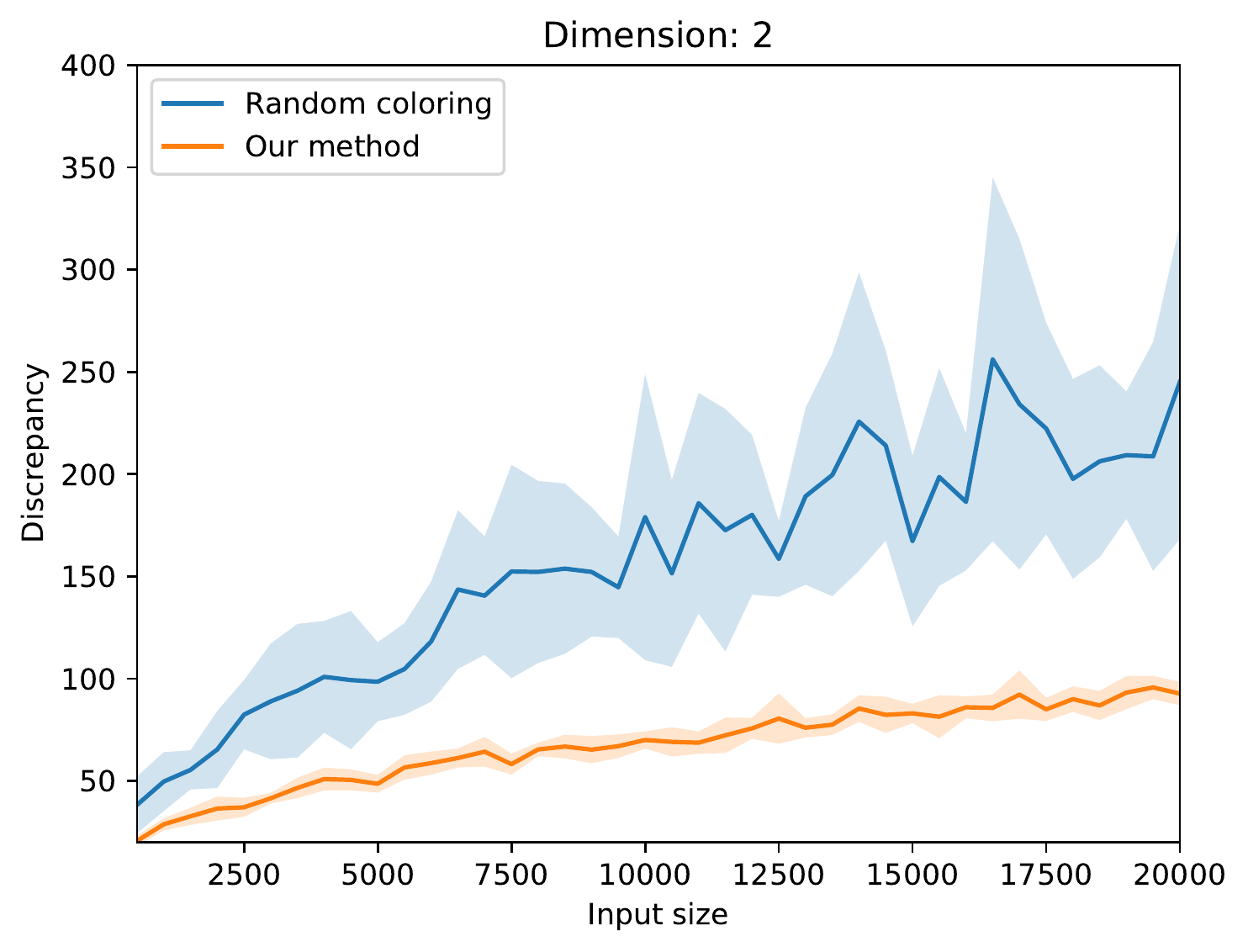}
			\hfill
			\includegraphics[width=0.32\textwidth]{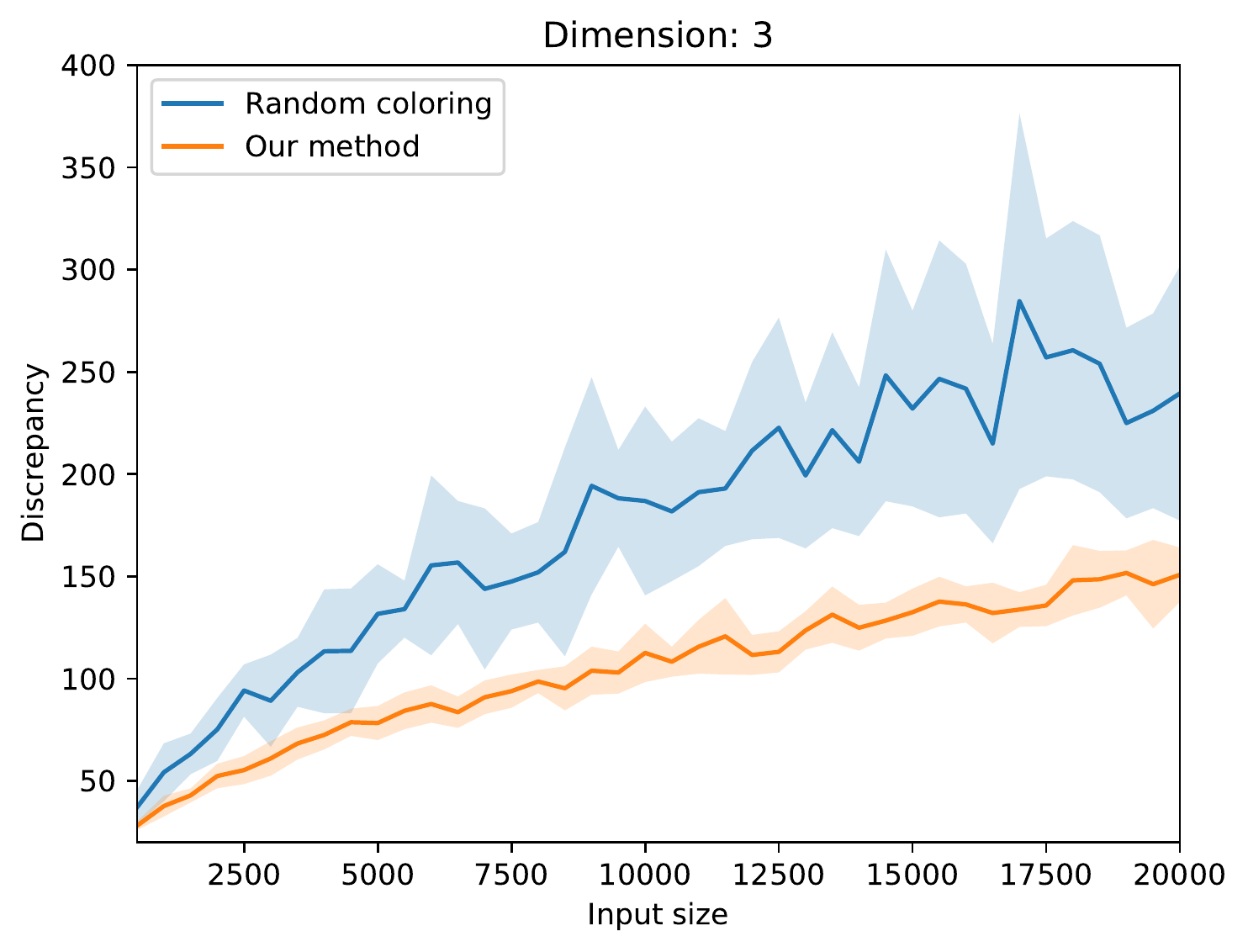}
			\hfill
			\includegraphics[width=0.32\textwidth]{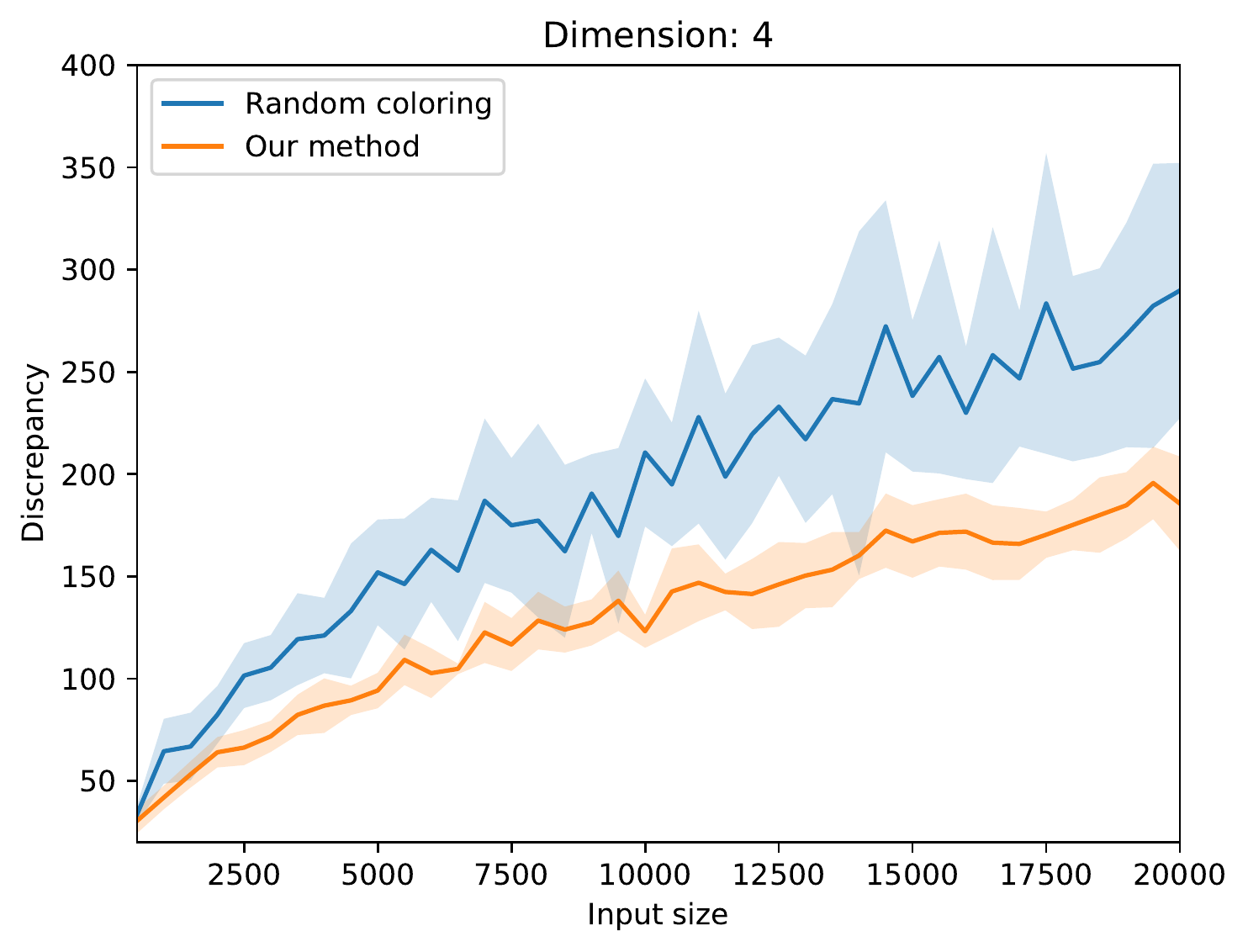}
			\label{fig:intro-disc}
	\end{figure*}

\bigskip
\bigskip
\newpage
\noindent \textbf{{\Large 2.}  \textsc{Matchings with Low Crossing Number}.} 
The key property which guarantees that the output of \nameref{discalgoDS} has low discrepancy is that any range in $\S$ crosses\footnote{We say that $S \in \S$ \emph{crosses} an edge  $\{x,y\} $ if $|S \cap \{x,y\}| =1$.} at most $O\round{n^{1-1/d}}$ of the selected edges $\curly{\curly{x_i, y_i}}_{i=1}^{n/2}$. 

In general, given a set system $(X, \S)$ and a perfect matching\footnote{Partition of $X$ into $n/2$ disjoint pairs (edges).} $M$ of $X$,
we define the \emph{crossing number} of $M$ with respect to $\S$ as the maximum number of edges of $M$ crossed by a single range $S\in \S$. 
The study of perfect matchings (along with spanning paths and spanning trees) with low crossing number was originally introduced for geometric range searching \citep{Wel88,CW89}. Since then, they have found applications in various fields, for instance, discrepancy theory \citep{MWW93}, learning theory \citep{AMY16}, or algorithmic graph theory \citep{DHV20}. 

The core of \nameref{discalgoDS} can be generalized to construct low-crossing matchings
in set systems satisfying the following assumption:
\begin{tcolorbox}
\begin{assumption*}[\textsc{MainAssumption}$(a,b,\gamma)$]\label{assumption}
	$(X,\S)$ is a finite set system with $m \geq n$, $m \geq 34$, such that any $Y \subseteq X$ has a perfect matching with crossing number at most $a|Y|^\gamma + b$ with respect to $\S$.\\

	It is known that if $\pi^*_\S(k) \leq c k^d$, then $(X,\S)$ satisfies the \textsc{MainAssumption} with parameters $a = \frac{(2c)^{1/d} }{2\ln 2(1-1/d)}$, $b = \frac{\ln m}{\ln 2}$, and $\gamma = 1-1/d$.~\citep{MatDiscBook}
	\end{assumption*}
\end{tcolorbox}


\noindent The main technical ingredient of this work---of independent interest and improving the previous-best construction time of $O(mn^3)$ known for several decades---is the following. 

\begin{theorem}\label{thm:main-matching-result}
	Let $(X,\S)$ be a set system satisfying \nameref{assumption}.
	Then there is a randomized algorithm that returns a perfect matching of $X$ with expected crossing number at most
		\[
		\frac{3a}{\gamma}  n^\gamma + \frac{3b\log n}{2} + 18 \ln \round{mn}\log n
		\]
		with  at most
		\[
		\min\curly{  \frac{24n^{3-\gamma} \ln n}{a}  + \frac{18mn^{1-\gamma}\ln mn}{a}\cdot\min\curly{\frac{2}{1-\gamma},~ \log n} ,~ \frac{n^3}{7} + \frac{mn}{2}}.
		\]
		expected calls to the membership Oracle of $(X, \S)$.
\end{theorem}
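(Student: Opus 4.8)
The plan is to analyze a primal-dual multiplicative-weights scheme generalizing \nameref{discalgoDS}: it runs in at most $\log_2 n$ \emph{phases}, a phase acting on the current vertex set $Y$ (write $\nu=|Y|$) by maintaining a weight on every candidate edge of $\binom{Y}{2}$ and on every range of $\S$ and performing $\nu/4$ iterations --- each one samples an edge $e_i$ with probability proportional to the edge weights and a range $S_i$ proportional to the range weights, adds $e_i$ to the matching, deletes its endpoints, halves (a uniform subsample of) the edges crossed by $S_i$, and doubles (a uniform subsample of) the ranges crossing $e_i$ --- so that the phase matches half of $Y$; the output is the union of all phases, with the last $\le 4$ vertices matched arbitrarily. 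The first step is to reduce the theorem to a per-phase statement: if the partial matching built in a phase on $\nu$ vertices has expected crossing number $O(a\nu^\gamma+b+\ln(m\nu))$, then, since the phases have sizes $\nu_j=n/2^{\,j-1}$, summing $\sum_j \nu_j^\gamma\le \tfrac{2}{\gamma}n^\gamma$ (using the convexity inequality $2^{-x}\le 1-\tfrac{x}{2}$ on $[0,1]$), and letting the $b$- and $\ln(mn)$-terms each pick up a factor $\log n$, yields the claimed bound on the expected crossing number; the oracle count is likewise the sum of the per-phase costs.

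For a fixed phase the dual side is a textbook potential argument. Let $u_i$ be the range weights at the start of iteration $i$, $U_i=\sum_{S}u_i(S)$ with $U_0=m$, and put $\ell_i=U_i^{-1}\sum_{S:\,e_i\text{ crosses }S}u_i(S)$, the $u_i$-weighted crossing fraction of the chosen edge. Doubling the ranges crossed by $e_i$ gives $U_{i+1}=U_i(1+\ell_i)$, hence $\ln(U_T/m)\le\sum_i\ell_i$; on the other hand the final weight of any range $S$ satisfies $u_T(S)\ge 2^{k_S}$, where $k_S$ is the number of this phase's edges crossing $S$. Therefore $k_S\le\log_2 m+(\ln 2)^{-1}\sum_i\ell_i$, and the whole matter reduces to bounding $\sum_i\ell_i$.

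This is the primal side and the main obstacle. By \nameref{assumption}, at every iteration $i$ the current set $Y_i$ has a perfect matching $M_{Y_i}$ with crossing number at most $a|Y_i|^\gamma+b$; because we match only half of $Y$ per phase we keep $|Y_i|\ge\nu/2$ throughout, so the uniform distribution over $E(M_{Y_i})$ is an edge distribution whose $u_i$-weighted crossing of any range is at most $\mu:=4(a\nu^\gamma+b)/\nu$. In particular $\min_e\ell_i\le\mu$ at every step, so a deterministic ``pick the cheapest edge'' rule would give $\sum_i\ell_i\le T\mu=a\nu^\gamma+b$ and finish the per-phase bound --- but that rule costs $\Theta(\nu^2 m)$ oracle calls per step. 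To reach the same guarantee while merely sampling $e_i$ proportionally to the edge weights, one runs the edge weights themselves as a multiplicative-weights algorithm with the $0/1$ loss ``$e$ is crossed by $S_i$''. Taking the phase-initial matching $M_Y$ as the comparator, its cumulative loss is $\tfrac{2}{\nu}\sum_i\#\{e\in E(M_Y):S_i\text{ crosses }e\}\le \tfrac{2}{\nu}\cdot T(a\nu^\gamma+b)=T\mu/2$ --- note this uses the crossing-number bound summed over the $T$ iterations, not any bound on how many ranges a single good edge meets --- and the small-loss form of the multiplicative-weights regret bound then gives $\sum_i\ell_i\le T\mu+O(\sqrt{T\mu\ln\nu}+\ln\nu)$, the square-root term absorbed into $O(a\nu^\gamma+b+\ln\nu)$ by AM--GM; the edges removed mid-phase are accommodated by a minor modification of this comparison.

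Finally, one must show that the two subsamples --- of sizes $\aO\round{|E|/\nu^\gamma}$ and $\aO\round{m/\nu^\gamma}$, present solely to reduce the per-iteration oracle cost from $\Theta(|E|+m)$ to $\aO\round{\nu^{2-\gamma}+m\nu^{-\gamma}}$ --- preserve all of the above in expectation: each weight update is unbiased up to the known sampling scale, ranges with $k_S\le\nu^\gamma$ already meet the target so only ranges crossed often, and hence boosted in expectation by the doubling, are relevant, and the resulting deviations are of lower order. I expect this subsampling bookkeeping, together with verifying that the extra randomness on the primal side still couples with the dual potential, to be the delicate point. The oracle count is then $\aO\round{\nu^{3-\gamma}+m\nu^{1-\gamma}}$ per phase; summing over phases, $\sum_j\nu_j^{3-\gamma}=O(n^{3-\gamma})$ and $\sum_j\nu_j^{1-\gamma}\le \min\{\tfrac{2}{1-\gamma},\,\log n\}\,n^{1-\gamma}$ (again by $2^{-x}\le 1-\tfrac x2$, and since there are at most $\log n$ phases each contributing $\le n^{1-\gamma}$), which after absorbing the $\aO$ factors and the $a^{-1}$ from the subsample sizes into the stated constants gives the first expression in the theorem's minimum; comparing with the trivial ``no subsampling'' count $n^3/7+mn/2$ gives the minimum as stated.
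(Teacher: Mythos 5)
Your architecture is the paper's: geometric phases, a dual potential on the range weights reducing $\max_S k_S$ to $\sum_i \ell_i$, a primal multiplicative-weights argument on the edge weights, and subsampled updates whose error is pushed into lower-order terms. Two steps, however, are genuine gaps. First, the primal comparator. An MW regret bound against the phase-initial matching $M_Y$ breaks because the algorithm forcibly zeroes the weight of every edge adjacent to each chosen $e_i$; after $T=\nu/4$ iterations, $\nu/2$ vertices are gone and in the worst case \emph{every} edge of $M_Y$ has lost an endpoint, so the potential step ``final total weight $\geq$ final weight of the comparator'' is vacuous. This is not a ``minor modification'': the paper instead lower-bounds the final total edge weight by the maximum over edges that \emph{survive} (the set $\tilde E_t$), which turns the comparator term into $\min_{e\in\tilde E_t}\sum_{i}\Inc(e,S_i)$, and then bounds that minimum a posteriori by applying the hereditary part of \nameref{assumption} to the surviving vertex set $\tilde X_{n/4}$ (still $n/2$ points, hence admitting a perfect matching of crossing number $a(n/2)^\gamma+b$) together with a pigeonhole count over the multiset $\{S_1,\dots,S_{n/4}\}$ (\Cref{lemma:robust-short-edge}). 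Relatedly, the edge-side MW bound controls $\sum_i\sum_e\frac{\omega_i(e)}{\omega_i(E)}\Inc(e,S_i)$, not $\sum_i\ell_i=\sum_i\sum_S\frac{\pi_i(S)}{\pi_i(\S)}\Inc(e_i,S)$; these coincide only after taking conditional expectations given the history (the bilinear exchange of \Cref{lemma:double-counting}), a step your write-up uses implicitly but never establishes.

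Second, the subsampling analysis is deferred entirely, yet it is where the additive $18\ln(mn)\log n$ term and the specific choices $\proba,\probaS=\Theta\bigl(\ln(\cdot)/(an^\gamma+b)\bigr)$ come from. The difficulty is that the potential lower bound only registers crossings at iterations where the range (resp.\ edge) was actually in the subsample, so one must show $\sum_i\Inc(e_i,S)\cdot\mathbf{1}_{\{S\in\S_i\}}\gtrsim\probaS\sum_i\Inc(e_i,S)$ \emph{uniformly} over all $m$ ranges, and the dual statement uniformly over all $O(n^2)$ edges. The paper proves this with an Azuma-type martingale inequality (\Cref{lemma:Azuma}) plus a union bound; your heuristic that ``only frequently crossed ranges are relevant and those are boosted in expectation'' does not yield the required uniform concentration. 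With these two repairs your argument becomes essentially the paper's proof; the phase-summation and the oracle-count bookkeeping at the end are correct as stated.
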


\noindent \Cref{fig:matchings} illustrates the matchings constructed by our algorithm and random sampling on different input point arrangements  and range types. It is clear that our method explicitly takes into account the information about ranges, which leads to different outcomes for set systems induced by half-spaces and balls.  On the other hand,  random sampling fails to preserve the intrinsic structure of the point set. 
We find it surprising that our method, that is based only on (non-uniform) sampling, gives a matching that adapts so well to each specific instance. 


\begin{figure}[ht!]
			\centering
			\includegraphics[width=0.28\textwidth]{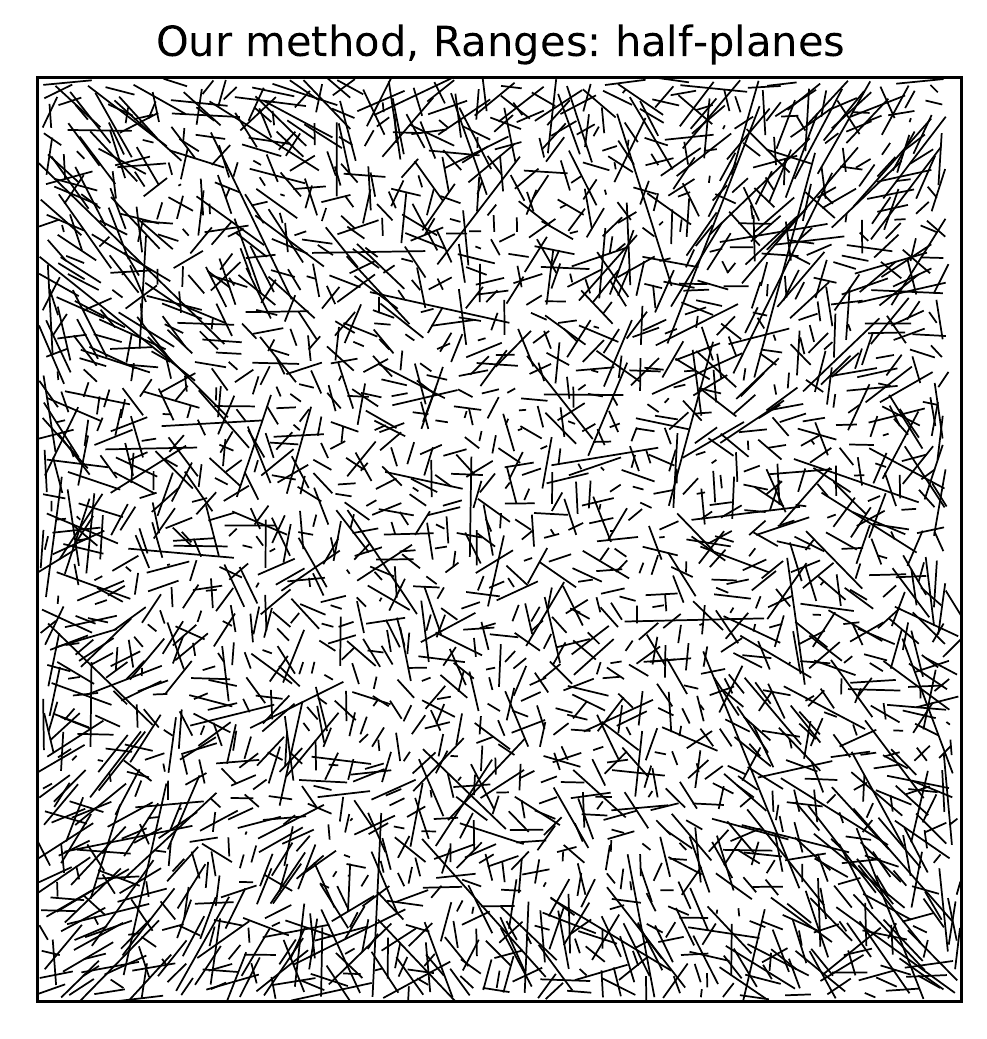}
			\includegraphics[width=0.28\textwidth]{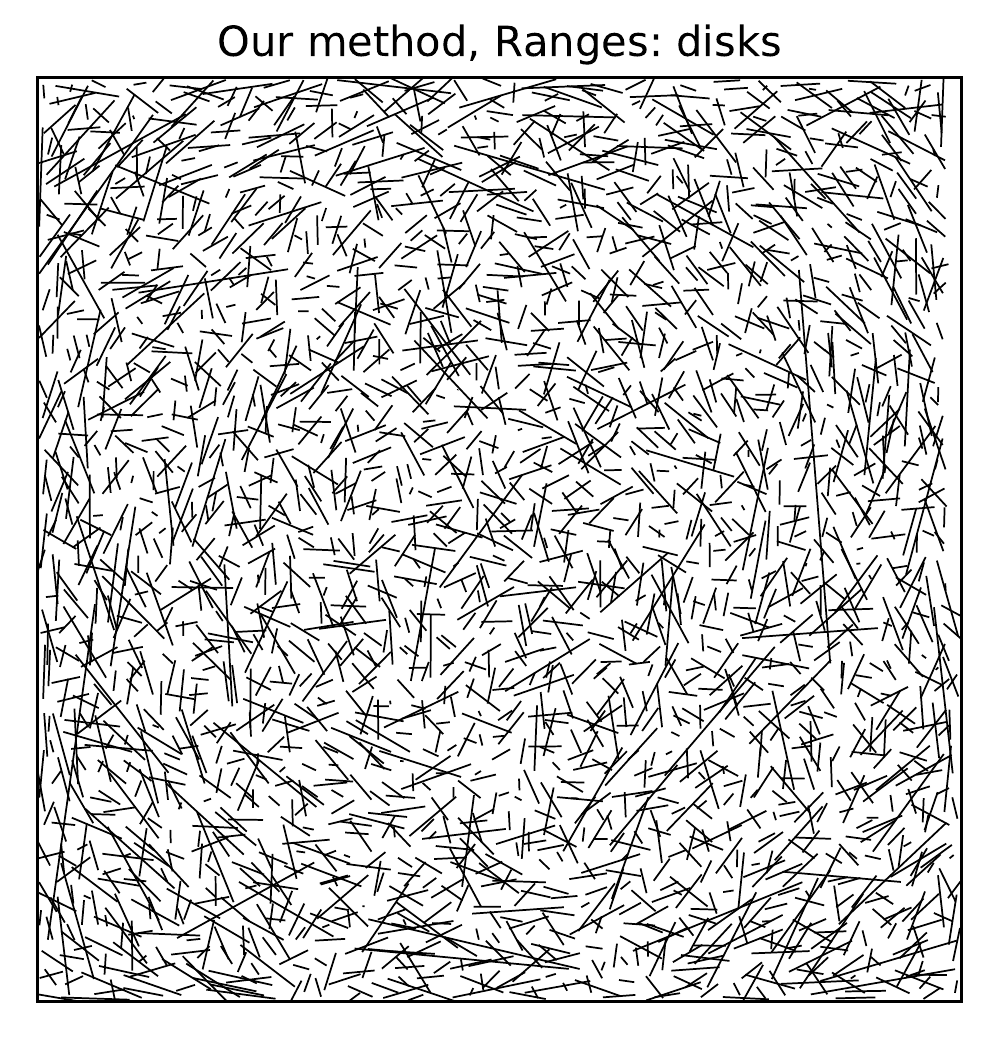}
			\includegraphics[width=0.28\textwidth]{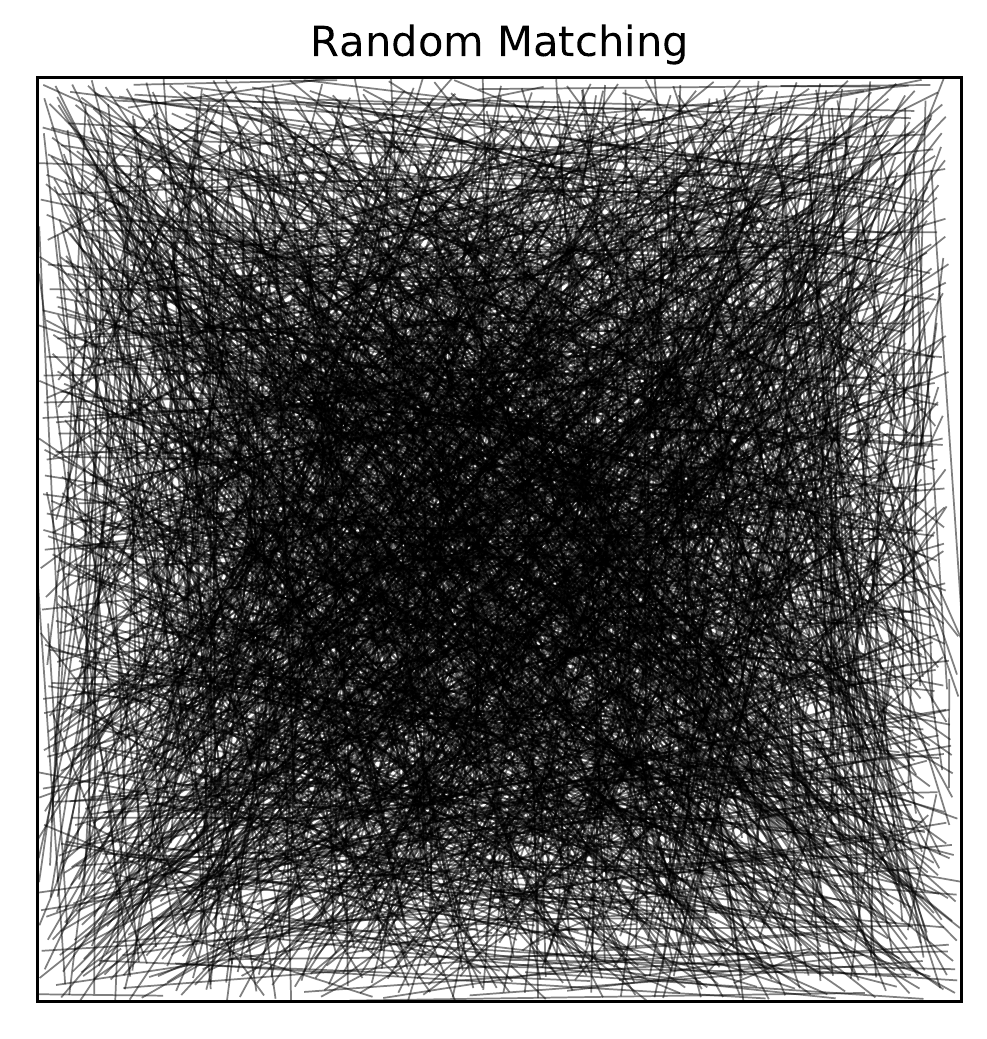}\\
			\includegraphics[width=0.28\textwidth]{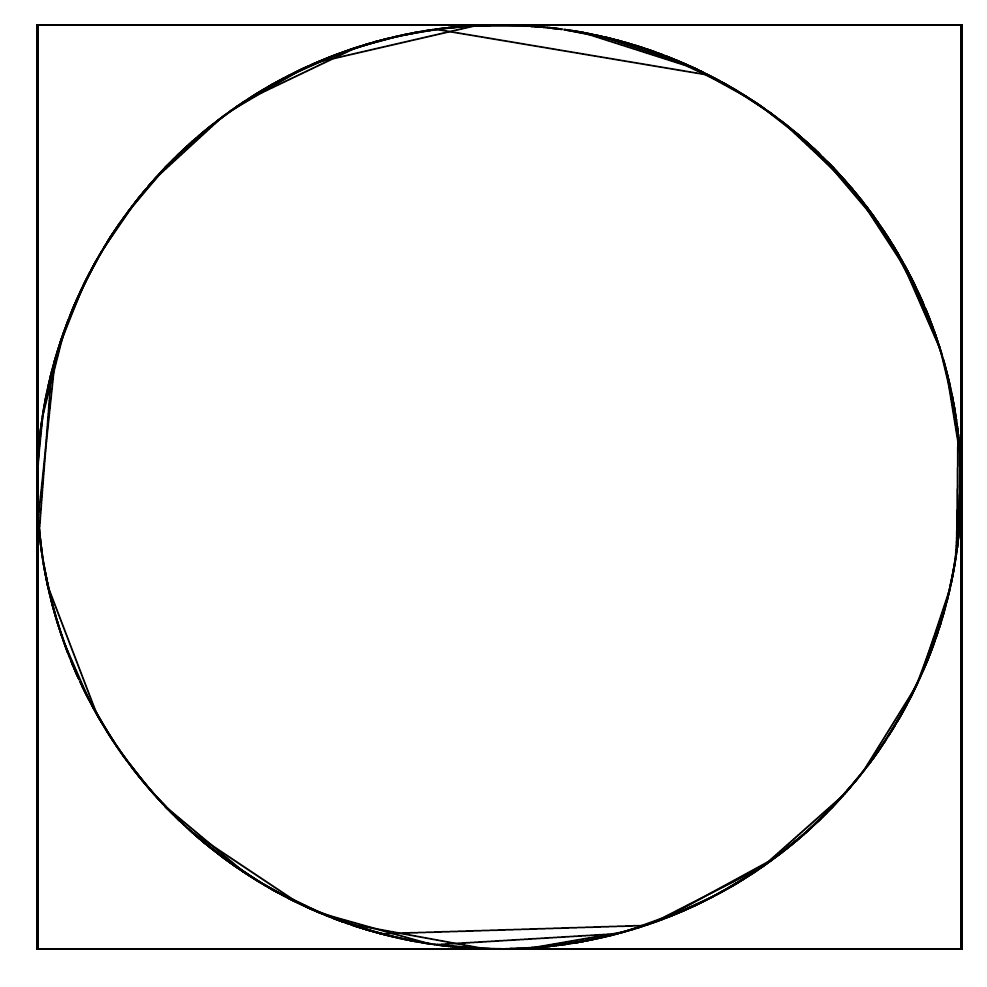}
			\includegraphics[width=0.28\textwidth]{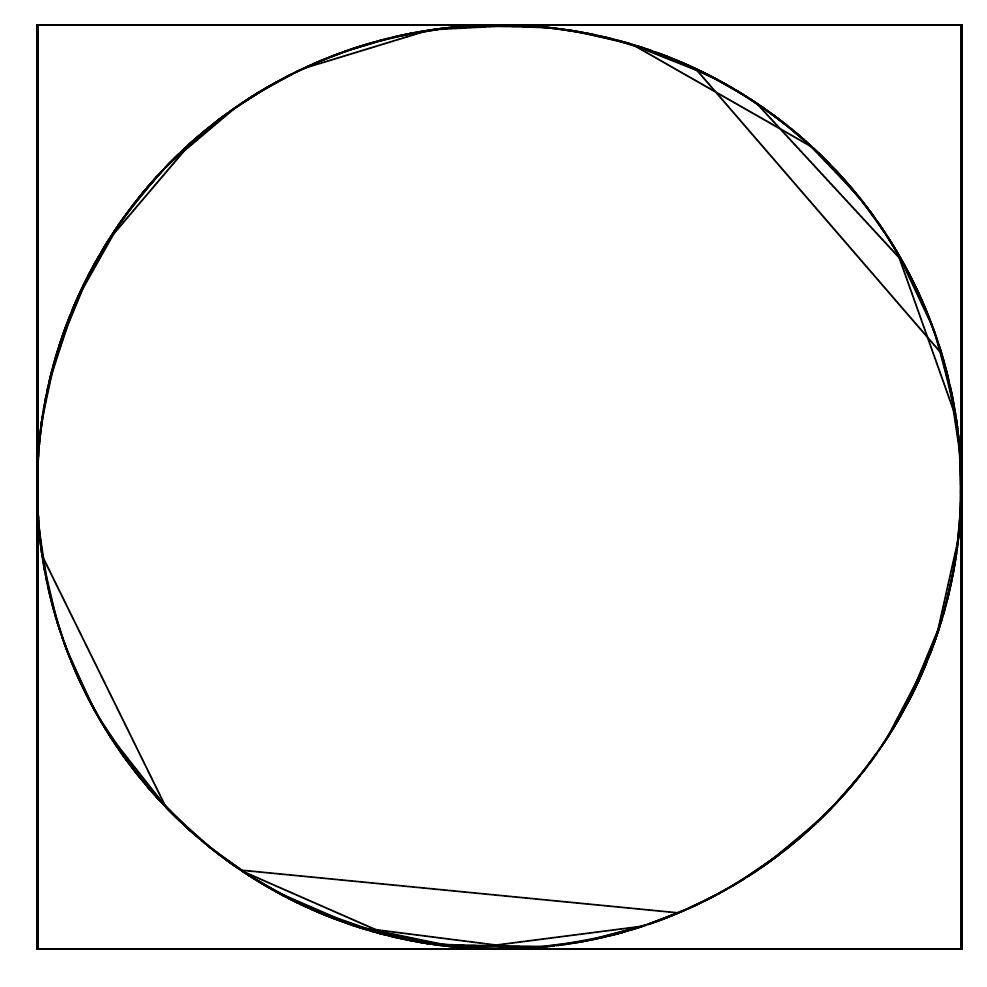}
			\includegraphics[width=0.28\textwidth]{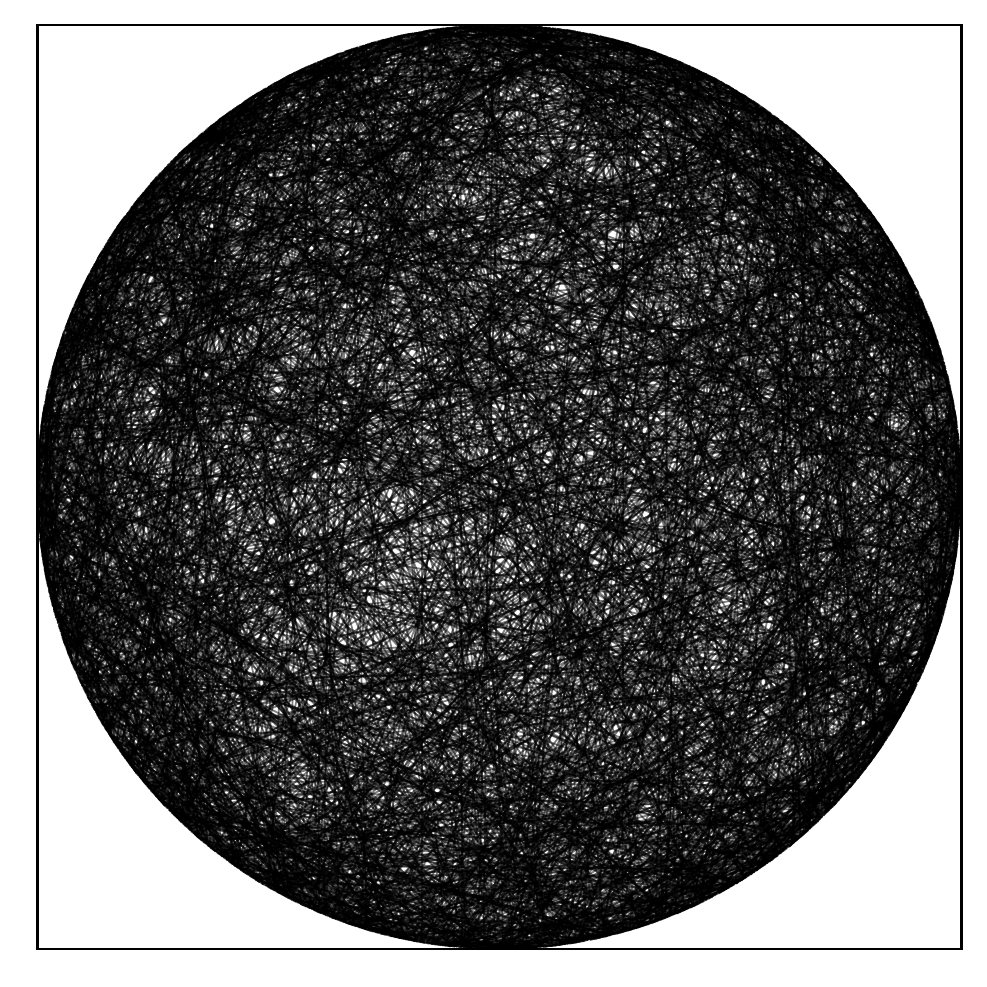}\\
			\includegraphics[width=0.28\textwidth]{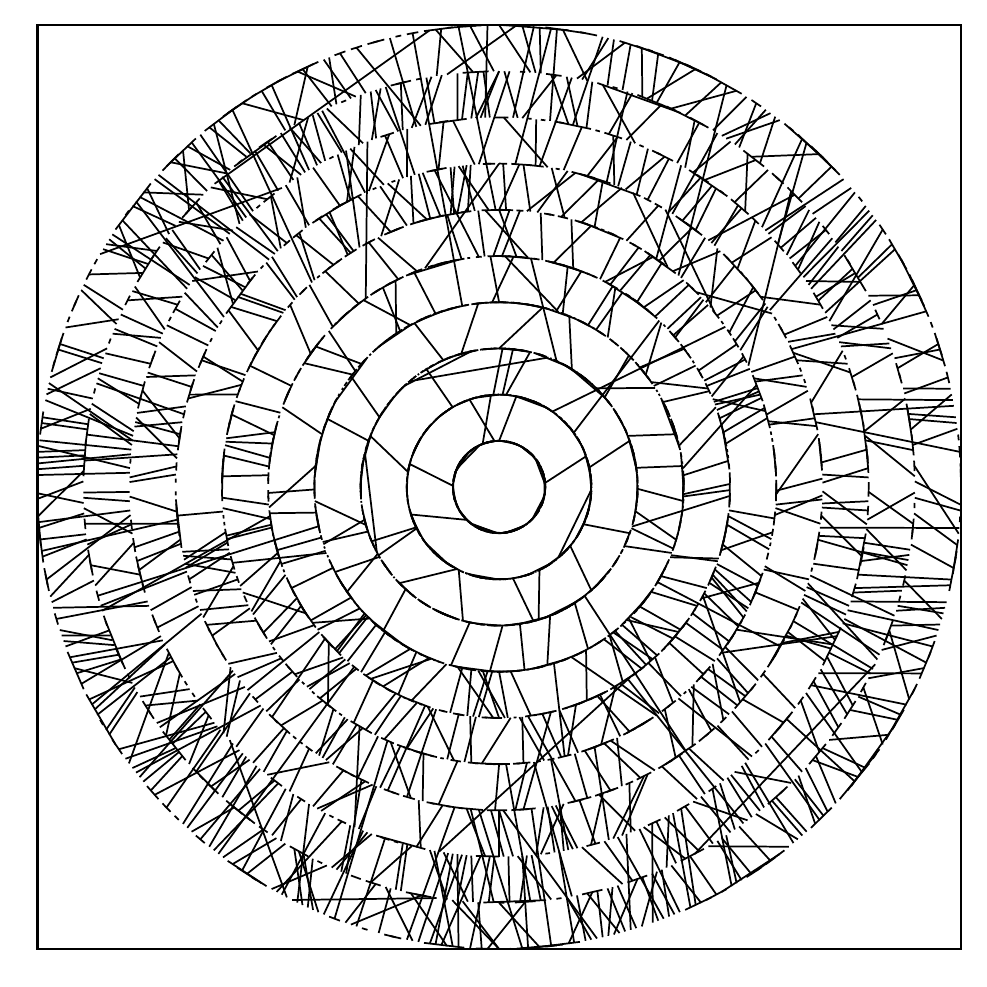}
			\includegraphics[width=0.28\textwidth]{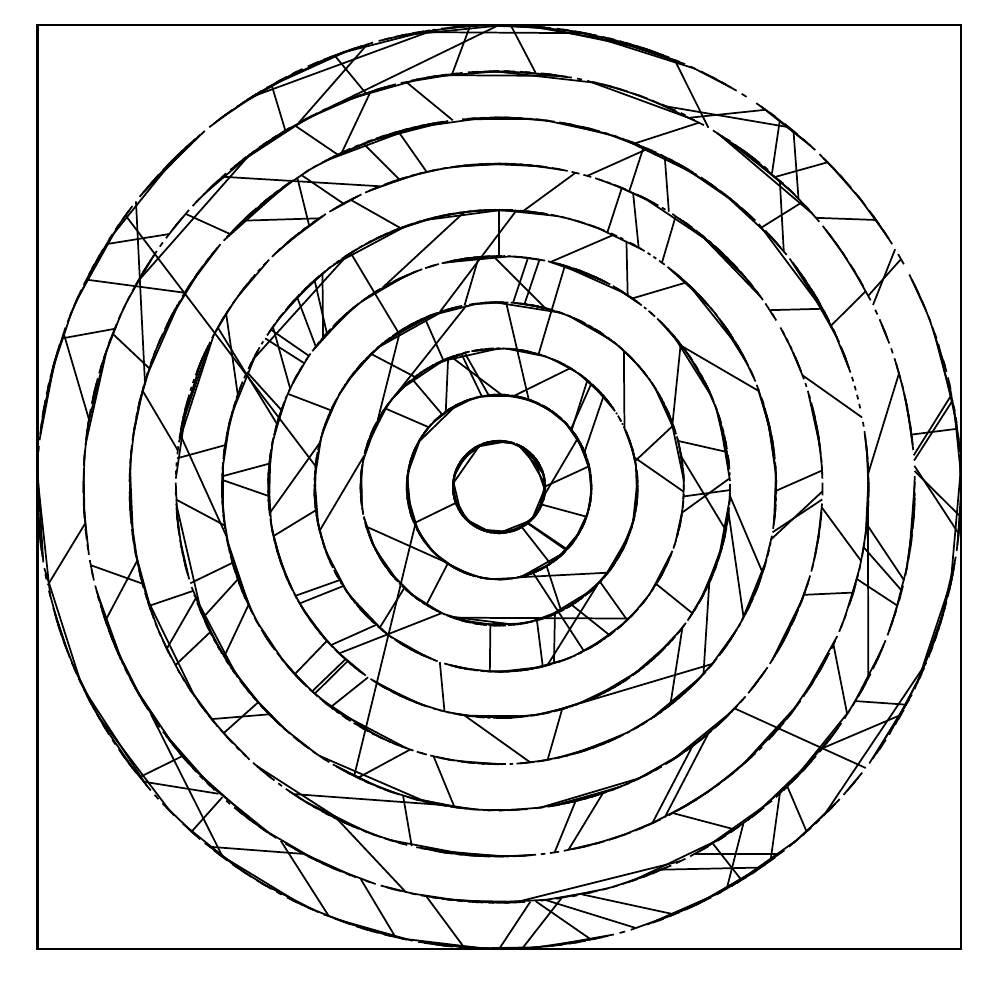}
			\includegraphics[width=0.28\textwidth]{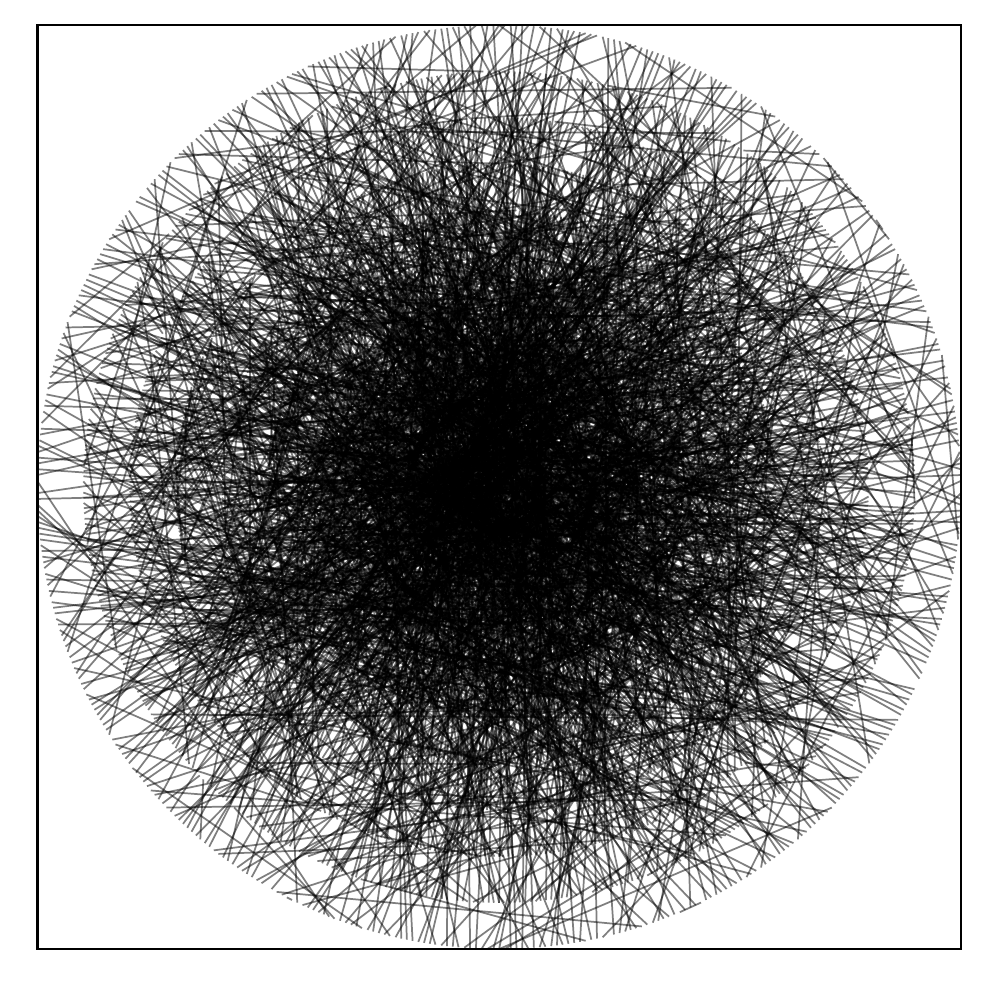}
			\caption{Matchings of $5000$ points. 
			\texttt{Left column:} our method with half-plane ranges. \texttt{Middle column:} our method with disk ranges. \texttt{Right column:} random sampling. We emphasize that each of these figures contain exactly $2500$ edges, which correspond to the matching of $5000$ points. 
			}
			\label{fig:matchings}
	\end{figure}

\bigskip

\noindent \textbf{{\Large 3.}  \textsc{Approximations}.} An immediate consequence of our algorithms is an efficient construction algorithm of $\eps$-approximations of sub-quandratic size. Given a finite set system $(X,\S)$ and a parameter $\eps \in (0,1) $, a set $A \subseteq X$ is an \emph{$\eps$-approximation} of $(X,\S)$ if the following holds for all $S \in \S$:
		\[
		\abs{
			\frac{|S|}{|X|} - \frac{|A \cap S|}{|A|}
		} \leq \eps .
		\]		
		Furthermore,   let $\eps(A,X,\S)$ denote the smallest $\eps$ for which $A$ is an $\eps$-approximation of $(X,\S)$:
	\[
		\eps(A,X,\S) = \max\limits_{S\in\S} \abs{
			\frac{|S|}{|X|} - \frac{|A \cap S|}{|A|}
		}.
	\]

	\newpage
	\noindent
 The iterative application of \nameref{thm:main} implies the following   on $\eps$-approximations. 

\begin{corollary} \label{cor:epsapproximations} 
Let $\eps \in (0,1)$, $(X,\S)$ be a set system and $c,d$ be constants such that $\pi^*_\S(k) \leq c k^d$. Then there is a randomized algorithm which returns a set $A \subset X$ of size 
     \begin{align*}
     O\round{ \max\curly{
     	\round{\frac{c^{1/d}\ln m}{\eps^2} }^{\frac{d}{d+1}},
     	\frac{\sqrt{\ln n} \ln m}{\eps}
     	}
     }
     \end{align*}
     with     	$\EE[\eps(A,X,\S)] \leq \eps$,
    and with an expected
    \[
    	O\round{ mn^{1/d} \ln(mn)  \log^2 n + n^{2 +1/d}\ln n}
    \]
    calls to the membership Oracle of $(X,\S)$.
\end{corollary}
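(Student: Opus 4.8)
\noindent
The plan is to invoke \nameref{thm:main} inside the classical halving reduction from low-discrepancy colorings to $\eps$-approximations. Set $X_0 := X$, $n_0 := n$, and for $j \ge 0$ apply \nameref{thm:main} to the restricted system $\round{X_j,\ \S|_{X_j} \cup \curly{X_j}}$; restricting the ranges to $X_j$ only shrinks the dual shatter function, so the hypothesis $\pi^*(k) \le c k^d$ is inherited, while inserting $X_j$ itself raises $m$ by $1$, which is absorbed throughout. Let $\chi_j$ be the resulting coloring, write $D_j := \disc_{\S \cup \curly{X_j}}(\chi_j)$, let $X_{j+1}$ be a largest color class, and put $n_{j+1} := |X_{j+1}| \ge n_j/2$. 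Since $X_j$ is now a range, $\abs{\chi_j(X_j)} \le D_j$, and a one-line computation gives, for every $S \in \S$,
\[
\abs{ \frac{|S \cap X_{j+1}|}{n_{j+1}} - \frac{|S \cap X_j|}{n_j} }
\ \le\ \frac{\abs{\chi_j(S)}}{2 n_{j+1}} + \frac{|S \cap X_j| \cdot \abs{\chi_j(X_j)}}{2 n_j n_{j+1}}
\ \le\ \frac{D_j}{n_{j+1}},
\]
together with $n_{j+1} \le \frac12 (n_j + D_j)$. Stopping at the first round $t$ with $|X_t| \le N$ (a threshold fixed below) and returning $A := X_t$, we get $|A| = \Theta(N)$ and, by the triangle inequality, $\eps(A, X, \S) \le \sum_{j=0}^{t-1} D_j / n_{j+1}$.

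\noindent
The next step is to choose $N$. Plugging the bound of \nameref{thm:main} into $\EE[D_j \mid X_j]$ (and using $\sqrt{u+v} \le \sqrt u + \sqrt v$), while $n_j \ge N$ one has $\EE[D_j \mid X_j] = O\!\round{ \sqrt{ c^{1/d} n_j^{1-1/d} \ln m } + \ln m \sqrt{\ln n} }$, which is at most a small multiple of $\eps\, n_j$ once $N$ is a sufficiently large constant times
\[
\max\curly{ \round{\tfrac{c^{1/d} \ln m}{\eps^2}}^{\frac{d}{d+1}},\ \ \frac{\sqrt{\ln n}\,\ln m}{\eps} }.
\]
In that regime the ground set essentially halves at every step, so $n_j = \Theta(n/2^j)$, the number of rounds is $t = O(\log(n/N))$, and the summands $D_j / n_{j+1}$ increase geometrically in $j$, so the sum is dominated by its last term, which is $O(\eps)$ by the choice of $N$ --- and this $N$ is exactly the size claimed for $A$. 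To get the stated bound on $\EE[\eps(A,X,\S)]$, I would take expectations in $\eps(A,X,\S) \le \sum_{j < t} D_j / n_{j+1}$ term by term: conditioned on $X_j$, \nameref{thm:main} controls $\EE[D_j \mid X_j]$ in terms of $n_j$; since the discrepancy bound divided by the set size is decreasing and $n_j \ge n/2^j$ holds deterministically (a largest color class keeps at least half the points), the conditional expectations stay on the geometric track, and the negligible probability that a round overshoots it is absorbed using $D_j / n_{j+1} \le 2$.

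\noindent
For the running time I would sum the membership-oracle cost of \nameref{thm:main} over the $t = O(\log n)$ rounds. On the $n_j$-element set that cost is $O\!\round{ n_j^{2+1/d} \ln n_j + m\, n_j^{1/d} \ln(m n_j) \log n_j }$ (the constant $c^{1/d}$ absorbed), and since $n_j = \Theta(n/2^j)$ the first summand telescopes to $O(n^{2+1/d} \ln n)$ and the second to $O(m\, n^{1/d} \ln(mn) \log n)$. The remaining logarithmic gap to the claimed bound is the price of rigour: \nameref{thm:main} pins down each $D_j$ --- hence the near-halving rate underpinning both telescopings --- only in expectation, so one boosts each round by running the coloring a logarithmic number of times and keeping the lowest-discrepancy outcome; a union bound over the $O(\log n)$ rounds then makes all the $D_j$ simultaneously within a constant factor of their expectations, and folding in these repetitions yields the stated $O\!\round{ m\, n^{1/d} \ln(mn) \log^2 n + n^{2+1/d} \ln n }$ expected calls.

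\noindent
The one genuinely delicate point --- and where I expect to spend the most effort --- is exactly this coupling between the random sizes $n_j$ and the random discrepancies $D_j$: both telescopings (for $\eps(A,X,\S)$ and for the running time) rest on $n_j$ tracking $n/2^j$, whereas \nameref{thm:main} guarantees the shrinkage rate only in an averaged sense, so the low-probability rounds producing a bad coloring must be handled, either by the boosting above or by a restart. Everything else --- the per-round identity, the two geometric summations, and solving for $N$ --- is routine bookkeeping on top of \nameref{thm:main}.
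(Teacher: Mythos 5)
Your skeleton is the same as the paper's: iterate the coloring of \Cref{thm:main} (via \Cref{thm:main-disc-result}), pass to the larger color class, stop at a threshold $N$ solved from the per-round discrepancy bound, and control $\EE[\eps(A,X,\S)]$ by summing expected per-round discrepancies divided by set sizes. The choice of $N$, the composition of approximations, and both geometric summations match the paper's computation in \Cref{sec:main-apx-proof}.

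The one place you go astray is precisely the point you flag as delicate: the coupling between the random sizes $n_j$ and the random discrepancies $D_j$. That coupling does not exist in the paper's argument, for a reason available to you for free: the coloring produced by \Cref{thm:main} assigns opposite colors to the two endpoints of every edge of a perfect matching, so its color classes differ in size by at most one; equivalently, \Cref{lemma:disc-to-apx} is applied to a set of exactly $\lceil n_j/2\rceil$ elements of the larger class. Hence $|A_{i+1}| = \lceil |A_i|/2\rceil$ deterministically, the sizes $\lceil n/2^i\rceil$ are not random, and the approximation guarantee follows by plain linearity of expectation with no high-probability event to control. Your substitute --- boosting each round $O(\log n)$ times and keeping the lowest-discrepancy coloring --- is therefore unnecessary, and it is also what breaks your oracle-call count: repeating each round multiplies the per-round cost $O\round{n_j^{2+1/d}\ln n_j + m n_j^{1/d}\ln(mn_j)\log n_j}$ by $\log n$, so the first term sums to $O\round{n^{2+1/d}\ln^2 n}$ rather than the claimed $O\round{n^{2+1/d}\ln n}$. (In the paper, the $\log^2 n$ on the $m$-term comes from the internal $\log n$-depth recursion of the matching construction compounded with the $O(\log n)$ halving rounds, not from any boosting.) Dropping the boosting and taking exactly $\lceil n_j/2\rceil$ elements per round --- which also renders the insertion of $X_j$ as an extra range essentially vacuous, since $|\chi_j(X_j)|\le 1$ by construction --- repairs both issues and recovers the paper's proof.
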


\begin{remark*} Previous-best algorithms for constructing $o(1/\eps^2)$-sized $\eps$-approximations of  set systems with polynomially bounded dual shatter functions were based on the low-discrepancy coloring approach of \cite{MWW93}, with time complexity $O(mn^3)$.
\end{remark*}

\noindent
For set systems where uniform sampling yields small-sized $\eps$-approximations, for instance set systems with bounded VC-dimension, the guarantees can be improved.

\newpage

\begin{corollary}\label{cor:vcdim-apx-result-inverted-dualvc}
Let $\eps \in (0,1)$, $(X,\S)$ be a set system with VC-dimension $\dvc\geq 2$ and $c,d$ be constants such that $\pi^*_\S(k) \leq c k^d$. 
    Then
     there is a randomized algorithm that returns a set $A \subset X$ of size at most
     $
     O\round{
     \max\curly{
     	\round{\frac{c^{1/d} \cdot \dvc \ln \round{\frac{\dvc}{\eps}}}{\eps^2} }^{\frac{d}{d+1}},
     	\frac{\dvc \ln^{3/2} \round{\frac{\dvc}{\eps}}}{\eps}
     }}
     $
     with expected approximation guarantee
    $
    	\EE[\eps(A_j,X,\S)] \leq \eps,
    $
    and with an expected
    $
    	O\round{ \dvc \round{ \frac{\dvc}{\eps^2} }^{\dvc+1/d} \ln^3\round{\frac{\dvc}{\eps}}}
    $
    calls to the membership Oracle of $(X,\S)$.
\end{corollary}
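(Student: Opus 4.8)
The plan is to derive Corollary~\ref{cor:vcdim-apx-result-inverted-dualvc} by combining the $\eps$-approximation construction of Corollary~\ref{cor:epsapproximations} with the classical fact that, for set systems of bounded VC-dimension, uniform random sampling already produces small $\eps$-approximations. Concretely, I would first observe that an $\eps$-approximation of an $\eps'$-approximation is an $(\eps+\eps')$-approximation: if $A$ is an $\eps'$-approximation of $(X,\S)$ and $B\subseteq A$ is an $\eps$-approximation of the restricted system $(A,\S_{|A})$, then for every $S\in\S$,
\[
  \abs{\tfrac{|S|}{|X|}-\tfrac{|B\cap S|}{|B|}}
  \le \abs{\tfrac{|S|}{|X|}-\tfrac{|A\cap S|}{|A|}}
   + \abs{\tfrac{|A\cap S|}{|A|}-\tfrac{|B\cap S|}{|B|}}
  \le \eps'+\eps.
\]
So the strategy is a two-stage pipeline: first shrink $X$ to a uniform random sample $A_0$ that is a $(\eps/2)$-approximation with good probability, using the VC-dimension sampling bound $|A_0|=O\big(\tfrac{\dvc}{\eps^2}\ln\tfrac{\dvc}{\eps}\big)$; then run the algorithm of Corollary~\ref{cor:epsapproximations} on $(A_0,\S_{|A_0})$ with error parameter $\eps/2$ to obtain the final set $A=A_j$.

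Next I would track the size of the output. Plugging $n_0:=|A_0|=O\big(\tfrac{\dvc}{\eps^2}\ln\tfrac{\dvc}{\eps}\big)$ and $m_0:=|\S_{|A_0}|\le (en_0/\dvc)^{\dvc}$ (so $\ln m_0 = O(\dvc\ln\tfrac{\dvc}{\eps})$) into the size bound of Corollary~\ref{cor:epsapproximations} gives
\[
  |A| = O\!\left(\max\curly{
    \Big(\tfrac{c^{1/d}\ln m_0}{(\eps/2)^2}\Big)^{\frac{d}{d+1}},\
    \tfrac{\sqrt{\ln n_0}\,\ln m_0}{\eps/2}}\right)
  = O\!\left(\max\curly{
    \Big(\tfrac{c^{1/d}\dvc\ln(\dvc/\eps)}{\eps^2}\Big)^{\frac{d}{d+1}},\
    \tfrac{\dvc\ln^{3/2}(\dvc/\eps)}{\eps}}\right),
\]
where the $\sqrt{\ln n_0}=O(\sqrt{\ln(\dvc/\eps)})$ factor and the $\ln m_0=O(\dvc\ln(\dvc/\eps))$ factor together produce the $\dvc\ln^{3/2}(\dvc/\eps)/\eps$ term. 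For the oracle-call count, I substitute $n_0,m_0$ into the running-time bound $O(m_0 n_0^{1/d}\ln(m_0 n_0)\log^2 n_0 + n_0^{2+1/d}\ln n_0)$ of Corollary~\ref{cor:epsapproximations}; since $m_0\le (en_0/\dvc)^{\dvc}$ dominates and $n_0^{1/d}=O\big((\dvc/\eps^2)^{1/d}\big)$ up to polylog factors, the $m_0 n_0^{1/d}$ term gives $O\big(\dvc\,(\dvc/\eps^2)^{\dvc+1/d}\ln^3(\dvc/\eps)\big)$, absorbing all logarithmic factors into the stated $\ln^3(\dvc/\eps)$; the additive $n_0^{2+1/d}$ term is of lower order in $\dvc$ and is dominated.

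The final accounting item is the expectation guarantee $\EE[\eps(A,X,\S)]\le\eps$. Here I would be slightly careful: the uniform sample $A_0$ is a $(\eps/2)$-approximation only with high probability, not always, while Corollary~\ref{cor:epsapproximations} gives an expectation bound on the inner error. I would handle this by a standard conditioning argument — on the (high-probability) event that $|A_0|$ has the stated size and $\eps(A_0,X,\S)\le\eps/2$, linearity of expectation together with the inner guarantee $\EE[\eps(A,A_0,\S_{|A_0})]\le\eps/2$ (after a suitable constant rescaling) yields $\EE[\eps(A,X,\S)\mid\text{good event}]\le\eps$; on the complementary low-probability event, the trivial bound $\eps(A,X,\S)\le 1$ contributes a negligible amount after adjusting constants, or alternatively one repeats the sampling until success. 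I expect the main obstacle to be precisely this bookkeeping: making sure that rescaling $\eps$ by constants and conditioning on the good event does not disturb either the claimed size bound or the claimed number of oracle calls, and verifying that $\sqrt{\ln n_0}$ and $\ln m_0$ compose to give exactly the $\dvc\ln^{3/2}(\dvc/\eps)$ shape of the second term rather than something larger. Everything else is direct substitution of the VC-dimension $\eps$-net/approximation sampling bound into Corollary~\ref{cor:epsapproximations}.
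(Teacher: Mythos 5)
Your overall pipeline is the same as the paper's: uniformly pre-sample $A_0$, run the approximation algorithm of Corollary~\ref{cor:epsapproximations} on $(A_0,\S|_{A_0})$ with parameter $\eps/2$, and compose the two errors via \Cref{fact:apx-of-apx}. The one substantive divergence is which uniform-sampling guarantee you invoke, and it matters for the stated oracle-call bound. You use the high-probability $\eps$-approximation theorem, which forces $|A_0| = O\round{\frac{\dvc}{\eps^2}\ln\frac{\dvc}{\eps}}$ and then requires the conditioning/failure-event bookkeeping you flag. The paper instead uses the \emph{in-expectation} bound (\Cref{thm:uniform-apx-bound-in-expectation}: $\EE[\eps(A_0,X,\S)] \leq \sqrt{\capx\dvc/|A_0|}$), which needs only $|A_0| = \frac{4\capx\dvc}{\eps^2}$ with no logarithmic factor and no conditioning at all: the expected errors simply add by linearity, since \Cref{fact:apx-of-apx} holds pointwise.

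This is not merely cosmetic: with your $n_0 = O\round{\frac{\dvc}{\eps^2}\ln\frac{\dvc}{\eps}}$, Sauer--Shelah gives $m_0 \leq \round{e n_0/\dvc}^{\dvc} = O\round{\round{\frac{\dvc}{\eps^2}\ln\frac{\dvc}{\eps}}^{\dvc}}$, so the dominant term $m_0\, n_0^{1/d}\ln(m_0 n_0)\log^2 n_0$ becomes
\[
O\round{ \dvc \round{ \tfrac{\dvc}{\eps^2} }^{\dvc+1/d} \ln^{\dvc+1/d+3}\round{\tfrac{\dvc}{\eps}}},
\]
i.e., the $\ln(\dvc/\eps)$ hidden inside $n_0$ is raised to the power $\dvc+1/d$ and cannot be ``absorbed into the stated $\ln^3$'' as you claim. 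Your size bound survives (there the sample size enters only through $\ln m_0$ and $\sqrt{\ln n_0}$), but the oracle-call count you derive is strictly weaker than the corollary's $O\round{ \dvc \round{ \frac{\dvc}{\eps^2} }^{\dvc+1/d} \ln^3\round{\frac{\dvc}{\eps}}}$. The fix is exactly the paper's: replace the high-probability sampling theorem by the expectation version so that $|A_0| = \Theta(\dvc/\eps^2)$, which simultaneously eliminates the conditioning argument and restores the claimed running time.
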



The figures above present a visual comparison of the approximations created by our method (\texttt{Top row}) and random sampling (\texttt{Bottom row}). Both methods are applied to the same  set system on $10,000$ points with ranges induced by disks.

	\begin{figure}[ht!]
			\centering
			 \includegraphics[width=0.24\textwidth]{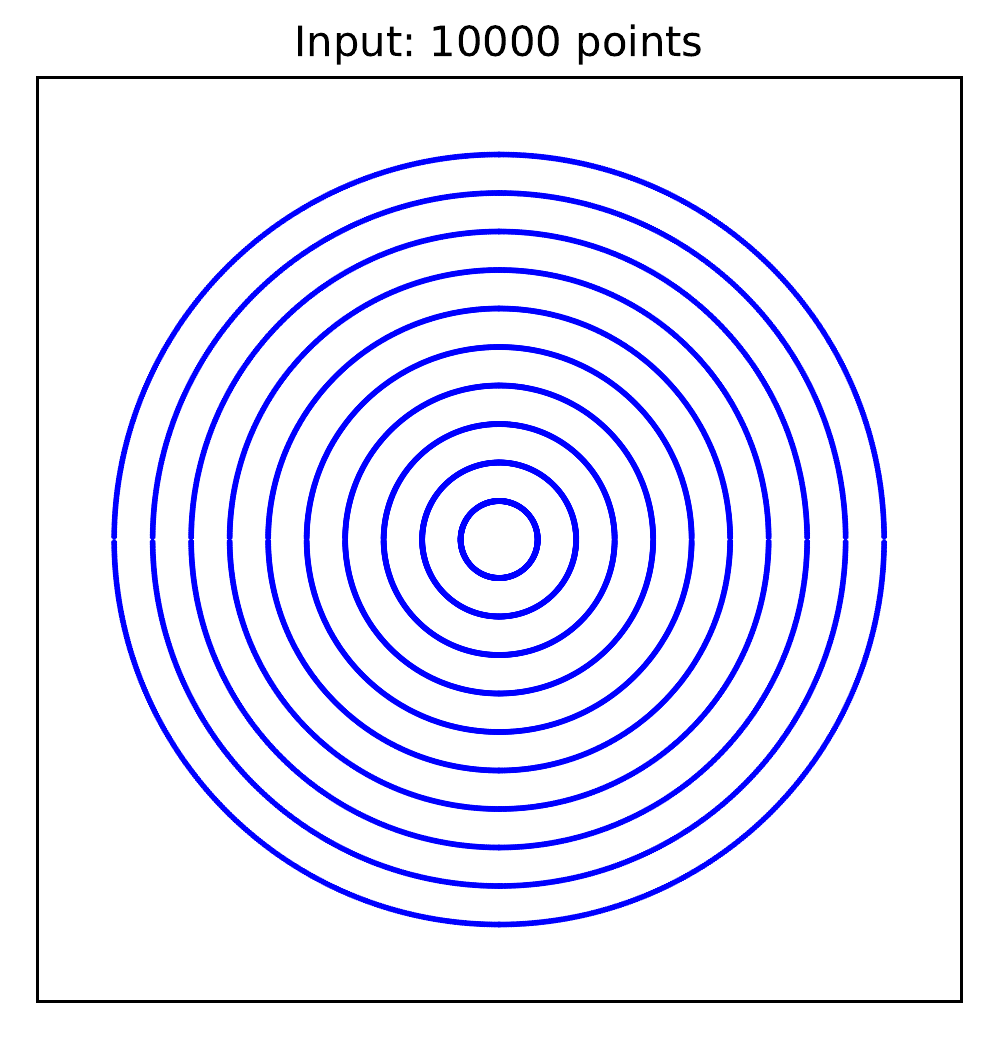}
			 \includegraphics[width=0.24\textwidth]{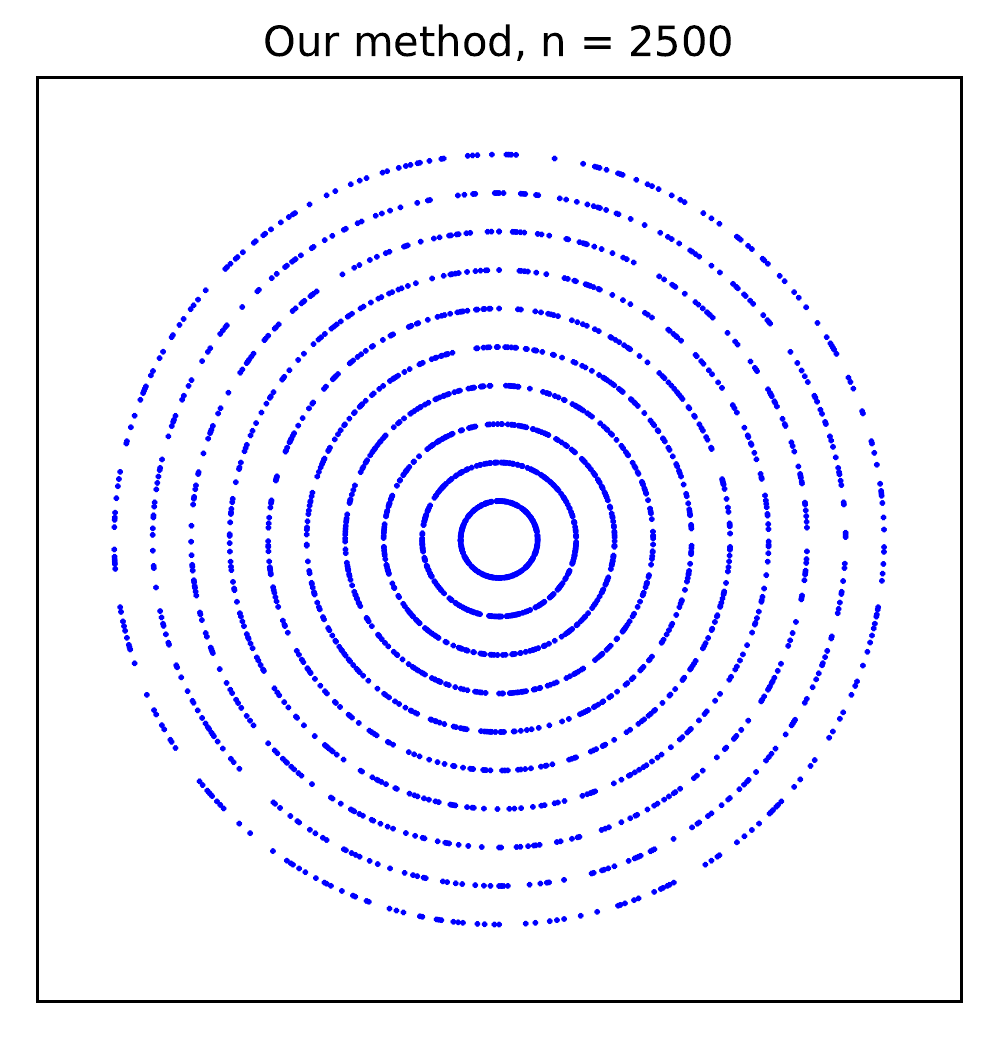}
			 \includegraphics[width=0.24\textwidth]{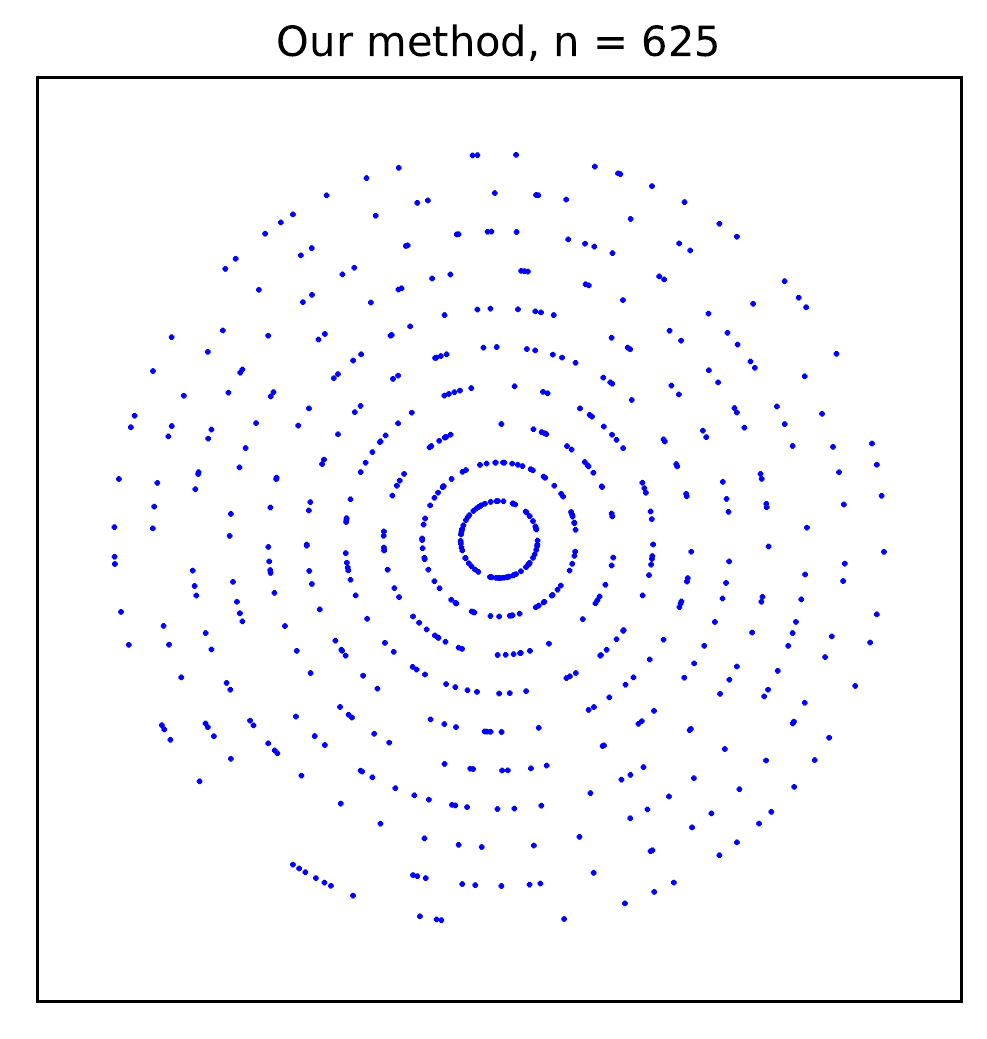}
			 \includegraphics[width=0.24\textwidth]{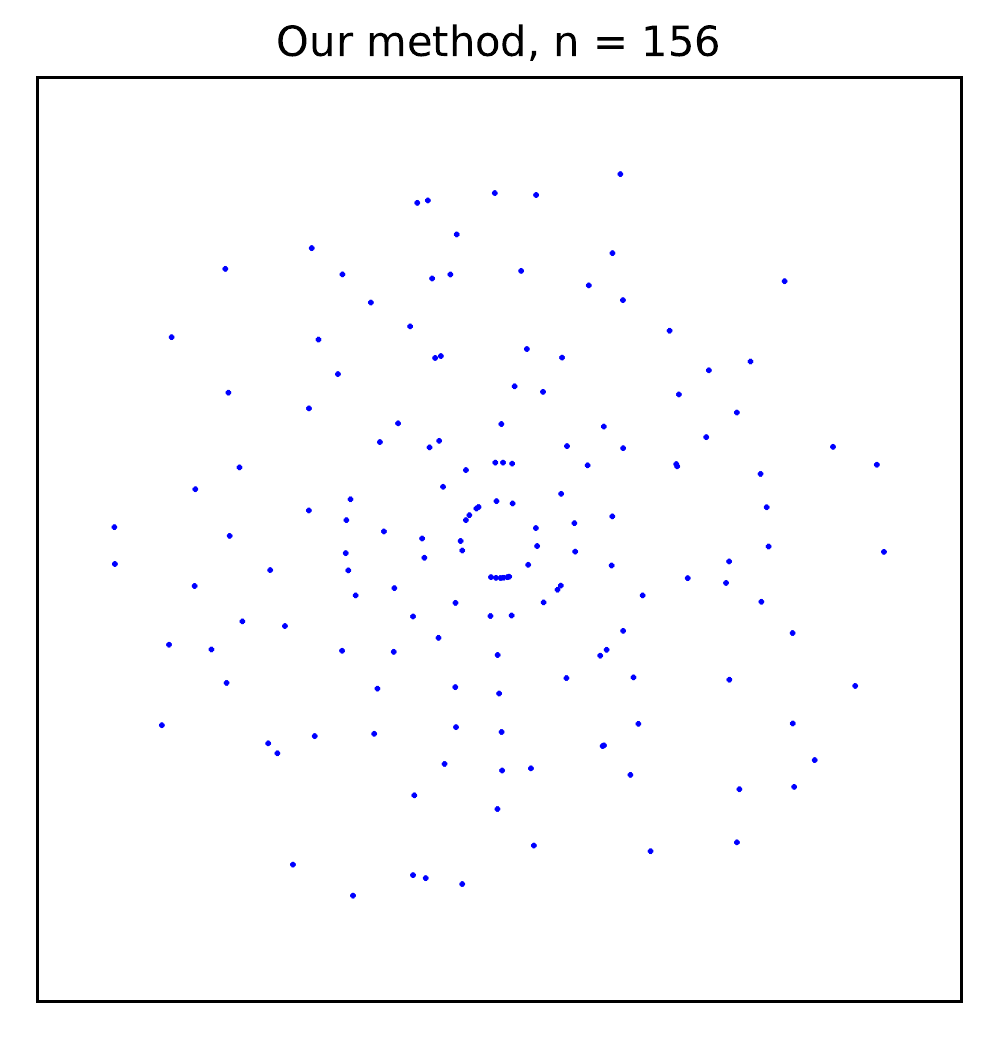}\\
			\includegraphics[width=0.24\textwidth]{CirclesOurs10000to10000.pdf}
			 \includegraphics[width=0.24\textwidth]{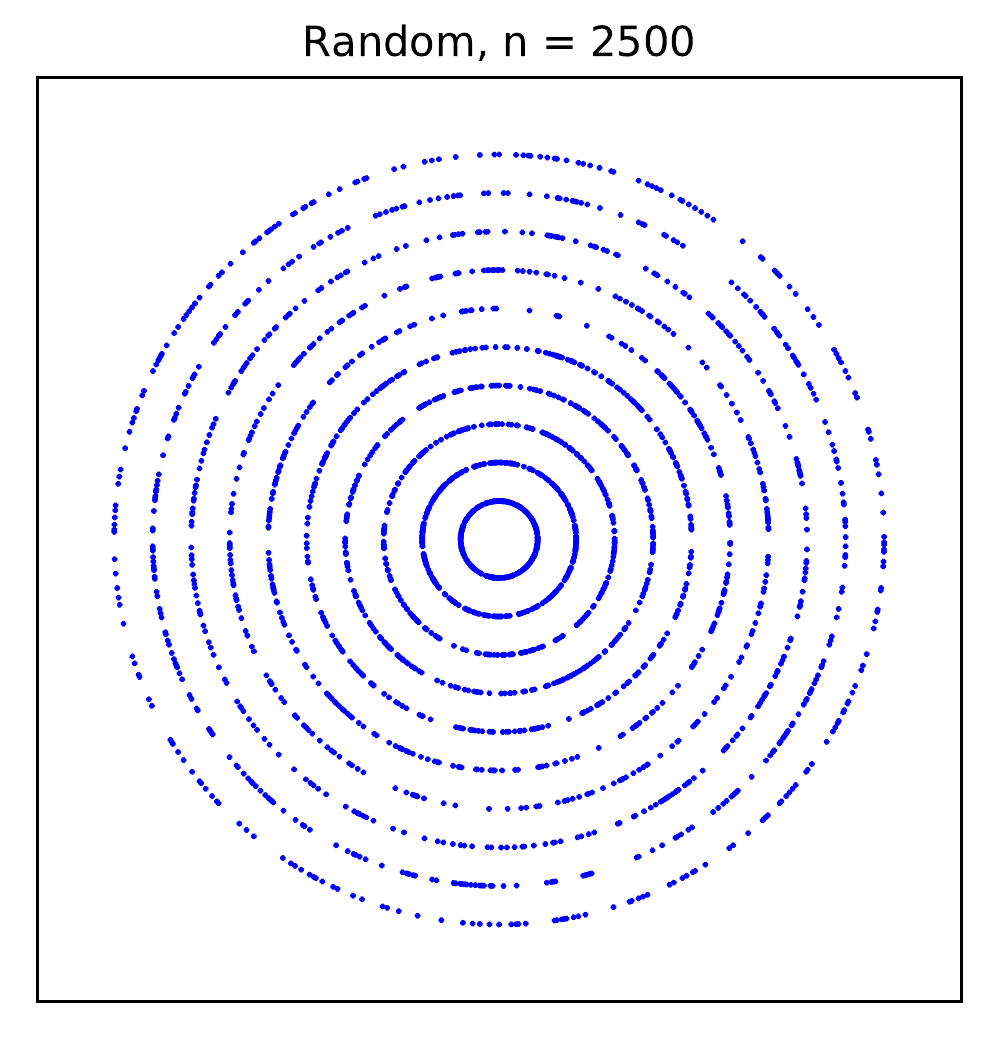}
			 \includegraphics[width=0.24\textwidth]{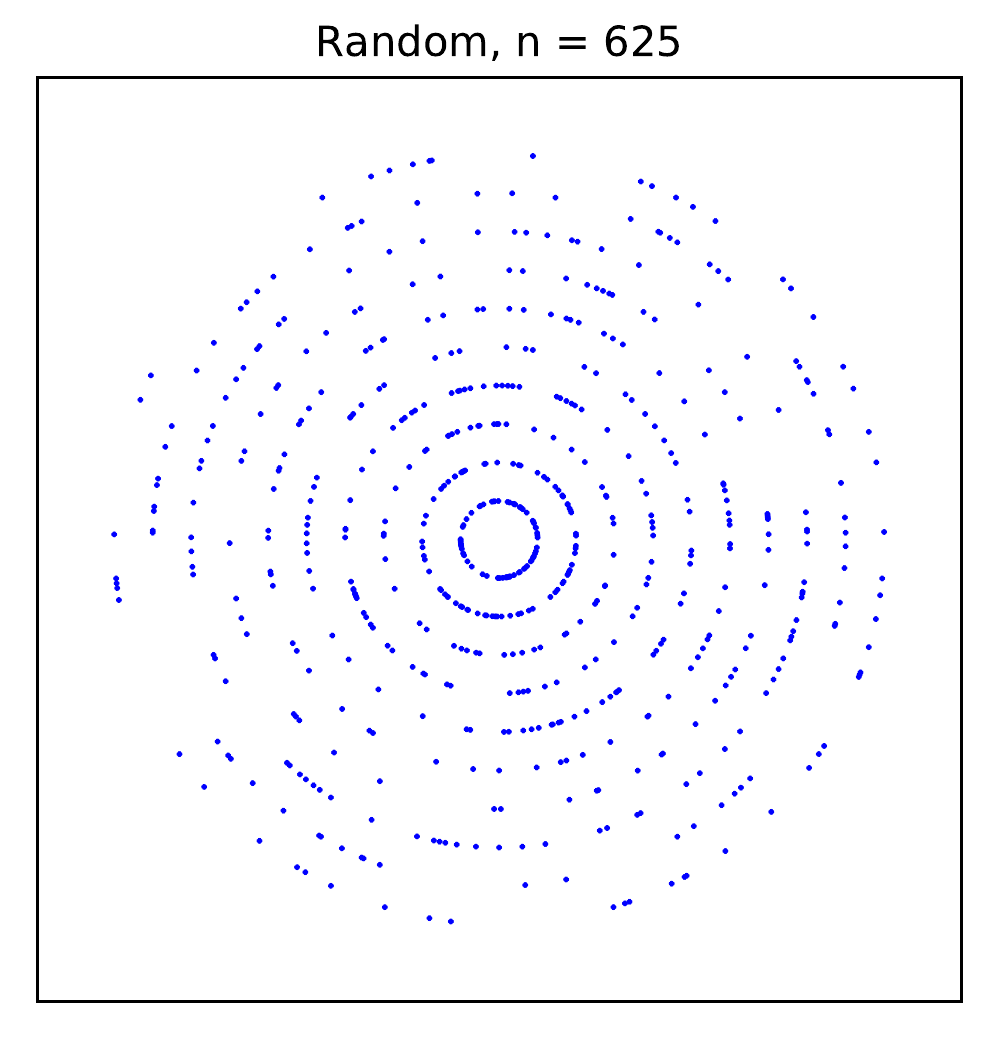}
			 \includegraphics[width=0.24\textwidth]{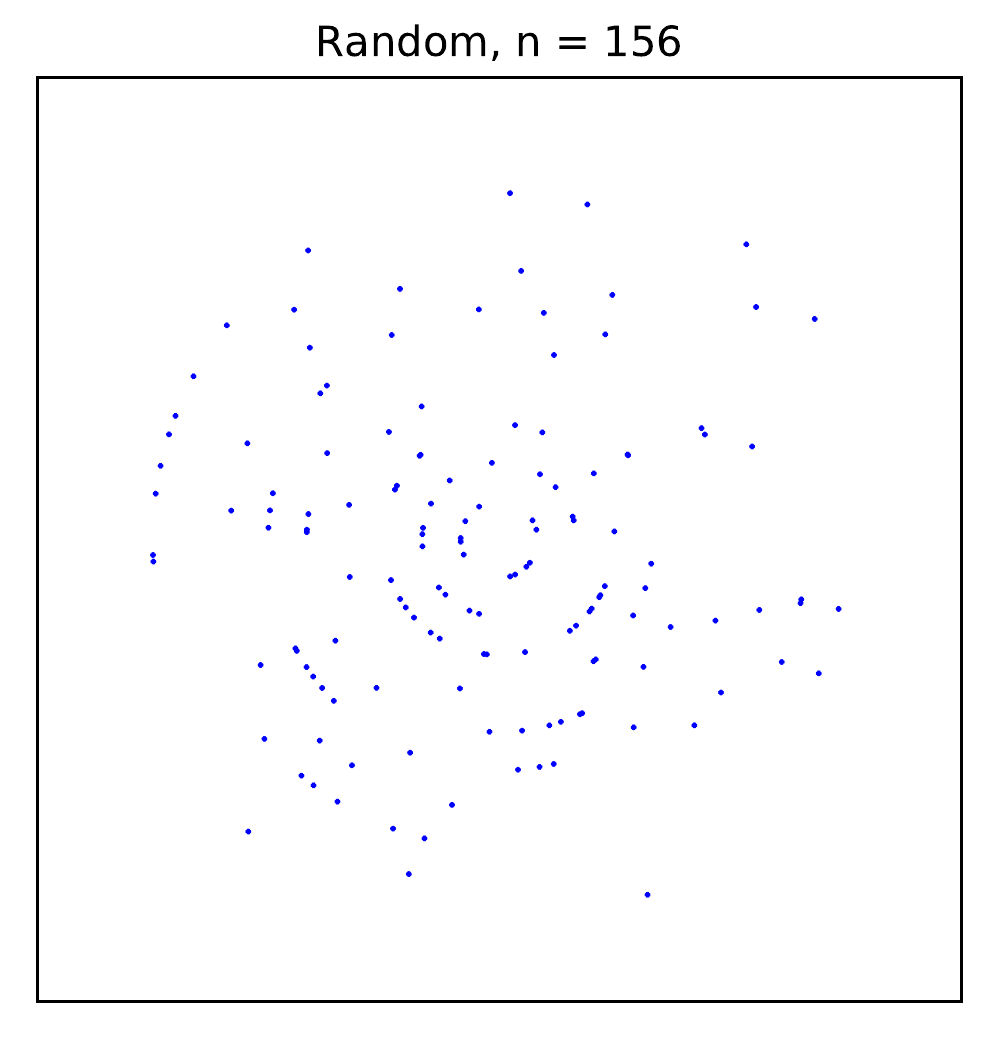}
			 \label{fig:approximations}
 	\end{figure}

\bigskip

\noindent \textbf{{\Large 4.}   \textsc{An even faster algorithm for discrepancy}.} We also propose and analyse an accelerated version of our method, where instead of maintaining the weights on all the $O(n^2)$ edges induced by $X$, we use further random sampling.
The following theorem describes the trade-off between the expected running time and discrepancy guarantee as function of a pre-sampling parameter $\alpha$.

\begin{theorem}\label{thm:presampled-disc-guarantee}
	Let $(X,\S)$ be a set system and  $d$ be a constant such that $\pi^*_{\S}(k) =O\round{ k^d}$. For any  $0 < \alpha \leq 1 $, there is a randomized algorithm which constructs a coloring $\chi$ of $X$  with expected discrepancy 
   \[
   O\round{\sqrt{  n^{1-\alpha/d}\ln m  + \ln^2 m \log n}},
   \]
   with at most
		\[
		O\round{
		  n^{1+\alpha(1+1/d)} \ln^2 n + mn^{\alpha/d}\ln (mn) \cdot \log n 
		}
		\]
   expected calls to the membership Oracle of $(X, \S)$.
\end{theorem}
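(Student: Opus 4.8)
The plan is to modify \nameref{discalgoDS} by inserting a pre-sampling step: instead of maintaining weights on all $\binom{X}{2}$ edges, we first draw a random subset $E_0 \subseteq \binom{X}{2}$ of size roughly $n^{1+\alpha}$ and run the primal-dual matching-construction procedure of \Cref{thm:main-matching-result} using only edges from $E_0$. The point is that a uniformly random edge set of this size, with high probability, still contains a perfect matching of crossing number $O(n^{1-\alpha/d})$ with respect to $\S$ — this is the sparsification analogue of the \nameref{assumption} bound $a = \Theta(c^{1/d})$, $\gamma = 1 - 1/d$. First I would make this sparsification claim precise: using the $\eps$-net / relative-approximation machinery for set systems of dual shatter function $O(k^d)$, show that for a random $E_0$ of size $\aO(n^{1+\alpha})$, every range $S$ crosses at most $O(|E_0| / n^{1-\alpha/d})$ edges of $E_0$ after the reweighing terminates, i.e. the restricted system $(E_0, \{ \delta_S : S \in \S\})$ still satisfies \nameref{assumption} but now with exponent $\gamma' = 1 - \alpha/d$ in terms of $|E_0|$, translating back to $n^{1-\alpha/d}$ in terms of $n$. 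This is the step I expect to be the main obstacle, since it requires controlling the crossing numbers over the entire reweighing trajectory, not just for the initial uniform weights, and the sampling parameter $\alpha$ interacts non-trivially with the dual shatter exponent $d$.

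Given the sparsification, the rest follows the template already developed for \nameref{thm:main}. I would run the iterated-halving outer loop: in each round, apply \Cref{thm:main-matching-result} to the current point set restricted to the sampled edge set $E_0$ to obtain a perfect matching $M$ of crossing number $O(n^{1-\alpha/d} + \ln(mn)\log n)$, then color each edge of $M$ by an independent fair coin (one endpoint $+1$, the other $-1$), fixing those colors; this halves $|X|$. Summing the per-round discrepancy contributions across the $O(\log n)$ rounds with the standard geometric-series argument gives expected discrepancy $O(\sqrt{n^{1-\alpha/d}\ln m + \ln^2 m \log n})$ — the $\ln m$ factor comes from a union bound / Chernoff over the $m$ ranges applied to the $\pm 1$ sums along the matching edges of each round, and the additive $\ln^2 m \log n$ term collects the lower-order crossing-number contributions squared.

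For the running time, the key saving is that each membership-Oracle-heavy substep of \Cref{thm:main-matching-result} now operates on an edge set of size $\aO(n^{1+\alpha})$ rather than $\Theta(n^2)$. Plugging $|E_0| = \aO(n^{1+\alpha})$ and $\gamma' = 1-\alpha/d$ into the oracle-call bound of \Cref{thm:main-matching-result}, the "edge-side" cost becomes $\aO(|E_0| \cdot |E_0|^{1 - \gamma'} / a) = \aO(n^{1+\alpha} \cdot n^{(1+\alpha)\alpha/d})$, which after absorbing polylogs is $O(n^{1 + \alpha(1 + 1/d)}\ln^2 n)$ up to the accounting of the $O(\log n)$ outer rounds, while the "range-side" cost is $O(m \cdot n^{\alpha/d} \ln(mn) \log n)$ as in the statement. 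I would finish by verifying that the failure probability of the sparsification event (across all $O(\log n)$ rounds) is $o(1)$, so that it contributes only a negligible additive term to the expected discrepancy, and that setting $\alpha = 1$ recovers \nameref{thm:main} up to constants, which serves as a sanity check on the bookkeeping.
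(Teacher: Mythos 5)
Your architecture is the same as the paper's: presample an edge set of size $\aO(n^{1+\alpha})$, run the primal--dual reweighing restricted to that sample, iterate with halving, and convert the resulting low-crossing matching into a coloring via Chernoff plus a union bound (\Cref{lemma:matching-to-disc}). The discrepancy and oracle-call bookkeeping you sketch also matches the paper's. However, there is a genuine gap exactly where you predicted one: the sparsification claim --- that a random edge set of this density still contains a size-$\Theta(n)$ matching of crossing number $O(n^{1-\alpha/d}+\ln m)$ --- is the entire content of \Cref{lemma:presampling-guarantee}, and you assert it rather than prove it. The route you propose ($\eps$-net / relative-approximation machinery for the dual system on $\binom{X}{2}$) does not obviously close it. Uniform sampling guarantees tell you that the sample hits every \emph{fixed} large subset of edges, but what is needed is that the sample intersects the set of ``light'' edges at \emph{every step of an adaptive reweighing process}, where the light-edge set at step $i$ depends on all previous random choices; a union bound over the exponentially many trajectories is unavailable. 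The paper handles this with a dedicated algorithm (\nameref{RCWRalgo}) that restricts to the $|X_i|^{2-\alpha}$ lightest edges, a Haussler packing-lemma argument (\Cref{lemma:robust-short-edge-presample}) showing that many light edges always exist, and a careful Bayes/halting-time computation bounding $\PP[T\le n/4]$; the matching lower bound in \Cref{lemma:presampling-guarantee} shows the sampling density cannot be reduced, so no cruder argument will do.

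Two smaller points. First, even granting the sparsification lemma, you cannot invoke \Cref{thm:main-matching-result} as a black box on the sampled edge set: its proof bounds $\EE[\min_{e\in\tilde E_t}\sum_i \Inc(e,S_i)]$ using the fact that the \emph{surviving} edges $\tilde E_t$ form the complete graph on the uncovered points, which fails after presampling. The paper instead argues that at least $\lceil n/8-2\rceil$ edges of the guaranteed matching $M_0$ survive the zeroing and applies pigeonhole, which is why the inner loop is shortened to $n/16$ iterations per round --- a modification your halving scheme omits. Second, your intermediate claim that ``every range crosses at most $O(|E_0|/n^{1-\alpha/d})$ edges of $E_0$'' is not the statement you need (nor dimensionally the right quantity); the relevant object is a low-crossing \emph{matching} inside $E_0$, not a bound on crossings over all of $E_0$.
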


\noindent
The randomized algorithm that achieves the guarantees of \Cref{thm:presampled-disc-guarantee} is presented in \Cref{sec:Presampling-proof} (\nameref{sampleddiscalgo}). It is essentially the algorithm  \nameref{discalgoDS} run on an initial random sample of edges with a small modification: to incorporate the pre-sampling step in the analysis, we need to recurse slightly more often (after $n/16$ steps instead of $n/4$).
The proof of \Cref{thm:presampled-disc-guarantee} relies on the following theorem on matchings in random edge-sets, which might be of independent interest.

 \begin{restatable}[]{theorem}{presamplingtradeoff}\label{lemma:presampling-guarantee}
	Let $(X,\S)$ be a set system with dual shatter function $\pi^*(k) = O(k^d)$, $ \alpha \in(0,1]$, and $\delta \in (0,1)$. Let $E$ be a uniform sample from $\binom{X}{2}$, where each edge is picked i.i.d. with probability
    \[
        p =  \min\curly{\frac{2\ln n}{n^{1-\alpha}} + \frac{4\ln(2/\delta)}{n^{2-\alpha}},~1}.
    \]
	Then with probability at least $1-\delta$, $E$ contains a matching of size $n/4$ with crossing number
    \[
        O\round{ n^{1- \alpha/d} + \ln |\S|}.
    \]

    Moreover, for any $d \geq 2$, and $n_0 \in \NN$ there is a set system $(X,\S)$ with $|X| = n \geq n_0$ and dual shatter function $\pi_\S^*(k) = O(k^d)$ such that for any $ \alpha \in(0,1]$ and $p(n) = o\round{n^{\alpha-1}}$ if $E$ is a random edge-set obtained by selecting each edge in $\binom{X}{2}$ i.i.d. with probability $p(n)$, then with probability at least $1/2$, every matching in $E$ of size $n/4$ has crossing number $\omega\round{n^{1-\alpha/d}}$ with respect to $\S$.
\end{restatable}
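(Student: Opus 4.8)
\textbf{Proof plan for Theorem~\ref{lemma:presampling-guarantee}.}

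The plan is to prove the two halves separately. For the positive direction, the strategy is a two-step sampling argument mediated by \nameref{assumption}. First I would fix a ``good'' perfect matching $M$ of $X$ with crossing number $O(n^{1-\alpha/d} + \ln m)$ with respect to $\S$ --- or rather, a matching of an appropriate subset of size $\Theta(n)$ --- guaranteed by the dual-shatter bound via the parameters recalled in \nameref{assumption}. The point is to show that a random edge-set $E$ with the stated inclusion probability $p$ retains, with probability $\geq 1-\delta$, ``enough'' of the edges of $M$ to extract a sub-matching of size $n/4$, and that whatever survives cannot have crossing number much larger than that of $M$ itself (since any sub-matching of $M$ crosses each range at most as often as $M$ does). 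Concretely: $M$ has $n/2$ edges; each survives in $E$ independently with probability $p$; by a Chernoff bound, the number of surviving edges is at least $\tfrac{pn}{4} \cdot 2 = \tfrac{pn}{2}\cdot\tfrac{1}{2}$-ish with failure probability at most $\delta$ once $p \geq \tfrac{2\ln n}{n^{1-\alpha}} + \tfrac{4\ln(2/\delta)}{n^{2-\alpha}}$ --- here the two terms of $p$ are exactly calibrated so that $pn/2 = \Omega(\ln n + \ln(1/\delta))$, making the lower-tail Chernoff bound $\exp(-pn/16)$ or so beat $\delta$. So with probability $\geq 1-\delta$ at least, say, $n/4$ edges of $M$ survive, and these $n/4$ edges form a matching in $E$ whose crossing number is at most that of $M$, i.e.\ $O(n^{1-\alpha/d}+\ln m)$. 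The mild subtlety is matching the constant $1/4$: one should start from a matching of a set slightly larger than $n/2$, or tolerate a slightly smaller survival fraction, so that the surviving count comfortably exceeds $n/4$; this is a routine adjustment of constants in the Chernoff exponent and does not affect the asymptotics.

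For the lower bound (tightness), the plan is to exhibit a single explicit set system where sampling below the threshold provably destroys all good matchings. The natural candidate is a one-dimensional-type construction lifted to give dual shatter function exactly $\Theta(k^d)$: take $X$ to be $n$ points and $\S$ a family such that the ``equivalence-class'' structure forces any low-crossing matching to use edges that are ``local'' in a strong sense --- e.g.\ points arranged so that ranges are intervals (for $d=2$) or more generally so that a range can separate the two endpoints of an edge unless those endpoints are among a small prescribed set of $O(n^{1-1/d})$-many ``close'' pairs, or more robustly: arrange that for the crossing number to be $O(n^{1-\alpha/d})$ an edge must join points whose indices differ by $O(n^{\alpha})$ (a ``bandwidth'' constraint). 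Then the number of admissible edges is only $O(n^{1+\alpha})\cdot(\text{something})$, and more to the point, each point has only $O(n^{\alpha})$ admissible partners. A random edge-set with inclusion probability $p(n) = o(n^{\alpha-1})$ then includes, in expectation, $o(1)$ admissible edges incident to a typical vertex; a first-moment / Markov argument shows that a constant fraction of vertices have \emph{no} surviving admissible incident edge, hence any matching of size $n/4$ in $E$ must use a surviving \emph{non}-admissible edge, which by construction is crossed $\omega(n^{1-\alpha/d})$ times by some range. Quantitatively: $\mathbb{E}[\#\text{surviving admissible edges}] = O(n^{1+\alpha})\cdot p(n) = o(n)$, so by Markov with probability $\geq 1/2$ there are $o(n)$ of them, too few to cover $n/4$ disjoint pairs, forcing a bad edge into any size-$n/4$ matching.

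The main obstacle I anticipate is the lower-bound construction: one needs a set system that simultaneously (i) has dual shatter function $O(k^d)$ --- not larger --- so that the claimed regime ``$\pi^*_\S(k)=O(k^d)$'' genuinely applies, (ii) has the property that \emph{every} low-crossing matching is forced to be ``local'' (a lower bound on crossing number for non-local edges, uniformly over all such edges, not just on average), and (iii) works for all $n \geq n_0$ and the full range $\alpha\in(0,1]$. Points (i) and (ii) pull in opposite directions --- richer range families make bad edges easier to penalize but inflate $\pi^*_\S$ --- so the construction has to be tuned carefully, likely a product of a $1$-dimensional ``interval-like'' gadget (handling the locality/crossing tradeoff, giving the $n^{1-\alpha/d}$ exponent through a $d$-th-root scaling of block sizes) with enough independent copies to realize the $k^d$ growth exactly. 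I would build the $d=2$ case first (points on a line, ranges = intervals, where an edge of ``length'' $\ell$ is crossed by $\Theta(\ell)$ interval-endpoints' worth of ranges after a suitable discretization), verify $\pi^*_\S(k)=O(k^2)$, then obtain general $d$ by a grid/product construction and re-derive the crossing bound $\omega(n^{1-\alpha/d})$ by tracking how block sizes scale as $n^{\alpha/d}$ across the $d$ coordinates. Once the construction is in hand, the probabilistic part (Chernoff for the upper bound, Markov/first-moment for the lower bound) is standard.
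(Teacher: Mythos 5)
Your plan for the first half (the positive direction) has a fatal gap. You fix a single good perfect matching $M$ of $X$ (with $n/2$ edges) and claim that, with the stated sampling probability $p$, at least $n/4$ of its edges survive in $E$ by a Chernoff bound. But the expected number of surviving edges of a \emph{fixed} matching is $\tfrac{n}{2}\cdot p \approx n^{\alpha}\ln n$, which for every $\alpha<1$ is $o(n)$ and in particular far below $n/4$; concentration cannot rescue an argument whose mean is already too small, and your own intermediate computation ($pn/2$ survivors) contradicts your conclusion ($n/4$ survivors) unless $p\geq 1/2$, i.e.\ $\alpha$ is essentially $1$. This is precisely why the guaranteed crossing number degrades from $n^{1-1/d}$ (what an optimal matching achieves under \nameref{assumption}) to $n^{1-\alpha/d}$: the sample is too sparse to retain any constant fraction of a prescribed matching. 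The paper instead runs a reweighting (MWU-type) process and shows, via Haussler's Packing Lemma (Lemma~\ref{lemma:robust-short-edge-presample}), that at every iteration there are $\Theta\bigl(|X_i|^{2-\alpha}\bigr)$ candidate edges each crossed by ranges of total weight $O\bigl(w(\S)/|X_i|^{\alpha/d}\bigr)$; the probability $p$ is calibrated so that the sparse sample hits at least one of these polynomially many candidates at each of the first $n/4$ iterations with probability $1-\delta$ (this requires a careful conditional-probability analysis, since the candidate set at step $i$ depends on the edges chosen earlier). Without the idea of a large, adaptively chosen pool of low-weight edges at every step, your first half does not go through.

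Your lower-bound construction is essentially the paper's (integer grid in $\RR^d$ with the $d\cdot n^{1/d}$ axis-aligned separating half-spaces, so that the number of ranges crossing an edge equals the $\ell_1$-distance of its endpoints, followed by a first-moment/Markov argument on the number of surviving "good" edges). One step of your sketch is off, however: you conclude by "forcing a bad edge into any size-$n/4$ matching, which by construction is crossed $\omega(n^{1-\alpha/d})$ times by some range." A single edge is crossed by a given range at most once, so one bad edge contributes nothing to the crossing number of the matching. What you need — and what the Markov bound already gives you — is that any size-$n/4$ matching in $E$ contains $\Omega(n)$ bad edges; you then sum the total number of edge–range incidences over these bad edges (at least $\tfrac{n}{8}\cdot(1/(16p))^{1/d}$) and pigeonhole over the deliberately small family of $O(dn^{1/d})$ ranges to conclude that a \emph{single} range crosses $\omega(n^{1-\alpha/d})$ edges of the matching. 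With that repair the second half is correct and coincides with the paper's argument.
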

 
\bigskip


\noindent  \textbf{{\Large 5.} \textsc{Geometric systems}.} 
Set systems of bounded dual shatter function and bounded dual VC dimension arise naturally in many geometric
scenarios. Previous works on the above three problems---discrepancy, matchings approximations---heavily relied on spatial partitioning
techniques, which   essentially blocked any further progress and limited their practical applicability for the past   decades. We refer
the reader to \Cref{subsec:geometricsetsystems} for a detailed explanation.
The precise guarantees for several geometric set systems and their proofs are presented in \Cref{sec:corollaries}.

%
%
%

\section{Previous Results}\label{sec:prevres}

\subsection{Discrepancy}\label{sec:prevres-dualshat} 

A first bound on the combinatorial discrepancy of $(X,\S)$ follows immediately from Chernoff's bound,
which implies that a random two-coloring $\chi$
of $X$ satisfies $\disc_{ \S } \round{\chi} = O \big( \sqrt{n \ln m } \big)$ with probability at least $\tfrac{1}{2}$.
This also gives a randomized algorithm to obtain such a coloring, and
it is possible to derandomize the method yielding a deterministic algorithm with running time $O \left( n m \right)$ \citep{Chazelle:2000:DMR:507108}.

\cite{S85} showed that for any set system $(X,S)$, there exists a coloring of $X$
with discrepancy $O \left( \sqrt{n \ln(m/n)}  \right)$, which is tight and improves the general bound for $m = O \left( n\right)$. 
A series of algorithms for its construction started with the work of \cite{DBLP:conf/focs/Bansal10}, who gave
 the first polynomial-time randomized algorithm (using SDP rounding) to compute a coloring with discrepancy $O\round{\sqrt{n} \ln(m/n)}$, which matches the bound of Spencer for $m = O(n)$. Later \cite{DBLP:journals/siamcomp/LovettM15}
gave a combinatorial randomized algorithm for constructing colorings with discrepancy $O\round{\sqrt{n \ln(m/n)}}$ and improved the expected running time to  $\aO \round{ n^3 + m^3}$; see also~\cite{DBLP:journals/siamcomp/Rothvoss17} for a different proof. The algorithm of Bansal was de-randomized by \cite{BS13} (but still used a non-constructive method to prove the feasability of an underlying SDP), and later, \cite{DBLP:conf/ipco/LevyRR17} used the multiplicative weights update technique
to give a deterministic $O \left( n^4 m \right)$-time algorithm  
to compute a two-coloring with discrepancy $O \left( \sqrt{ n \ln (m/n) } \right)$ for
an arbitrary set system.
See also~\cite{DBLP:conf/stoc/BansalDGL18} for a random-walk algorithm
for Banaszczyk's discrepancy bound, with running time $O \left( n^{3.3728..} + nm^{2.3728..} \right)$ (the exponent depends
on the running time for matrix multiplication).

For general set systems, one cannot hope to have polynomial-time algorithms with better guarantees: it was shown by \cite{CNN11} that there exists a set system with $m=O(n)$ for which it is NP-hard to decide whether discrepancy zero or $\Omega(\sqrt{n})$.

Lastly, we mention that there is an active line of research considering sparse set systems\footnote{Where every element is contained in at most $t$ ranges for some constant $t$.} \citep{BF81,DBLP:conf/ipco/LevyRR17,BDS19} and the (stochastic) online setting \citep{spencer77,barany79,swan00,BS20,BJSS20, ALS21,BJMSS21, GGKKS22}.




\bigskip

Improved discrepancy bounds can be obtained if the set system satisfies additional constraints. In particular, we have the following result for set systems with polynomially bounded shatter function:


\begin{theorem}[\cite{ALON1999280,MWW93, mat97}] \label{thm:dualvcdim}
	Let $(X, \S)$ be a finite set system and $d$ be a constant such that $\pi^*_\S(k) = O\round{ k^d}$. Then
	there exists a polynomial-time algorithm to compute
	a two-coloring of $X$ with discrepancy $O \left( \sqrt{n^{1-1/d}\ln m} \right)$. 
	Furthermore, for any $d$, there exists a set system with dual shatter function $\pi^*_\S(k) = O\round{ k^d}$ such that any two-coloring of $X$ has discrepancy $\Omega \left( \sqrt{n^{1-1/d}\ln n} \right)$.
\end{theorem}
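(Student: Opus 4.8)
The plan is to prove the two halves separately. For the upper bound I would build a perfect matching of $X$ with small crossing number and randomly color along its edges; for the lower bound I would exhibit an explicit hard set system and bound its discrepancy by a spectral-plus-counting argument.

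\textbf{Upper bound.} I would first invoke the classical structural fact recorded in \nameref{assumption}: the hypothesis $\pi^*_\S(k)\le ck^d$ guarantees that $X$ has a perfect matching $M$ with crossing number $\kappa=O(n^{1-1/d}+\log m)$ with respect to $\S$, and that $M$ is computable by the classical iterative-reweighting algorithm — maintain a weight on each range, repeatedly pick a minimum-crossing-weight edge, pair up its two endpoints, and double the weight of every range crossing it — in time $O(mn^3)$ (the previous-best figure quoted above). Then color: for each edge $\{x_i,y_i\}\in M$, independently and uniformly set $(\chi(x_i),\chi(y_i))=(+1,-1)$ or $(-1,+1)$. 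For a fixed $S\in\S$, any edge of $M$ not crossed by $S$ contributes $0$ to $\chi(S)$, while each of the at most $\kappa$ crossed edges contributes an independent $\pm1$; hence $\chi(S)$ is a sum of at most $\kappa$ independent Rademacher variables, and a Chernoff--Hoeffding bound gives $\Pr[|\chi(S)|\ge\lambda]\le 2e^{-\lambda^2/(2\kappa)}$. Taking $\lambda=\sqrt{2\kappa\ln(4m)}$ and a union bound over the $m$ ranges yields $\disc_\S(\chi)\le\sqrt{2\kappa\ln(4m)}=O(\sqrt{n^{1-1/d}\ln m})$ with probability at least $1/2$ (in the relevant regime $\ln m=O(n^{1-1/d})$). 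Repeating a constant number of times — or derandomizing the coloring by the method of conditional expectations with this tail bound as a pessimistic estimator — gives the claimed polynomial-time algorithm.

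\textbf{Lower bound.} For the second part one must exhibit, for each $d$, a set system with $\pi^*_\S(k)=O(k^d)$ and discrepancy $\Omega(\sqrt{n^{1-1/d}\ln n})$; the construction is the one of \cite{mat97}, and a natural candidate to describe is the following. Take $X$ to be the point set of the $d$-dimensional affine space over a field with $q$ elements, so $n=q^d$, and let $\S$ be the union of $\Theta(\log n)$ generic affine images of the family of all affine hyperplanes (so that ranges from different copies are distinct). Since any $d$ distinct hyperplanes meet in at most one point, $\S$ has bounded $d$-wise intersections, and the criterion that bounded $d$-wise intersections imply $\pi^*(k)=O(k^d)$ gives the required dual-shatter bound; moreover $|\S|=\Theta(n\log n)$, so $\ln|\S|=\Theta(\ln n)$. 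To lower-bound the discrepancy I would use the $L_2$/spectral method inside a single copy: the hyperplane incidence matrix $A$ has $\Theta(n)$ rows and all singular values except one equal to $\Theta(\sqrt{n^{1-1/d}})$, so for every $\chi\in\{-1,1\}^X$ one has $\|A\chi\|_2^2=\Omega(n^{2-1/d})$, i.e.\ a positive fraction of the ranges of that copy are unbalanced by $\Omega(\sqrt{n^{1-1/d}})$. A boosting step over the $\Theta(\log n)$ independent copies, combined with a union bound over all $2^n$ colorings, then lifts the imbalance of the best range to $\Omega(\sqrt{n^{1-1/d}\ln n})$, matching the upper bound up to constants.

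\textbf{Where the difficulty lies.} The upper-bound half is routine once the low-crossing matching is in hand, so essentially all the work sits in the lower bound. The obstacle is to design a set system that is simultaneously rich enough to force discrepancy $\Omega(\sqrt{n^{1-1/d}\ln n})$ — the bare geometric hyperplane family only yields $\Omega(\sqrt{n^{1-1/d}})$, and a disjoint-blocks construction fails because a sub-family with small dual shatter function already has small discrepancy, so the extra $\sqrt{\ln n}$ must be extracted from $\Theta(\ln n)$ nearly-independent sub-families — while remaining structured enough that $\pi^*_\S(k)=O(k^d)$, and then to carry out the boosting with concentration estimates sharp enough to also handle colorings that are close to constant (where the spectral bound degenerates). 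This last boosting step is more delicate than a direct union bound and is the technical crux of \cite{mat97}. Accordingly, in the write-up I would give the matching-plus-Hoeffding argument in full for the upper bound and mainly cite \cite{mat97} (together with \cite{ALON1999280,MWW93}) for the hard construction and its analysis.
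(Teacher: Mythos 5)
Your proposal matches the paper's treatment of this (cited) theorem: the upper bound is exactly the classical argument sketched in Section 2.1 — build a low-crossing perfect matching by iterative reweighting in $O(mn^3)$ time, color antipodally along its edges so that each $\chi(S)$ is a sum of at most $\kappa$ independent Rademacher variables, and finish with Chernoff plus a union bound (using $\ln m = O(\ln n) = O(n^{1-1/d})$ to absorb the additive $\log m$ in $\kappa$) — and the lower bound is, as in the paper, deferred to \cite{mat97}. The only caveat is that your candidate hard construction for the lower bound is speculative in its specifics, but since you explicitly cite \cite{mat97} for that half rather than claiming a proof, this is consistent with how the paper handles it.
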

\noindent

  If $d$ is considered as a constant, the upper and lower bounds of \Cref{thm:dualvcdim} match:
	if the dual-shatter function of $(X,\S)$ is $\pi^*_\S(k) = O\round{ k^d}$, then $\vcdim(X,\S) \leq 2^d$ and thus by the Sauer-Shelah lemma, $ m = |\S| \leq\round{ \frac{en}{2^d}}^{2^d}$; see~\cite[Chapter 5]{MatDiscBook} for further details.


\paragraph{Algorithms.}



The classical proof of the upper-bound in \Cref{thm:dualvcdim} uses the multiplicative weights update (MWU) technique\footnote{For an excellent survey on the MWU technique see~\cite{AHK12}.} as follows.
The algorithm maintains a weight function $\pi$ on $\S$, with initial weights set to $1$. 
For any pair $\{x,y\} \in X$, let $\Delta_\S(x,y)$ denote the set of those sets $S \in \S$ which satisfy $\abs{S \cap \{x,y\}} =1$ and let   $\tilde\pi(x, y) = \sum_{S \in \Delta_\S(x, y) } {\pi}(S)$.
The algorithm colors two elements of $X$ at a time (for simplicity, we assume that $X$ is even) and proceeds as follows
\begin{enumerate}
	\item[for] $i = 1, \ldots, \frac{n}{2}$
	\item Find a pair  $\{x_i,y_i\} \in X$  that minimizes $\tilde\pi(x, y)$.
     \item Set $\chi(x_i) = \begin{cases} 1 &\text{ with probability } 1/2 \\ -1 &\text{ with probability } 1/2\end{cases}$
,  \quad $\chi \left( y_i \right) = -\chi \left( x_i \right)$,\\    
     	remove $x_i,y_i$ from $X$.
	\item Update $\pi$ by doubling the weight of each set in $\Delta_\S(x_i, y_i)$.
\end{enumerate}
The reweighing scheme ensures the key property that
for each $S\in \S$,
\begin{equation}\label{eq:CW-crnr-bound-first}
	\left| \curly{ i ~:~ S \in\Delta_\S(x_i, y_i) } \right| = O \round{n^{1-1/d}}.
\end{equation}
This implies, using Chernoff's bound and the union bound,
that the discrepancy of the resulting coloring is $ O\round{ \sqrt{ n^{1-1/d} \ln m}} $ with probability at least $1/2$~\cite{MWW93}.
The algorithmic bottleneck is finding the pair $\{x_i, y_i\}$ that minimizes $\tilde \pi$ at each iteration.
Using the incidence matrix for $\S$, this can be done in $O \left( n^2 m \right)$
steps, and thus the algorithm has overall running time $O \left( n^3 m \right)$. 

\subsection{Matchings with low crossing number.}

Given a set $X$, a \emph{matching} in $X$ is a set of disjoint edges (pairs) from $X$.  A \emph{(perfect) matching of $X$} is a matching of size $\floor{n/2}$ plus a loop (an edge $\{x,x\}$) if $n$ is odd. The \emph{size} of a matching is the number of its edges. We say that a range $S \in \S$ \emph{crosses} a pair $\{x,y\} $  if and only if $|S \cap \{x,y\}| =1$ and define the \emph{crossing number} of a matching $M$ with respect to $\S$ as the number of edges of $M$ crossed by a single range $S\in \S$.

Notice that the pairs $\curly{x_i, y_i}$ selected by the MWU algorithm form a perfect matching of $X$. 
Furthermore, the key property stated in \Cref{eq:CW-crnr-bound-first} can be simply formulated as: the matching $M = \curly{\curly{x_i, y_i}}_{i=1}^{n/2}$ has crossing number $O\round{n^{1-1/d}}$ with respect to $\S$. 
The study of perfect matchings (along with spanning paths and spanning trees) with low crossing number was originally introduced for geometric range searching \citep{Wel88,CW89}. Since then, they have found applications in various fields, for instance, discrepancy theory \citep{MWW93}, learning theory \citep{AMY16}, or algorithmic graph theory \citep{DHV20}. 

  The original method of Chazelle and Welzl builds a perfect matching using the multiplicative weight update (MWU) method. Briefly, the algorithm maintains a weight function $\pi$ on $\S$, with initial weights set to $1$. It selects edges iteratively, always choosing an edge that is guaranteed to be crossed by sets of low total weight in $\pi$; it then updates $\pi$ based on the chosen edge. The algorithmic bottleneck is in finding such an edge: for an abstract set system without additional structure, this  takes $O(n^2m)$ time for each of the $n/2$ iterations.

	Another approach    was proposed by \cite{harp09} (see also~\cite{FLM04}). His result implies that
	if $\kappa = \Theta (n^\gamma)$ for some $\gamma \in [1/\log n,1]$, then
	a spanning tree of crossing number $O(\kappa/\gamma)$ can be found by solving an LP on $\binom{n}{2}$ variables and $m+n$ constraints. 
	There also exists an algorithm using a general framework  of rounding fractional solutions of minimax integer programs with matroid constraints.
	This method gives a randomized algorithm that constructs a spanning tree  with expected crossing number
	at most $\kappa + O(\sqrt{\kappa\log m})$  in time $\aO ( mn^4 +n^8)$~\citep{CVZ09}.

\subsection{Geometric set systems.} \label{subsec:geometricsetsystems}

	Now we  turn to the case where $X$ is   a set of $n$ points in $\RR^d$  and $\S$ consists of   subsets of $X$ that are induced by  geometric objects.
	In this setting, improved bounds are made possible using spatial partitioning.
	The current-best algorithms for geometric set systems induced by half-spaces recursively construct
	simplicial partitions,
	stored in a hierarchical structure called the partition tree, which then at its base level gives a matching with low crossing number.
	This approach is used in the breakthrough result of \cite{Chan12}  who gave an  $O ( n \log n)$ time algorithm to build
	partition trees with respect to half-spaces in $\RR^d$, which then implies the same for computing matchings with crossing number $O(n^{1-1/d})$.



For set systems where $X$ is a set of $n$ points in $\RR^d$ and  $\S$ consists of subsets of $X$ that are induced by certain geometric objects, improved bounds are made possible using spatial partitioning.
For instance, if $\S$ consists of the subsets of $X$ that are induced by half-spaces, one can apply the algorithm of \cite{Chan12} to construct a perfect matching with crossing number $O\round{n^{1-1/d}}$ in time $O ( n \ln n)$, which then implies the same running-time for computing coloring with discrepancy $O\round{ \sqrt{ n^{1-1/d} \ln m}}$.
While the use of spatial partitioning gives $o(mn^3)$ running times,  progress remains blocked in several ways:
\begin{enumerate}
	\item[a)]  Spatial partitioning only exist in certain geometric settings; it is not possible when dealing with abstract set systems such as those arising in learning theory or graph theory.
	Indeed, as shown by \cite{AHW87}, they do not always exist in settings satisfying
	the requirements of \Cref{thm:dualvcdim} (e.g., the projective plane).

	\item[b)]  Optimal bounds for constructing simplicial partitions are only known for the case of half-spaces; this is one of the main problems left open by \cite{Chan12}. Despite a series  of research for semi-algebraic set systems (using linearization, cuttings, and more recently, polynomial partitioning~\citep{AMS13}), the   bounds are still sub-optimal for polynomials of degree larger than $2$, with  exponential dependence on the dimension.

	\item[c)] There are large constants in the asymptotic notation depending on the dimension $d$ both
	in the running time as well as the crossing number bounds, due to the use of cuttings (see~\cite{EHKS20}). 

	\item[d)] Practical implementation of spatial partitioning in $\RR^d$, $d > 2$,  remains an open problem in geometric computing, even for half-spaces. In particular,
	for $d > 2$, we know of no implementations for low-crossing matchings; nor for constructing $o\left(\frac{d}{\eps^2}\right)$-sized  $\eps$-approximations even for   half-spaces in $\RR^3$. 
	\begin{remark*}In $\mathbb R^2$, an algorithm to create optimal cuttings (the main tool in the construction of spacial partitioning for half-space ranges) was implemented~\cite{HarPe00Cutting} and was recently used for computing near-optimal $\eps$-approximations with respect to half-spaces in $\RR^2$~\citep{MP18}. 
	\end{remark*}
\end{enumerate}

\noindent
Thus, one of our main objectives was to find an efficient algorithm that \emph{does not} use spatial partitioning;  continuing the recent theme of such algorithms proposed for $\eps$-nets and $\eps$-approximations~\citep{V10, CGKS12, MDG17, M19}.

\section{Proofs}\label{sec:main-proof}

\subsection{Outline and ideas}

Our methods rest on the following three key ideas:

\begin{enumerate}
	\item We replace the bottleneck algorithmic step  of finding a minimum weight pair (with respect to $\tilde \pi$) in the multiplicative weights update technique (see \Cref{sec:prevres-dualshat})  by simply sampling a pair according to a carefully maintained distribution.   In particular, we maintain weights not only on the ranges in $\S$, but also on $\binom{X}{2}$. At each iteration we sample a range $S$ and an edge (pair) $e$ according to the current weights. Then we color the endpoints of $e$ and update the weights by \emph{doubling} the weight of each range that crosses $e$ and \emph{halving} the weight of each edge that is crossed by $S$.
	 In comparison to previous MWU-based solutions for constructing low-discrepancy colorings \citep{DBLP:conf/ipco/LevyRR17}, our method is much simpler and faster. 

	\item The idea of maintaining `primal-dual' weights has been used earlier to approximately solve matrix  games \citep{GK95} and in geometric optimization~\citep{AP14}. In our case, the process is more elaborate as
	we are constructing a proper coloring at the same time as reweighing. Therefore, at the end of each iteration, as we color the endpoints $e$, we are forced to set the weights of $e$ and all edges adjacent to $e$ to $0$. This  breaks down the reweighing scheme, as the removal of the edges  amplifies the error introduced in later iterations and thus our maintained weights degrade over time. However, we prove that restarting the algorithm by `resetting' all the weights a logarithmic number of times suffices to ensure the required low discrepancy.

	\item Finally, updating the weights of \emph{all} edges and sets crossing the randomly picked set and edge  would be   too expensive. Instead, we show that  updating the weights of a \emph{uniform} sample of  
	edges and 
	ranges at each iteration is sufficient for our purposes. The key observation here is that the standard multiplicative weights proof has an additive smaller-order term; we take advantage of this gap to improve the running time at the cost of amplifying this term, just enough so that it is still within a constant factor of the optimal solution.
\end{enumerate}

\noindent
We start by proving the main technical ingredient of this work: \Cref{thm:main-matching-result}.

\subsection{Proof of Theorem \ref{thm:main-matching-result}}\label{subsec:matching-proof}


The algorithm that achieves the guarantees of \Cref{thm:main-matching-result} is   presented in \Cref{algo:main}.

\begin{algorithm}[ht!]
\algotitle{\textsc{BuildMatching}}{mainalgo}
\caption{  \textsc{BuildMatching}$\big((X,\S),  a, b, \gamma\big)$}
\label{algo:main}

	$M \leftarrow \emptyset$

\While{ $|X| \geq 4$ }{

	$E \leftarrow \binom{X}{2}$

	$M \leftarrow M \cup \nameref{subalgo} \big((X,\S), E,  a, b, \gamma, |X|/4 \big)$ \tcp*{see \Cref{algo:match-half}}

	$ X \leftarrow X \setminus \mathrm{endpoints}\round{M}$\tcp*{remove the elements covered by $M$}

	}

	match the remaining elements of $X$ randomly and add the edges to $M$

	\tcp*{if $|X|$ is odd, we allow one edge to be a loop}

	\textbf{return} $M$

\end{algorithm}
\begin{algorithm}[ht!]
\algotitle{\textsc{PartialMatching}}{subalgo}
\caption{  \textsc{PartialMatching}$\big((X,\S), E,  a, b, \gamma, t \big)$}
	\label{algo:match-half}
	$\omega_1(e) \leftarrow 1, \quad \pi_1(S) \leftarrow 1 \quad\quad\forall e \in E,~ S \in \S$\
\BlankLine

		$\proba \leftarrow \min\curly{\frac{\coeffPU\ln(|E| \cdot t)}{a|X|^\gamma +b}  ,\, 1}$

		$\probaS \leftarrow \min\curly{\frac{\coeffQU\ln(|\S|\cdot t)}{a|X|^\gamma +b} ,\, 1}$

	\BlankLine
	\For{$i = 1, \dots, t$ }{

\BlankLine

		$\omega_{i}(E) \leftarrow \sum_{e \in E}\omega_i (e)$

		$\pi_{i}(\S) \leftarrow \sum_{S \in \S}\pi_i (S)$

		choose $e_i \sim \omega_i$
		\tcp*{$\PP[ e_i =e ] = \frac{\omega_i(e)}{\omega_i(E)} ~~ \forall e \in E$}

		choose $S_i \sim \pi_i$
		\tcp*{$\PP[ S_i =S ] = \frac{\pi_i(S)}{\pi_i(\S)} ~~ \forall S \in \S$}

		\BlankLine

		$E_i \leftarrow $ sample from $E$ with probability $\proba$
		\tcp*{$\PP [ e \in E_i ] = \proba ~~ \forall e \in E$}

		$\S_i \leftarrow $ sample from $\S$ with probability $\probaS$
		\tcp*{$\PP [ S \in \S_i ] = \probaS ~~ \forall S \in \S$}
		\tcp*{$\Inc(e, S) = 1$ if $e$ crosses $S$, $\Inc(e, S) = 0$ otherwise}
		\For{ $e \in E_i$ } {
			$\omega_{i+1}(e) \leftarrow \omega_i (e) \big(1 - \frac 1 2  \Inc(e,S_i)\big)$
			\tcp*{halve weight if $S_i$ crosses $e$}
		}
		\For{ $S \in \S_i$}{
			$\pi_{i+1}(S)  \leftarrow \pi_i (S) \big(1 +   \Inc(e_i,S)\big)$
			\tcp*{double weight if $S$ crosses $e_i$}
		}

		\BlankLine

		set the weight in $\omega_{i+1}$ of $e_i$ and of each edge adjacent to $e_i$ to zero

	}

	\textbf{return} $\{ e_1, \dots, e_{t} \}$
\end{algorithm}

\newpage
\noindent
The main result of this section is the following theorem:
\begin{theorem}\label{thm:general-case-half-pts}
	Let $(X,\S)$ be a set system, which satisfies \nameref{assumption} and let $E$ denote the set of all pairs (edges) from $X$.
	Then
	\nameref{subalgo} $\round{ (X, \S), E, a, b, \gamma, n/4 }$ returns a matching of size\footnote{The size of a matching is the number of its edges.} $n/4$ with expected crossing number at most
		\[
		a\round{\frac{n}{2}}^\gamma + b
		+
		\max\curly{\frac{a\round{\frac{n}{2}}^\gamma +b}{2} ,~18\ln(mn/4)},
		\]
		with an expected number of Oracle calls at most
		\[
			\min\curly{ \frac 6 a \round{ n^{3-\gamma} \ln\frac{n^3}{4} + 3 mn^{1-\gamma} \ln\frac{mn}{4}},~ \frac{n^3 + 2mn}{8} }.
		\]
		
\end{theorem}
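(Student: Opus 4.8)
The plan is to analyze \nameref{subalgo} as a primal-dual multiplicative weights process, running the standard MWU potential argument on both weight functions $\omega$ and $\pi$ simultaneously, but with two twists relative to the textbook proof: (1) only a random sample of edges/ranges gets reweighed at each step, which I must show only inflates the additive error term by a constant, and (2) the zeroing-out of $e_i$ and its adjacent edges must be controlled. The two analyses are linked: the bound on the crossing number of the returned matching comes from tracking $\pi$, while the guarantee that we can always sample a "good" edge $e_i$ (one crossed by ranges of small $\pi$-weight) comes from tracking $\omega$ together with \nameref{assumption}, which guarantees the existence of a low-crossing perfect matching on the current (surviving) vertex set — hence a set of edges of large total $\omega$-weight that are each crossed by few heavy ranges.

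\textbf{Step 1: Set up the two potentials.} Track $\Phi_i = \omega_i(E)$ and $\Psi_i = \pi_i(\S)$. For $\Psi$: since we double the weight of each $S \in \S_i$ crossing $e_i$, and $S$ is sampled into $\S_i$ with probability $\probaS$ independently, we get $\EE[\pi_{i+1}(\S) \mid \text{history}] = \pi_i(\S) + \probaS \sum_{S \ni \text{crosses } e_i} \pi_i(S)$. The edge $e_i$ is sampled proportional to $\omega_i$, so I need to argue that with respect to the $\omega_i$-distribution, the expected $\pi_i$-weight of ranges crossing $e_i$ is small — this is where the $\omega$ side feeds in. For $\Phi$: halving the weight of edges crossed by $S_i$ gives $\EE[\omega_{i+1}(E) \mid \text{history}] \le \omega_i(E) - \tfrac12 \proba \sum_{e : S_i \text{ crosses } e} \omega_i(e)$, plus we must subtract the weight zeroed out by removing $e_i$ and adjacent edges — but adjacent edges to a single vertex pair form an $O(n)$-fraction of all $\binom{n}{2}$ edges only in the worst case, and more carefully, the zeroed weight is controlled because $e_i$ is itself drawn from $\omega_i$ (so its own weight is "typical") — this needs the observation that removing two vertices kills at most a $\approx 4/n$ fraction of edges by count, and one argues about weight via the sampling.

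\textbf{Step 2: Lower-bound the final $\omega$-weight / establish the good-edge invariant.} By \nameref{assumption}, the current surviving set $Y \subseteq X$ (of size at least $n/2$, since we only run $t = n/4$ steps removing $2$ vertices each) has a perfect matching $M_Y$ with crossing number $\le a|Y|^\gamma + b$. A counting/averaging argument over $M_Y$ shows that at every step there is an edge crossed by ranges of total $\pi_i$-weight at most roughly $\frac{a|Y|^\gamma + b}{|Y|/2}\cdot \pi_i(\S)$; more to the point, the edges of $M_Y$ retain enough $\omega_i$-weight (since $\omega$ only decreases multiplicatively and slowly, and few are zeroed) that the $\omega_i$-sampled edge $e_i$ is, in expectation, crossed by few-weight ranges. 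Combining with Step 1's recursion for $\Psi$ yields $\EE[\pi_{t+1}(\S)] \le \Psi_1 \cdot \exp\!\big(O(\text{something})\big)$, and since $\pi_{t+1}(\S) \ge 2^{(\text{max crossing number of }M)}$ pointwise for the heaviest range, Markov/Jensen gives the claimed $a(n/2)^\gamma + b + \max\{\ldots\}$ bound on the expected crossing number. The two regimes in the $\max$ — $\tfrac12(a(n/2)^\gamma+b)$ versus $18\ln(mn/4)$ — come from whether the sampling probabilities $\proba, \probaS$ saturate at $1$ or not: the constants $\coeffPU = 48$, $\coeffQU = 72$ and the $\ln(|E|t), \ln(|\S|t)$ factors in $\proba, \probaS$ are tuned precisely so that the sampling-induced error is dominated by $O(\ln(mn))$ additively.

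\textbf{Step 3: Oracle-call bound.} Each iteration makes $|E_i| + |\S_i|$ oracle calls in expectation, i.e. $\proba|E| + \probaS|\S|$; summing over $t = n/4$ iterations and the $O(\log n)$ "restarts" (actually here just one call of \nameref{subalgo} with $t=n/4$) gives, after plugging $|E| = \binom{n}{2}$, $\proba \approx \frac{48\ln(n^3)}{an^\gamma}$, $\probaS \approx \frac{72\ln(mn)}{an^\gamma}$, the first term $\tfrac{6}{a}(n^{3-\gamma}\ln\frac{n^3}{4} + 3mn^{1-\gamma}\ln\frac{mn}{4})$; the second term $\tfrac{n^3+2mn}{8}$ is the trivial bound from updating all edges/ranges when $\proba = \probaS = 1$. \textbf{The main obstacle} I expect is Step 1–2's coupling: making rigorous that zeroing out $e_i$ and its $O(n)$ adjacent edges does not destroy the MWU invariant — the textbook proof crucially uses that weights only change multiplicatively, and hard-zeroing breaks that, so one must show the zeroed $\omega$-mass is small \emph{in expectation over the choice of $e_i \sim \omega_i$} and absorb it into the additive error, which is exactly the delicate "weights degrade over time" phenomenon flagged in the outline (idea 2), here kept under control because $t$ is only $n/4$ rather than $n/2$.
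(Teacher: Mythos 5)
Your overall skeleton matches the paper's proof: two MWU potentials $\omega_i(E)$ and $\pi_i(\S)$, a duality/double-counting identity linking the edge drawn from $\omega_i$ to the range drawn from $\pi_i$, an Azuma-type martingale argument to show that reweighing only the subsamples $E_i,\S_i$ costs an additive $O(\ln(mn)/\proba)$ resp. $O(\ln(mn)/\probaS)$, the \nameref{assumption} plus pigeonhole to exhibit a surviving edge crossed by at most $a(n/2)^\gamma+b$ of the selected ranges, and the oracle count $t\round{\proba|E|+\probaS|\S|}$. However, the step you single out as the main obstacle --- and your proposed resolution of it --- is where the proposal goes wrong. You propose to show that the $\omega$-mass zeroed out when $e_i$ and its adjacent edges are removed is small in expectation over $e_i\sim\omega_i$, and to absorb it into the additive error. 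This is neither necessary nor, as far as I can see, true: the expected zeroed mass is $\sum_{e'}\frac{\omega_i(e')}{\omega_i(E)}\sum_{e\sim e'}\omega_i(e)$, and nothing prevents the weight from concentrating on the star of a single vertex, in which case the sampled edge is likely incident to that vertex and a constant fraction of $\omega_i(E)$ is destroyed in one step. An argument that relies on this quantity being small would fail.

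The paper's handling of the zeroing is different and essentially free, and this is the idea your proposal is missing. Zeroing weights can only \emph{decrease} $\omega_{t+1}(E)$, so it strictly helps the multiplicative \emph{upper} bound $\omega_{t+1}(E) < |E|\exp\round{-\tfrac12\sum_i\sum_e \frac{\omega_i(e)}{\omega_i(E)}\Inc(e,S_i)\bm{1}_{\{e\in E_i\}}}$; no control of the destroyed mass is needed there. The only place the zeroing bites is the \emph{lower} bound on the potential, which must be taken as $\max_{e\in\tilde E_t}\omega_{t+1}(e)=(1/2)^{\min_{e\in\tilde E_t}\sum_i\Inc(e,S_i)\bm{1}_{\{e\in E_i\}}}$ over the set $\tilde E_t$ of edges that still have nonzero weight at termination. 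The resulting bound therefore involves $\min_{e\in\tilde E_{n/4}}\sum_i\Inc(e,S_i)$, and since $\tilde E_{n/4}=\binom{\tilde X_{n/4}}{2}$ with $|\tilde X_{n/4}|=n/2$, applying \nameref{assumption} to $\tilde X_{n/4}$ and pigeonholing over its low-crossing perfect matching (with the ranges $S_1,\dots,S_{n/4}$ given unit weight) bounds this minimum by $a(n/2)^\gamma+b$ deterministically. Relatedly, your per-step claim that ``the $\omega_i$-sampled edge $e_i$ is, in expectation, crossed by few-weight ranges'' is not what gets proven and need not hold at any fixed step; only the aggregated bilinear quantity $\sum_{i}\EE\bracket{\sum_e\frac{\omega_i(e)}{\omega_i(E)}\Inc(e,S_i)}=\sum_i\EE\bracket{\sum_S\frac{\pi_i(S)}{\pi_i(\S)}\Inc(e_i,S)}$ is controlled, via the $\omega$-potential on one side and the $\pi$-potential on the other. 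With the zeroing handled as above, the rest of your outline (martingale concentration for the subsampling, tuning of $\coeffPU$ and $\coeffQU$, and the two regimes of the $\max$ according to whether $\proba,\probaS$ saturate at $1$) goes through as you describe.
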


\noindent
Before we present its proof, we first show how \Cref{thm:general-case-half-pts} implies \Cref{thm:main-matching-result}.
The algorithm \nameref{mainalgo} makes $\log n$ calls to \nameref{subalgo} with exponentially decreasing input sizes. In particular, the overall expected number of membership Oracle calls of \nameref{mainalgo} can be bounded as
\begin{align*}
&\sum_{i = 0}^{\log n} \min\curly{ \frac 6 a \round{ \round{\frac{n}{2^i}}^{3-\gamma} \ln\frac{n^3}{2^{3i+2}} + 3 m\round{\frac{n}{2^i}}^{1-\gamma} \ln\frac{mn}{2^{i+2}}},~ \frac{\round{\frac{n}{2^i}}^3 + {\frac{mn}{2^{i-1}}}}{8} }\\
&\leq
 \min\curly{ \sum_{i = 0}^{\log n}\frac 6 a \round{ \round{\frac{n^{3-\gamma}}{4^i}} \ln\frac{n^3}{2^{3i+2}} + 3 m\round{\frac{n}{2^i}}^{1-\gamma} \ln\frac{mn}{2^{i+2}}},~ \sum_{i = 0}^{\log n}\frac{\frac{n^3}{8^i} + {\frac{2mn}{2^{i}}}}{8}}\\
 &\leq
 \min\curly{ \frac 6 a \round{ \frac{4}{3} n^{3-\gamma} \ln n^3 + 3m\cdot\min\curly{\frac{2n^{1-\gamma}}{1-\gamma},~ n^{1-\gamma}\log n}  \ln mn},~ \frac{n^3}{7} + \frac{mn}{2}}.
\end{align*}
As for the crossing number, \Cref{thm:general-case-half-pts} implies that \nameref{mainalgo}$\big((X,\S),  a, b, \gamma\big)$ returns a matching with expected crossing number at most
\begin{align*}
    \sum\limits_{i=1}^{\log n  } \bracket{\frac{3a}{2}\round{\frac{n}{2^i}}^\gamma +\frac{3b}{2} + 18 \ln \frac{mn}{2^{i+1}}}
    &<
    \frac{3an^\gamma}{2}  \bracket{\sum\limits_{i=1}^{\infty} \round{\frac{1}{2^\gamma}}^i} +
    \round{\frac{3b}{2} + 18 \ln \round{mn}}\log n \\
    &<
    \frac{3a}{\gamma}  n^\gamma + \round{\frac{3b}{2} + 18 \ln \round{mn}}\log n \,.
\end{align*}
Hence, we have shown that \Cref{thm:main-matching-result} is a consequence of \Cref{thm:general-case-half-pts}. \qed

\subsubsection*{Proof of \Cref{thm:general-case-half-pts}.} For an edge $e$ and a set $S$, we define 
\[
\Inc(e,S) = \begin{cases} 1 &\text{if $S$ crosses $e$}, \\ 0 &\text{otherwise.} \end{cases}
\]
We will deduce \Cref{thm:general-case-half-pts} from the next lemma, which is proved later in this section.

\begin{restatable}{lemma}{mainlemma}\label{lemma:MW-bound}
	Let $t \in [1, |X|/4]$ be an integer and let $\{e_1, \dots, e_t\}$, $\{S_1, \dots, S_t\}$, $\proba$, and $\probaS$ as in \Cref{algo:match-half}. Furthermore let $\tilde E_t \subseteq E$ denote the set of edges that have non-zero weight when
	 \nameref{subalgo}$\big((X,\S), E,  a,b,\gamma, t \big)$ terminates. Then
	\begin{equation}\label{eq:MW-bound}
		\EE\bracket{
			\max\limits_{S \in \S} \sum\limits_{i=1}^{t} \Inc(e_i,S)}
		\leq
\frac 1 2\EE\bracket{\min\limits_{e\in \tilde E_t} \sum\limits_{i=1}^{t} \Inc(e,S_i)}
		+
		\frac{\round{\frac{128}{13} + \frac{8}{\ln 2}}\ln(|E|t)}{3\proba}
		+
		\frac{\round{16 + \frac{4}{\ln 2}}\ln(|\S|t)}{3\probaS }.
	\end{equation}
\end{restatable}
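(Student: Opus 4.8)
The statement is a randomized primal--dual multiplicative-weights (MWU) regret bound, so the plan is to track the two potentials $\Phi_i := \pi_i(\S)$ and $\Psi_i := \omega_i(E)$ in parallel. Write $N(S) := \sum_{i=1}^{t}\Inc(e_i,S)$ for the quantity the left-hand side maximises, $R(e) := \sum_{i=1}^{t}\Inc(e,S_i)$ for the one the right-hand side minimises, $\Delta_\S(e) := \curly{S : \Inc(e,S)=1}$, $\Delta_E(S) := \curly{e : \Inc(e,S)=1}$, and let $\mathcal F_i$ be the $\sigma$-algebra generated by everything preceding round $i$ together with the weights $\omega_i,\pi_i$. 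Since the weight of an edge is only ever multiplied by $\tfrac12$ or set to $0$, an edge survives with positive weight iff it is never zeroed; hence $\tilde E_t$ is exactly the set of pairs on the $|X|-2t$ not-yet-matched points, and the (random) edge $e^\dagger$ attaining $\min_{e\in\tilde E_t}R(e) =: R_{\min}$ is never zeroed, so $\omega_{t+1}(e^\dagger) = 2^{-L(e^\dagger)}$ where $L(e) := \abs{\curly{i : e\in E_i,\ \Inc(e,S_i)=1}}\le R(e)$.

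\textbf{Dual side.} As each range is only doubled, $\pi_{t+1}(S) = 2^{M(S)}$ with $M(S) := \abs{\curly{i : S\in\S_i,\ \Inc(e_i,S)=1}}$, hence $\Phi_{t+1}\ge 2^{\max_S M(S)}$. Telescoping $\Phi_{i+1} = \Phi_i(1+\beta_i)$ with $\beta_i := \pi_i(\Delta_\S(e_i)\cap\S_i)/\Phi_i$ and $\ln(1+x)\le x$ gives $\ln 2\cdot\max_S M(S)\le\ln\abs\S + \sum_i\beta_i$. Because $\S_i$ is an independent rate-$\probaS$ sample drawn after $e_i$, one computes $\EE[\beta_i\mid\mathcal F_i] = \probaS Z_i$, where
\[
Z_i := \frac{1}{\Phi_i\Psi_i}\sum_{e\in E}\omega_i(e)\,\pi_i\round{\Delta_\S(e)} = \frac{1}{\Phi_i\Psi_i}\sum_{S\in\S}\pi_i(S)\,\omega_i\round{\Delta_E(S)}
\]
is a symmetric ``expected crossing density''. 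To pass from $\max_S M(S)$ to $\max_S N(S)$, note that for each fixed $S$ the sequence $M(S)-\probaS N(S)$ is a martingale with increments in $[-\probaS,1]$ and predictable variance at most $\probaS N(S)$; Freedman's inequality plus a union bound over $\S$ and over the $\le t$ possible values of $N(S)$ then controls $\max_S N(S)$ by $\probaS^{-1}\max_S M(S)$ up to an additive $\Theta\round{\ln(\abs\S t)/\probaS}$ --- this is the source of the second error term.

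\textbf{Primal side and the coupling.} As each edge is only halved or zeroed, $\Psi_{i+1}\le\Psi_i(1-\tfrac12\gamma_i')$ with $\gamma_i' := \omega_i(E_i\cap\Delta_E(S_i))/\Psi_i$, so $\ln\Psi_{t+1}\le\ln\abs E - \tfrac12\sum_i\gamma_i'$; combined with $\Psi_{t+1}\ge\omega_{t+1}(e^\dagger) = 2^{-L(e^\dagger)}$ this yields $\sum_i\gamma_i'\le 2\ln\abs E + 2\ln2\cdot L(e^\dagger)$. The linchpin is that, exactly as on the dual side, $\EE[\gamma_i'\mid\mathcal F_i] = \proba Z_i$ --- \emph{the same} $Z_i$. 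Taking expectations, and using that $L(e^\dagger)$ is a rate-$\proba$ thinning of $R(e^\dagger)=R_{\min}$ (so $\EE[L(e^\dagger)]\le\proba\,\EE[R_{\min}]$ up to an additive $\Theta(\ln(\abs E t))$, again by Freedman plus a union bound over $E$ and over the possible values of $R(e)$), gives $\proba\sum_i\EE[Z_i]\le 2\ln\abs E + 2\ln2\,\proba\,\EE[R_{\min}] + \dots$, i.e.\ $\sum_i\EE[Z_i]$ is bounded in terms of $\EE[R_{\min}]$ plus a term $\Theta(\ln(\abs E t)/\proba)$. Feeding this back into the dual estimate (whose regret term is $\probaS\sum_i\EE[Z_i]$, so that the $\probaS$ cancels against the $M\to N$ conversion) leaves $\EE[\max_S N(S)]$ bounded by a constant times $\EE[R_{\min}] = \EE[\min_{e\in\tilde E_t}\sum_i\Inc(e,S_i)]$ plus the two claimed error terms.

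\textbf{Main obstacle.} The conceptual skeleton above is routine; the real work is quantitative. A black-box execution of this plan produces a bound of exactly the stated shape but with a larger absolute constant in place of $\tfrac12$ (and looser error constants), because a generic MWU regret argument with a factor-$2$ gain update and a factor-$\tfrac12$ loss update only gives the product $\tfrac{1}{\ln2}\cdot 2\ln2 = 2$ for the coefficient of $\EE[R_{\min}]$. Obtaining the sharp $\tfrac12$ together with the precise error constants $\tfrac{128}{13}+\tfrac{8}{\ln2}$ and $16+\tfrac{4}{\ln2}$ requires apportioning every loss tightly --- the $\ln(1+x)\le x$ and $\ln(1-x)\le -x$ steps, the two Freedman tail bounds and the union-bound overheads, the Jensen step on the $\sqrt{\cdot}$ deviation terms, the boundary cases $\proba=1$ or $\probaS=1$ --- and exploiting the exact values $48$ and $72$ hard-wired into $\proba,\probaS$; it may also need a finer joint potential (e.g.\ $\Phi_i^{a}\Psi_i^{b}$ with $a/b$ tuned so the $Z_i$-contributions cancel) rather than the naive combination. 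I expect this bookkeeping, not any single conceptual step, to be the bulk of the proof.
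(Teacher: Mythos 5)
Your outline reproduces the paper's proof almost step for step: the paper tracks exactly the two potentials $\pi_i(\S)$ and $\omega_i(E)$, lower-bounds them by the extremal surviving weight ($2^{\max_S M(S)}$, resp.\ $2^{-L(e^\dagger)}$ for a never-zeroed edge), telescopes with $1+x\le e^x$, uses your symmetric identity $\EE[\beta_i\mid\mathcal F_{i-1}]=\probaS Z_i$ and $\EE[\gamma_i'\mid\mathcal F_{i-1}]=\proba Z_i$ (this is its \Cref{lemma:double-counting} together with the independence of $E_i,\S_i$ from $(\omega_i,\pi_i,e_i,S_i)$), and handles the Bernoulli thinning by the Azuma-type martingale inequality of \Cref{lemma:Azuma} plus a union bound over $\S$ (resp.\ $E$) with failure probability $1/t$ absorbed into the expectation --- exactly your Freedman step. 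So the approach is the same, and your plan fully establishes the lemma up to the values of the absolute constants. Two remarks on the constants. First, the paper does \emph{not} use a joint potential $\Phi^a\Psi^b$; it gets $\tfrac12=\tfrac{4}{3\ln 2}\cdot\tfrac{3\ln 2}{8}$ purely by choosing the slack in the two martingale steps asymmetrically: a factor $\tfrac34$ loss on the dual side, and on the primal side a comparison of the form $\min_e L(e)\le \tfrac{3\proba}{16}\min_e R(e)+O\big(\ln(|E|t)\big)$. Second, you are right to be suspicious of that last step: since $\EE[L(e)]=\proba R(e)$, no concentration argument can push the coefficient below $\proba$, and the deviation variable the paper bounds, $\sum_{i}\Inc(e,S_i)\big(\bm 1_{\{e\in E_i\}}-\tfrac{3\proba}{16}\big)$, has positive drift $\tfrac{13\proba}{16}R(e)$, which exceeds the claimed threshold $\tfrac{16}{13}\ln(|E|t)$ for heavily crossed edges. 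The honest execution --- yours --- gives a coefficient of roughly $2$ on $\EE[R_{\min}]$ (with correspondingly adjusted additive terms), which is all that the downstream results need asymptotically; the literal constants $\tfrac12$, $\tfrac{128}{13}+\tfrac{8}{\ln 2}$ and $16+\tfrac{4}{\ln 2}$ in the displayed inequality do not appear to be attainable by this route, and the paper's own derivation of them is not sound at that point.
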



\noindent
Setting $t=n/4$, the left-hand side of \Cref{eq:MW-bound} is precisely the expected crossing number of the  edges $\{e_1, \dots, e_{n/4}\}$ returned by \nameref{subalgo}$\round{ (X, \S), E, a, b, \gamma, n/4 }$.
To bound the expectation in the right-hand side of \Cref{eq:MW-bound}, we use the following lemma.

\begin{lemma}
	\label{lemma:robust-short-edge}
	Let $(Y,\R)$ be a set system, $w: \R \to \RR_{\geq 0}$, and $\kappa$ be such that $Y$ has a perfect matching with crossing number at most $\kappa$ with respect to $\R$. Then there is an edge $\{x,y\}$ in $\binom{Y}{2}$ such that
	\[
		\sum\limits_{R \text{\emph{ crosses }} \{x,y\}} w(R) \leq \frac{2w(\R)\cdot \kappa}{|Y|}.
	\]
\end{lemma}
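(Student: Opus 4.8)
The plan is to use an averaging argument over the edges of the promised low-crossing perfect matching. Let $M = \{f_1, \dots, f_{\lfloor |Y|/2 \rfloor}\}$ be a perfect matching of $Y$ with crossing number at most $\kappa$ with respect to $\R$; that is, every $R \in \R$ crosses at most $\kappa$ edges of $M$. The idea is to bound the \emph{total} weight summed over all matching edges of the weight that crosses them, and then conclude that some edge does better than the average.

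First I would interchange the order of summation. Writing $w(f) := \sum_{R \text{ crosses } f} w(R)$ for a single edge $f$, I would compute
\[
\sum_{j} w(f_j) = \sum_{j} \sum_{R \text{ crosses } f_j} w(R) = \sum_{R \in \R} w(R) \cdot \bigl|\{ j : R \text{ crosses } f_j \}\bigr| \leq \sum_{R \in \R} w(R) \cdot \kappa = \kappa \, w(\R),
\]
where the inequality is exactly the crossing-number bound applied to each $R$. Since $M$ has $\lfloor |Y|/2 \rfloor \geq |Y|/2 - 1 \geq |Y|/3$ edges (for $|Y| \geq 3$; the cases $|Y| \le 2$ being trivial or needing a loop convention as in the paper), by averaging there is an index $j^\star$ with
\[
w(f_{j^\star}) \leq \frac{\kappa \, w(\R)}{\lfloor |Y|/2 \rfloor}.
\]
Taking $\{x,y\} = f_{j^\star}$ and bounding $\lfloor |Y|/2 \rfloor \geq |Y|/2$ when $|Y|$ is even (and absorbing the off-by-one when $|Y|$ is odd into the factor $2$) gives the claimed bound $\sum_{R \text{ crosses } \{x,y\}} w(R) \leq \frac{2\kappa \, w(\R)}{|Y|}$.

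There is no real obstacle here; the only point requiring a little care is the floor versus the clean factor of $2$ in the statement, i.e.\ checking that $\lfloor |Y|/2 \rfloor \geq |Y|/2$ whenever this is needed and otherwise that $1/\lfloor |Y|/2 \rfloor \leq 2/|Y|$ for all relevant $|Y|$ (which holds for $|Y| \geq 2$), together with handling the degenerate small cases consistently with the paper's perfect-matching-with-loop convention. The heart of the argument is simply the double-counting identity above combined with the definition of crossing number.
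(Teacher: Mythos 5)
Your proposal is correct and matches the paper's proof: both are the same double-counting/averaging (pigeonhole) argument over the edges of the promised low-crossing matching, with the paper simply writing $|M| = |Y|/2$ where you spell out the floor. The extra care you take with odd $|Y|$ is harmless but not needed in the paper's application, where $|Y|$ is even.
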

\begin{proof}
	Let $M$ be a perfect matching of $Y$ such that any set of $\R$ crosses  at most $\kappa$ edges of $M$. Then if we consider the weighted sum
	there are at most $w(\R) \cdot \kappa$ crossings  between the edges of $M$ and sets in $\R$ counted with weights. By the pigeonhole principle, there is an edge in $M$ that is crossed by sets of total weight at most
	\[
    \frac{w(\R) \cdot \kappa }{ |M|} =
	\frac{w(\R) \cdot \kappa }{|Y|/2} = \frac{ 2 w(\R) \kappa }{|Y|}
	\]
	sets of $\R$.
\end{proof}
Let $\tilde X_{n/4} \subset X$ denote the set of points that are not covered by the edges $\{e_1, \dots, e_{n/4}\}$. 
Note that $|\tilde{X}_{n/4}| = n/2$ and that since \nameref{mainalgo} calls \nameref{subalgo} with $E = \binom{X}{2}$, we have  $\tilde E_{n/4}  = \binom{\tilde{X}_{n/4}}{2}$. Moreover, since $(X,\S)$ satisfies \nameref{assumption}, there exists a perfect matching of $\tilde X_{n/4}$ with crossing number at most $a|{\tilde X_{n/4}}|^\gamma + b$. 
Applying Lemma \ref{lemma:robust-short-edge} to $ Y = \tilde X_{n/4}$ and  $\R = \{S_1, \dots, S_{n/4}\}$ with weights $w(S_i) = 1$, we get that there is an edge $e \in \tilde E_{n/4}$ that satisfies
\begin{equation}\label{eq:short-edge-withprobas}
	\sum\limits_{i=1}^{n/4} \Inc(e,S_i)
	\leq
    \frac{2\cdot \frac n 4  \round{a|\tilde X_{n/4}|^\gamma + b}}{|\tilde X_{n/4}|}
    = 2\cdot \frac n 4 \cdot \frac{ a(n/2)^\gamma + b}{n/2}   
    =
    a\round{\frac n 2 }^\gamma + b
    .
\end{equation}
Since \Cref{eq:short-edge-withprobas} holds for any choice of $\{S_1, \dots, S_{n/4}\}$ and $\tilde X_{n/4}$, we can conclude that
\begin{equation}\label{eq:expected-shortedge-inrest}
\EE\bracket{\min_{e \in \tilde E_{n/4}}\sum\limits_{i=1}^{n/4} \Inc(e,S_i)}
\leq 
a\round{\frac n 2 }^\gamma + b.
\end{equation}

\noindent
Now Equations \eqref{eq:MW-bound} and \eqref{eq:expected-shortedge-inrest} imply that the expected crossing number of the edges returned by \nameref{subalgo}$\round{ (X, \S), E, a, b, \gamma, n/4 }$ can be bounded as
\begin{align*}
	&\EE\bracket{
			\max\limits_{S \in \S} \sum\limits_{i=1}^{t} \Inc(e_i,S)}
		\leq
		a\round{\frac n 2}^\gamma + b
		+
		\frac{\round{\frac{128}{13} + \frac{8}{\ln 2}}\ln(|E|\cdot \frac n 4)}{3\proba}
		+
		\frac{\round{16 + \frac{4}{\ln 2}}\ln(|\S|\cdot \frac n 4)}{3\probaS }\\
		&~\leq
		a\round{\frac n 2}^\gamma + b
		+
		\frac{\round{\frac{128}{13} + \frac{8}{\ln 2}}\ln\round{\frac{|E|n}{4}}}{3\min\curly{\frac{\coeffPU}{an^\gamma +b}  \cdot \ln\round{\frac{|E|n}{4}},\, 1}}
		+
		\frac{\round{16 + \frac{4}{\ln 2}}\ln\round{\frac{|\S|n}{4}}}{3\min\curly{\frac{\coeffQU}{an^\gamma +b}  \cdot \ln\round{\frac{|\S|n}{4}},\, 1} }\\
		&~\leq
		a\round{\frac{n}{2}}^\gamma + b
		+
		\frac{\round{128 + \frac{104}{\ln 2}} \max\curly{\frac{an^\gamma +b}{\coeffPU}  ,~\ln\round{\frac{|E|n}{4}}}}{39} 
		+\frac{\round{16 + \frac{4}{\ln 2}} \max\curly{\frac{an^\gamma +b}{\coeffQU},~\ln\round{\frac{|\S|n}{4}}}}{3}  \\
		&~\leq
		a\round{\frac{n}{2}}^\gamma + b
		+
		\frac{\round{128 + \frac{104}{\ln 2}} \max\curly{\frac{an^\gamma +b}{\coeffPU} ,~\frac{3}{2}\ln\round{\frac{|\S|n}{4}}}}{39}
		+
		\frac{\round{16 + \frac{4}{\ln 2}}\max\curly{\frac{an^\gamma +b}{\coeffQU}  ,~\ln\round{\frac{|\S|n}{4}}} }{3}\\
		&=
		a\round{\frac{n}{2}}^\gamma + b
		+
		\round{\frac{128 + \frac{104}{\ln 2}}{26} + \frac{16 + \frac{4}{\ln 2} }{3}}\max\curly{\frac{an^\gamma +b}{\coeffQU} ,~\ln\round{\frac{|\S|n}{4}}}\\
		&\leq
		a\round{\frac{n}{2}}^\gamma + b
		+
		\max\curly{\frac{an^\gamma +b}{4} ,~18\ln\round{\frac{|\S|n}{4}}}.
\end{align*}

\noindent
Finally, we bound the number of membership Oracle calls. At each iteration $i=1, \dots, n/4$, we update the weights of at most $ \frac{n^2}{2} \proba +  m\probaS $ elements in expectation, each requiring one call to the membership Oracle.
Thus in expectation, the total number of membership Oracle calls is at most
\begin{align*}
	 \frac{n}{4} &\round{ \frac{n^2}{2}  \min\curly{\frac{\coeffPU \ln\frac{n^3}{4}}{an^\gamma + b},~ 1} + m  \min\curly{\frac{\coeffQU \ln\frac{mn}{4}}{an^\gamma + b},~ 1} } \\
	 &\leq \min\curly{ \frac 6 a \round{ n^{3-\gamma} \ln\frac{n^3}{4} + 3 mn^{1-\gamma} \ln\frac{mn}{4}},~ \frac{n^3 + 2mn}{8} }.
\end{align*}
Thus, we conclude that \Cref{thm:general-case-half-pts} is implied by \Cref{lemma:MW-bound}.
\qed

\subsubsection*{Proof of \Cref{lemma:MW-bound}.}\label{sec:proof-of-MW-lemma}
The proof is subdivided into three lemmas.
The first lemma is proved by examining the total weight of the sets of $\S$ in $\pi_{t+1}$.

\begin{lemma}\label{lemma:hyperplane-weights}
	\begin{equation*}\label{eq:hyperplane-weights}
		\EE\bracket{
			\max\limits_{S \in \S} \sum\limits_{i=1}^{t} \Inc(e_i,S)}
		\leq
		\frac{4}{3\ln 2}\cdot \sum\limits_{i=1}^t \EE\bracket{ \sum\limits_{S \in \S} \frac{\pi_{i} (S)}{\pi_{i}(\S)}   \Inc(e_i,S)}
		+
		\frac{\round{16 + \frac{4}{\ln 2}} \ln(|\S|t)}{3\probaS}
	\end{equation*}
\end{lemma}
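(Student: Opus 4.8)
The plan is to run the classical multiplicative-weights potential analysis on the weights $\pi_i$ over $\S$, correcting for the fact that in step $i$ only the randomly sampled subfamily $\S_i$ (each $S\in\S$ picked independently with probability $\probaS$) gets its weight doubled. For $S\in\S$ write $C_S:=\sum_{i=1}^{t}\Inc(e_i,S)$ for the quantity we ultimately want to control, and $D_S:=\sum_{i=1}^{t}\mathbb{1}[S\in\S_i]\,\Inc(e_i,S)$ for the number of times the weight of $S$ is actually doubled. Since $\pi_1(S)=1$ and set weights only double or stay put, $\pi_{t+1}(S)=2^{D_S}$, which yields the \emph{deterministic} inequality $\max_{S\in\S}D_S\le\log_2\pi_{t+1}(\S)$. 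I would first bound the right-hand side in expectation: setting $\hat Z_i:=\pi_i(\S)^{-1}\sum_{S\in\S_i}\pi_i(S)\Inc(e_i,S)\in[0,1]$ we have $\pi_{i+1}(\S)=\pi_i(\S)(1+\hat Z_i)$, and conditioning on the history $\mathcal F_i$ up to and including the choice of $e_i$ (so that $\S_i$ is still a fresh sample) gives $\EE[\hat Z_i\mid\mathcal F_i]=\probaS\cdot Y_i$, where $Y_i:=\sum_{S\in\S}\tfrac{\pi_i(S)}{\pi_i(\S)}\Inc(e_i,S)$. Telescoping with $\ln(1+x)\le x$ and taking expectations yields
\[
\EE\bracket{\log_2\pi_{t+1}(\S)}\ \le\ \log_2|\S|+\frac{\probaS}{\ln 2}\sum_{i=1}^{t}\EE[Y_i].
\]

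Next I would bridge the gap between $\max_S D_S$ and $\max_S C_S$ via concentration. For a fixed $S$, the indicators $\mathbb{1}[S\in\S_i]$ are independent $\mathrm{Bernoulli}(\probaS)$ coins drawn \emph{after} $e_i$ is fixed, hence independent of whether $S$ crosses $e_i$; restricting to the (random, adaptively determined) steps where $S$ does cross $e_i$ and invoking an optional-stopping argument at the $c$-th such step, on $\{C_S\ge c\}$ the number of those steps where $S$ is also sampled is exactly $\mathrm{Bin}(c,\probaS)$, so $\PP[C_S\ge c,\ D_S<(1-\delta)\probaS c]\le\PP[\mathrm{Bin}(c,\probaS)<(1-\delta)\probaS c]\le e^{-\delta^2\probaS c/2}$. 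A union bound over the $|\S|$ sets and over $c\ge\tau_0$ shows that the event $\mathcal B$ that some $S$ has $C_S\ge\tau_0$ yet $D_S<(1-\delta)\probaS C_S$ has probability $O\!\round{\tfrac{|\S|}{\delta^2\probaS}e^{-\delta^2\probaS\tau_0/2}}$, negligible once $\tau_0=\Theta\!\round{\tfrac{\ln(|\S|t)}{\delta^2\probaS}}$. On $\neg\mathcal B$, the set attaining $\max_S C_S$ satisfies either $\max_S C_S<\tau_0$ or $\max_S C_S\le\tfrac{1}{(1-\delta)\probaS}\max_S D_S\le\tfrac{1}{(1-\delta)\probaS}\log_2\pi_{t+1}(\S)$; taking expectations, plugging in the potential bound above, and estimating the contribution of $\mathcal B$ (where trivially $\max_S C_S\le t$) gives
\[
\EE\bracket{\max_{S\in\S}C_S}\ \le\ \tau_0+\frac{\EE[\log_2\pi_{t+1}(\S)]}{(1-\delta)\probaS}+t\,\PP[\mathcal B]\ \le\ \frac{1}{(1-\delta)\ln 2}\sum_{i=1}^{t}\EE[Y_i]+O\!\round{\frac{\ln(|\S|t)}{\probaS}}.
\]
Choosing $\delta=1/4$ makes the leading coefficient exactly $\tfrac{4}{3\ln 2}$, and a careful accounting of $\tau_0$, of the $\tfrac{\log_2|\S|}{(1-\delta)\probaS}$ term, and of the $\mathcal B$-term produces the claimed constant $\tfrac{16+4/\ln 2}{3}$ in the additive term.

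\textbf{Main obstacle.} The delicate step is the concentration estimate. Because the halving of edge weights in step $i$ uses the $\pi$-sampled set $S_i$, and the next edge $e_{i+1}$ is then drawn proportionally to the updated $\omega_{i+1}$, the crossing events $\{\Inc(e_i,S)=1\}$ are not independent across $i$, and even the count $C_S$ itself is influenced by which sets landed in earlier $\S_j$. Hence one cannot apply a Chernoff bound to $D_S$ directly; the point is to argue that, conditioned on the adaptive history, the ``was $S$ sampled this step'' indicators form an i.i.d.\ $\mathrm{Bernoulli}(\probaS)$ sequence independent of the crossing pattern, and to evaluate it at the stopping time ``the $c$-th crossing step of $S$''. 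Once this decoupling is justified, the remainder is the routine MWU potential bound together with optimization of the constants.
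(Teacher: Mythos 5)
Your proposal has the same skeleton as the paper's proof: a multiplicative-weights potential bound on $\pi_{t+1}(\S)$ (your telescoping step, including the factor $\probaS$ extracted via the independence of $\S_i$ from $\{\pi_i,e_i\}$, is identical to the paper's), followed by a martingale concentration step that converts $\max_S D_S$ into $(1-\delta)\probaS\max_S C_S$ minus an additive error. The one substantive difference is \emph{which} concentration inequality you use, and it costs you the stated constant. The paper applies an Azuma-type bound of combined multiplicative--additive form, $\PP\bracket{(1-\eps)X\ge Y+A}\le e^{-\eps A}$ (\Cref{lemma:Azuma}), whose exponent is \emph{linear} in the slack $\eps$; with $\eps=1/4$ and $A=4\ln(|\S|t)$ this contributes $\frac{16\ln(|\S|t)}{3\probaS}$ to the additive term. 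You instead use the pure multiplicative lower tail $\PP\bracket{\mathrm{Bin}(c,\probaS)<(1-\delta)\probaS c}\le e^{-\delta^2\probaS c/2}$, whose exponent is \emph{quadratic} in $\delta$, together with a union bound over thresholds $c\ge\tau_0$. Since $\delta=1/4$ is forced by the requirement that the leading coefficient equal $\frac{1}{(1-\delta)\ln 2}=\frac{4}{3\ln 2}$, your exponent is $\probaS c/32$, and the union bound over the $|\S|$ sets alone forces $\tau_0$ to be at least about $\frac{32\ln(|\S|t)}{\probaS}$ --- several times the entire claimed additive coefficient $\frac{16+4/\ln 2}{3}\approx 7.3$. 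So your route yields the lemma with $O\round{\frac{\ln(|\S|t)}{\probaS}}$ in place of the explicit constant, which suffices for the asymptotic statements downstream but does not prove the lemma as written; to recover the constant, replace the Chernoff-plus-union-over-$c$ step by the additive-form martingale inequality applied directly to $x_i=\Inc(e_i,S)\probaS$ and $y_i=\Inc(e_i,S)\bm{1}_{\{S\in\S_i\}}$.

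A smaller point: the assertion that, on $\{C_S\ge c\}$, the number of sampled steps among the first $c$ crossing steps ``is exactly $\mathrm{Bin}(c,\probaS)$'' is not literally correct. Whether $S\in\S_{T_k}$ feeds back into $\pi_{T_k+1}$, hence into the choice of $S_{T_k+1}$, the edge-weight updates, and therefore into the later crossing times $T_{k+1},\dots$; so the sampling indicators are not independent of the event $\{C_S\ge c\}$, and the conditional law of their sum need not be binomial. The exponential-supermartingale/optional-stopping argument you gesture at does repair this (each $\bm{1}_{\{S\in\S_{T_k}\}}$ is Bernoulli$(\probaS)$ conditioned on the history up to and including $e_{T_k}$, which is all the Chernoff computation needs), so this is presentational rather than substantive --- but it must be phrased as a stopped-martingale tail bound, not as a statement about a conditional binomial distribution.
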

\begin{proof}
	Let $\pi_{t+1} (\S)$ denote the total weight of the sets of $\S$ in $\pi_{t+1}$. We bound $\pi_{t+1} (\S)$ in two different ways. On the one hand, $\pi_{t+1}(\S)$ is clearly lower-bounded by the weight of the set of maximum weight in $\pi_{t+1}$.  Recall that the weight of a set $S$ is doubled in iteration $i$ if and only if $S \in \S_i$ \emph{and} $S$ crosses $e_i$, therefore
	\begin{equation*}
		\pi_{t+1}(\S)
		\geq
		\max\limits_{S \in \S}  \pi_{t+1}(S)
		=
		2^{\max\limits_{S \in \S} \sum\limits_{i=1}^{t} \Inc(e_i,S)
			\cdot \bm{1}_{\{S \in \S_i\}} },
	\end{equation*}
	where $\bm{1}_{\A}$ denotes the indicator random variable of the event $\A$.
	On the other hand, we can express $\pi_{t+1}(\S)$ using the update rule of the algorithm
	\begin{align*}
		\pi_{t+1}(\S)
		&=
		\sum\limits_{S \in \S} \pi_{t+1}(S)
		=
		\sum\limits_{S \in \S} \pi_{t} (S) \round{1 +   \Inc(e_t,S)\cdot \bm{1}_{\{S \in \S_t\}}}\\
		&=
		\sum\limits_{S \in \S} \pi_{t} (S) + \sum\limits_{S \in \S}   \pi_{t} (S)   \Inc(e_t,S)\cdot \bm{1}_{\{S \in \S_t\}}\\
		&=
		\pi_t(\S) +  \pi_{t}(\S)   \sum\limits_{S \in \S} \frac{\pi_{t} (S)}{\pi_{t}(\S)}   \Inc(e_t,S)\cdot \bm{1}_{\{S \in \S_t\}}\\
		&=
		\pi_{t}(\S) \round{ 1+  \sum\limits_{S \in \S} \frac{\pi_{t} (S)}{\pi_{t}(\S)}   \Inc(e_t,S)\cdot \bm{1}_{\{S \in \S_t\}}}.
	\end{align*}
	Unfolding this recursion and using the fact that $1 + a \leq \exp(a)$, we get
	\begin{align*}
		\pi_{t+1}(\S)
		&=
		\pi_1(\S) \prod\limits_{i=1}^{t} \round{ 1+  \sum\limits_{S \in \S} \frac{\pi_{i} (S)}{\pi_{i}(\S)}   \Inc(e_i,S)\cdot \bm{1}_{\{S \in \S_i\}}}\\
		&\leq
		|\S| \cdot \exp \round{  \sum\limits_{i=1}^{t} \sum\limits_{S \in \S} \frac{\pi_{i} (S)}{\pi_{i}(\S)}   \Inc(e_i,S)\cdot \bm{1}_{\{S \in \S_i\}}}.
	\end{align*}
	Putting together the obtained  upper and  lower bounds on $\pi_{t+1}(\S)$, we get
	\[
	2^{\max\limits_{S \in \S} \sum\limits_{i=1}^{t} \Inc(e_i,S)
		\cdot \bm{1}_{\{S \in \S_i\}} }
	\leq
	|\S| \cdot \exp \round{  \sum\limits_{i=1}^{t} \sum\limits_{S \in \S} \frac{\pi_{i} (S)}{\pi_{i}(\S)}   \Inc(e_i,S)\cdot \bm{1}_{\{S \in \S_i\}}} .
	\]
	Taking the logarithm of each side yields
	\begin{equation}\label{eq:hplane-weights-after-log}
		\ln(2) \cdot \max\limits_{S \in \S} \sum\limits_{i=1}^{t} \Inc(e_i,S) \cdot \bm{1}_{\{S \in \S_i\}}
		\leq
		\sum\limits_{i=1}^{t} \sum\limits_{S \in \S} \frac{\pi_{i} (S)}{\pi_{i}(\S)}   \Inc(e_i,S)\cdot \bm{1}_{\{S \in \S_i\}}
		+ \ln |\S| \,.
	\end{equation}
	If $\probaS = 1$, then $\bm 1_{\{S \in \S_i\}} = 1$ for all $i$ and  $S\in \S$, thus taking total expectation we conclude
	\begin{equation*}
		\EE\bracket{
			\max\limits_{S \in \S} \sum\limits_{i=1}^{t} \Inc(e_i,S)}
		\leq
		\frac{1}{\ln 2}\sum\limits_{i=1}^t \EE\bracket{ \sum\limits_{S \in \S} \frac{\pi_{i} (S)}{\pi_{i}(\S)}   \Inc(e_i,S)}
		+
		\frac{\ln|\S|}{\ln 2}.
	\end{equation*}

	\newcommand\epsprobaS{\frac{3\probaS}{4}} 
	\newcommand\pepsS{4} 
	\newcommand\onemepsS{\frac 3 4} 

	\noindent
	Assume that $\probaS < 1$.
		Since $\max f(x) - \max g(x) \leq \max (f(x) - g(x))$, \Cref{eq:hplane-weights-after-log} implies
	\begin{align*}
		 \ln(2) \cdot \onemepsS \cdot \max\limits_{S \in \S} \sum\limits_{i=1}^{t} \Inc(e_i,S)\cdot \probaS
		\leq
		&\ln(2) \cdot \max\limits_{S \in \S} \sum\limits_{i=1}^{t} \Inc(e_i,S)\cdot\round{\epsprobaS - \bm 1_{\{S \in \S_i\}} }
		\\
		&+\sum\limits_{i=1}^{t} \sum\limits_{S \in \S} \frac{\pi_{i} (S)}{\pi_{i}(\S)}   \Inc(e_i,S)\cdot \bm{1}_{\{S \in \S_i\}}
		+ \ln |\S| \,.
	\end{align*}
	Taking total expectation of each side, we obtain
	\begin{equation}\label{eq:plane-wt-after-exp-and-proba}
	\begin{aligned}
		\onemepsS \ln(2) \cdot\EE\bracket{ \max\limits_{S \in \S}\sum\limits_{i=1}^{t} \Inc(e_i,S)\cdot \probaS}
		\leq
		&\ln(2) \cdot \EE\bracket{\max\limits_{S \in \S} \sum\limits_{i=1}^{t} \Inc(e_i,S)\cdot\round{\epsprobaS - \bm 1_{\{S \in \S_i\}} }}
		\\
		&+\sum\limits_{i=1}^{t} \sum\limits_{S \in \S} \EE\bracket{\frac{\pi_{i} (S)}{\pi_{i}(\S)}   \Inc(e_i,S)\cdot \bm{1}_{\{S \in \S_i\}}}
		+ \ln |\S| \,.
	\end{aligned}
	\end{equation}


	Observe that  for each fixed $i$, the random variables $\{\pi_i,e_i\}$ and $\S_i$ are independent, thus
	\begin{equation}\label{eq:probaS-linearity-of-exp}
		\sum\limits_{i=1}^{t} \sum\limits_{S \in \S} \EE\bracket{\frac{\pi_{i} (S)}{\pi_{i}(\S)}   \Inc(e_i,S)\cdot \bm{1}_{\{S \in \S_i\}}}
		=
		 \probaS \cdot\sum\limits_{i=1}^{t} \sum\limits_{S \in \S} \EE\bracket{\frac{\pi_{i} (S)}{\pi_{i}(\S)}   \Inc(e_i,S)}.
	\end{equation}

	To bound the expectation of $\max\limits_{S \in \S} \sum\limits_{i=1}^{t} \Inc(e_i,S)\cdot\round{\epsprobaS - \bm 1_{\{S \in \S_i\}} }$, we will need the following Azuma-type inequality for martingales.

	\begin{lemma}[{\cite[Lemma 10]{KY14}}]\label{lemma:Azuma}
		Let $X = \sum_{i=1}^T x_i$ and $Y = \sum_{i=1}^T y_i$  be sums of non-negative random variables, where $T$ is a random stopping time with finite expectation, and, for all $i$, $|x_i - y_i | < 1$ and
		\[
			\EE \bracket{ x_i - y_i ~\bigg|~ \sum_{s < i} x_s, \sum_{s < i} y_s} \leq 0.
		\]
		Let $\eps \in [0,1]$ and $A \in \RR$, then
		\[
			\PP\bracket{ (1-\eps) X \geq Y+A } \leq \exp (-\eps A).
		\]
	\end{lemma}

	\begin{claim}\label{claim:martingale-bound}
	\[
		 \PP\bracket{\max\limits_{S \in \S} \sum\limits_{i=1}^{t} \Inc(e_i,S)\cdot\round{\epsprobaS - \bm 1_{\{S \in \S_i\}} }
		\geq 3 \ln(|\S|t)} \leq  \frac{1}{t}.
	\]
	\end{claim}

	\begin{proof}
		For each $i\in [1,t]$ and $S \in \S$, consider the random variables $x_i(S) = \Inc(e_i,S)\cdot \probaS$ and $y_i(S) = \Inc(e_i,S)\cdot\bm 1_{\{S \in \S_i\}} $, which are measurable with respect to $e_i$ and $\S_i$. For any $i$ and $S \in \S$, we have $|x_i(S) - y_i(S) | \leq 1$.
		 Since $\S_i$ is independent of $e_i$, $\sum_{k < i} x_k(S)$, and $\sum_{k < i} y_k(S)$, we have
			$$
				\EE\bracket{x_i(S) - y_i(S) ~\bigg|~ \sum_{k < i} x_k(S), \sum_{k < i} y_k(S)} = 0
			$$
		as $\EE\bracket{\probaS - \bm 1_{\{S \in \S_i\}}} =0$ for all $i \in [1,t]$ and $S \in \S$.

		\noindent
		Therefore, \Cref{lemma:Azuma} with $\eps = 1/\pepsS$, combined with the union bound implies for any $A \in \RR$,
			\begin{align*}
				 \PP\left(\max\limits_{S \in \S} \sum\limits_{i=1}^{t} \Inc(e_i,S)\cdot\round{\epsprobaS - \bm 1_{\{S \in \S_i\}} } \geq A\right) \leq |\S|\exp\left(-\frac{A}{\pepsS}\right).
			\end{align*}
			Setting $A =  \pepsS\ln(|\S|t)$, we conclude the proof of \Cref{claim:martingale-bound}.
	\end{proof}

	\noindent
	Applying \Cref{claim:martingale-bound} and using that $\sum_{i=1}^{t} \Inc(e_i,S)\cdot\round{\epsprobaS - \bm 1_{\{S \in \S_i\}}} \leq t$ always holds, we get

	\begin{equation}\label{eq:probaS-bounding-difference}
		\EE\bracket{\max\limits_{S \in \S} \sum\limits_{i=1}^{t} \Inc(e_i,S)\cdot\round{\epsprobaS - \bm 1_{\{S \in \S_i\}} }}
		\leq
		 \pepsS\ln(|\S|t)
		 +
		 t \cdot \frac 1 t
		 \leq
		 \pepsS \ln(|\S|t) + 1.
		 \end{equation}
	Hence Equations \eqref{eq:plane-wt-after-exp-and-proba}, \eqref{eq:probaS-linearity-of-exp}, and \eqref{eq:probaS-bounding-difference} imply
	\begin{align*}
	&\frac{3 \ln 2}{4}
	 \probaS \cdot
	\EE\bracket{
		\max\limits_{S \in \S} \sum\limits_{i=1}^{t} \Inc(e_i,S)}\leq
	 \sum\limits_{i=1}^t \probaS\cdot\EE\bracket{ \sum\limits_{S \in \S} \frac{\pi_{i} (S)}{\pi_{i}(\S)}   \Inc(e_i,S)}
	+ \ln(2) \cdot\round{ \pepsS \ln(|\S|t)+ 1}
	+
	\ln |\S|.
	\end{align*}
	Dividing both sides by $\frac{\probaS\cdot 3 \ln 2}{4}$, we obtain 
	\begin{align*}
	\EE\bracket{
		\max\limits_{S \in \S} \sum\limits_{i=1}^{t} \Inc(e_i,S)}
	&\leq
	\frac{4}{3\ln 2}\cdot \sum\limits_{i=1}^t \EE\bracket{\sum\limits_{S \in \S} \frac{\pi_{i} (S)}{\pi_{i}(\S)}   \Inc(e_i,S)}
	+
	\frac{16 \ln(|\S|t)+ 4 	+4\frac{\ln |\S|}{\ln 2}}{3\probaS }\\
	 &\leq
	\frac{4}{3\ln 2}\cdot \sum\limits_{i=1}^t \EE\bracket{\sum\limits_{S \in \S} \frac{\pi_{i} (S)}{\pi_{i}(\S)}   \Inc(e_i,S)}
	+
	\frac{\round{ 16 + \frac{4}{\ln 2}} \ln(|\S|t)}{3\probaS }.
		\end{align*}
	This concludes the proof of \Cref{lemma:hyperplane-weights}.
	\end{proof}

	\newcommand\epsproba{\frac{3\proba}{16}} 
	\newcommand\peps{\frac{16}{13}} 

The next lemma is proven by applying analogous arguments for the total weight of edges in $\omega_{t+1}$ with a small adjustment as in each iteration we set some edge weights to zero. Recall that $\tilde E_t$ denotes the set of edges that have non-zero when \nameref{subalgo}$\round{(X,\S),E,a,b,\gamma,t}$ terminates, in other words, $\tilde E_t$ is the set of edges that have non-zero weight in $\omega_{t+1}$.
\begin{lemma}\label{lemma:edge-weights}
	\begin{equation*}
		\sum\limits_{i=1}^{t} \sum\limits_{e \in E} \EE\bracket{\frac{\omega_{i} (e)}{\omega_{i}(E)}   \Inc(e,S_i)}
	< \frac{3\ln 2}{8}  \cdot \EE\bracket{\min\limits_{e\in \tilde E_t} \sum\limits_{i=1}^{t} \Inc(e,S_i)}
	+   \frac{\round{\frac{32\ln 2}{13} + 2}  \ln(|E|t) }{\proba}.
	\end{equation*}
\end{lemma}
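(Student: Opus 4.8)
The plan is to run the argument of \Cref{lemma:hyperplane-weights} on the total \emph{edge}-weight $\omega_{t+1}(E)$ in place of the set-weight $\pi_{t+1}(\S)$, the only new feature being that in each round the weights of $e_i$ and its incident edges are reset to zero. Write $\Inc_i(e) := \Inc(e,S_i)\bm{1}_{\{e\in E_i\}}$. An edge $e$ that survives to the end (i.e.\ $e\in\tilde E_t$) is never reset and has its weight halved exactly once for every $i$ with $e\in E_i$ and $S_i$ crossing $e$, so $\omega_{t+1}(e)=2^{-\sum_{i\le t}\Inc_i(e)}$ and hence
\[
\omega_{t+1}(E)\;\ge\;\max_{e\in\tilde E_t}\omega_{t+1}(e)\;=\;2^{-\min_{e\in\tilde E_t}\sum_{i=1}^{t}\Inc_i(e)} .
\]
For the upper bound, the point where the edge case differs from the set case is that, since zeroing $e_i$ and its incident edges can only \emph{decrease} the total weight,
\[
\omega_{i+1}(E)\;\le\;\sum_{e\in E}\omega_i(e)\bigl(1-\tfrac12\Inc_i(e)\bigr)\;=\;\omega_i(E)\Bigl(1-\tfrac12\sum_{e\in E}\tfrac{\omega_i(e)}{\omega_i(E)}\Inc_i(e)\Bigr)\;\le\;\omega_i(E)\exp\Bigl(-\tfrac12\sum_{e\in E}\tfrac{\omega_i(e)}{\omega_i(E)}\Inc_i(e)\Bigr),
\]
using $1-x\le e^{-x}$; unfolding with $\omega_1(E)=|E|$ and combining with the lower bound, taking logarithms yields the deterministic inequality
\[
\sum_{i=1}^{t}\sum_{e\in E}\frac{\omega_i(e)}{\omega_i(E)}\Inc_i(e)\;\le\;2\ln|E|\;+\;2\ln 2\cdot\min_{e\in\tilde E_t}\sum_{i=1}^{t}\Inc_i(e).
\]

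Next I would take expectations. On the left-hand side, because $E_i$ is drawn only after $\omega_i,e_i,S_i$ are fixed, conditioning on the past gives $\EE[\sum_e\tfrac{\omega_i(e)}{\omega_i(E)}\Inc_i(e)] = \proba\,\EE[\sum_e\tfrac{\omega_i(e)}{\omega_i(E)}\Inc(e,S_i)]$, so after summing over $i$ the left-hand side becomes $\proba$ times exactly the quantity the lemma bounds. For the right-hand side I would argue, mirroring \Cref{claim:martingale-bound}, that $\EE[\min_{e\in\tilde E_t}\sum_i\Inc_i(e)]\le \epsproba\,\EE[\min_{e\in\tilde E_t}\sum_i\Inc(e,S_i)]+O(\ln(|E|t))$: write $\bm{1}_{\{e\in E_i\}}=\epsproba+(\bm{1}_{\{e\in E_i\}}-\epsproba)$, use $\min_e(f+g)\le\min_e f+\max_e g$ to peel off $\epsproba\min_{e\in\tilde E_t}\sum_i\Inc(e,S_i)$, and control the residual maximum over edges with \Cref{lemma:Azuma} applied edgewise (with the parameter $\eps$ tuned so that $(1-\eps)\proba=\epsproba$, i.e.\ $\eps = 1-1/\peps$), a union bound over the at most $|E|$ edges, deviation parameter $\peps\ln(|E|t)$ so that the failure probability is at most $1/t$, and the trivial a-priori bound $t$. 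Plugging both estimates into the displayed inequality, dividing by $\proba$, bounding $\ln|E|\le\ln(|E|t)$, and collecting the numerical constants $\epsproba,\peps$ gives the claim; the case $\proba=1$ is handled in the same way, all indicators being $1$.

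The step I expect to be the real obstacle is precisely the last one: driving the coefficient of $\min_{e\in\tilde E_t}\sum_i\Inc(e,S_i)$ down to $\epsproba=\tfrac{3\proba}{16}$. The one-line bound $\min_{e\in\tilde E_t}\sum_i\Inc_i(e)\le\sum_i\Inc_i(e^\ast)$ for $e^\ast$ the minimiser of $\sum_i\Inc(e,S_i)$ only gives coefficient $\proba$ (since $\sum_i\Inc_i(e^\ast)$ concentrates about $\proba\sum_i\Inc(e^\ast,S_i)$), which is too weak to feed into \Cref{lemma:MW-bound}; so one genuinely needs either the split-plus-martingale device above tuned to the pre-sampling rate, or a sharper lower bound on $\omega_{t+1}(E)$ that exploits that \emph{many} surviving edges have small crossing count (via \Cref{lemma:robust-short-edge}) rather than just the single heaviest one. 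Two points make this delicate: the surviving set $\tilde E_t$ and the counts $\sum_i\Inc(e,S_i)$ are themselves functions of $e_1,\dots,e_t$, hence of the $E_i$'s, so the martingale must condition on the correct past and the union bound must be uniform over all of $E$; and the numerical split has to be chosen so that, after the $1-x\le e^{-x}$ step and the $1/\proba$ rescaling and the downstream combination with \Cref{lemma:hyperplane-weights}, the final coefficient is exactly $\tfrac{3\ln2}{8}$. The remaining steps are routine adaptations of the set-weight argument.
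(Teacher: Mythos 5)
Your proposal is correct and follows essentially the same route as the paper: lower-bound $\omega_{t+1}(E)$ by the heaviest surviving edge, upper-bound it via the update rule (zeroing only helps), take logarithms, extract the factor $\proba$ on the left by independence of $E_i$ from $\{\omega_i,S_i\}$, and then split $\bm{1}_{\{e\in E_i\}}$ into $\epsproba$ plus a centered residual controlled by the Azuma-type inequality with a union bound over $E$ and the trivial bound $t$ — exactly the paper's argument, including your correct diagnosis that the naive coefficient $\proba$ is too weak. The only blemish is the tuning formula "$\eps=1-1/\peps$": to have $(1-\eps)\proba=\epsproba=\tfrac{3\proba}{16}$ and failure probability $\exp(-\eps\cdot\peps\ln(|E|t))=1/(|E|t)$ one needs $\eps=13/16=1/\peps$, a minor arithmetic slip that does not affect the structure of the argument.
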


\begin{proof}
	Let $\omega_{t+1} (E)$ denote the total weight of edges in $\omega_{t+1}$. Again, we lower-bound $\omega_{t+1} (E)$ by the largest edge-weight in $\omega_{t+1}$, which is now attained at some edge of $\tilde E_t$:
	\begin{align*}
		\omega_{t+1}(E)
		\geq
		\max\limits_{e \in E}  \omega_{t+1}(e)
		=
		\max\limits_{e \in \tilde E_t}  \omega_{t+1}(e)
		=
		\round{\frac 1 2 }^{\min\limits_{e \in \tilde E_t} \sum\limits_{i=1}^{t} \Inc(e,S_i) \cdot \bm{1}_{\{e \in E_i\}}} .
	\end{align*}
	The upper bound is obtained by using the algorithm's weight update rule. Since  $e_t$ has positive weight in $\omega_t$, but its weight in $\omega_{t+1}$ is set to $0$, we have a strict inequality
	\begin{align*}
		\omega_{t+1}(E)
		&=
		\sum\limits_{e \in E} \omega_{t+1}(e)
		<
		\sum\limits_{e \in E} \omega_{t} (e) \round{1 - \frac 1 2  \Inc(e,S_t)\cdot \bm{1}_{\{e \in E_t\}}}\\
		&=
		\sum\limits_{e \in E} \omega_{t} (e) - \frac 1 2\sum\limits_{e \in E}   \omega_{t} (e)   \Inc(e,S_t)\cdot \bm{1}_{\{e \in E_t\}}\\
		&=
		\omega_t(E) \round{
			1
			-
			\frac 1 2 \sum\limits_{e \in E} \frac{\omega_{t} (e)}{\omega_{t}(E)}   \Inc(e,S_t)\cdot \bm{1}_{\{e \in E_t\}}
		}.
	\end{align*}
	Unfolding this recursion and using the fact that $ 1+a \leq \exp(a)$, we get
	\begin{align*}
		\omega_{t+1}(E)
		&<
		|E| \cdot
		\exp \round{
			-
			\frac 1 2 \sum\limits_{i=1}^t \sum\limits_{e \in E} \frac{\omega_{i} (e)}{\omega_{i}(E)}   \Inc(e,S_i)\cdot \bm{1}_{\{e \in E_i\}}
		}.
	\end{align*}
	Combining the obtained upper and the lower bounds on $\omega_{t+1}(E)$ and taking the logarithm of each side, we get

	\begin{equation*}
		\ln\round{ \frac 1 2} \cdot \min\limits_{e\in \tilde E_t} \sum\limits_{i=1}^{t} \Inc(e,S_i)\cdot \bm{1}_{\{e \in E_i\}}
		<
		-
		\frac 1 2 \sum\limits_{i=1}^t \sum\limits_{e \in E} \frac{\omega_{i} (e)}{\omega_{i}(E)}   \Inc(e,S_i)\cdot \bm{1}_{\{e \in E_i\}}
		+
		\ln |E|,
	\end{equation*}
	which is equivalent to
	\begin{equation}\label{eq:edge-weights-after-log}
		\sum\limits_{i=1}^t \sum\limits_{e \in E} \frac{\omega_{i} (e)}{\omega_{i}(E)}   \Inc(e,S_i)\cdot \bm{1}_{\{e \in E_i\}}
		<
		2 \ln(2) \cdot \min\limits_{e\in \tilde E_t} \sum\limits_{i=1}^{t} \Inc(e,S_i)\cdot \bm{1}_{\{e \in E_i\}}
		+
		2\ln |E|.
	\end{equation}
	If $\proba = 1$, then $\bm 1_{\{e\in E_i\}} = 1$ for all $i$ and $e \in E$, thus taking total expectation we conclude
	\begin{equation*}
	\sum\limits_{i=1}^t \EE\bracket{ \sum\limits_{e \in E} \frac{\omega_{i} (e)}{\omega_{i}(E)}   \Inc(e,S_i)}
		<
		2 \ln(2) \cdot
		\EE\bracket{
			\min\limits_{e\in \tilde E_t} \sum\limits_{i=1}^{t}  \Inc(e,S_i)}
		+
		2\ln |E|.
	\end{equation*}

	\noindent
	Assume that $\proba< 1$. Since $\min f(x) - \min g(x) \leq \max (f(x) - g(x))$, \Cref{eq:edge-weights-after-log} implies
	\begin{align*}
		\sum\limits_{i=1}^t \sum\limits_{e \in E} \frac{\omega_{i} (e)}{\omega_{i}(E)}   \Inc(e,S_i)\cdot \bm{1}_{\{e \in E_i\}}
		<~~
		& 2\ln(2)\cdot \max\limits_{e  \in \tilde E_t} \sum\limits_{i=1}^{t} \Inc(e,S_i)\cdot\round{\bm 1_{\{e \in E_i\}} -\epsproba }
		\\
		&+2\ln(2) \cdot\min\limits_{e\in \tilde E_t} \sum\limits_{i=1}^{t} \Inc(e,S_i)\cdot \epsproba
		+ 2\ln |E| \,.
	\end{align*}

	\noindent
	Taking total expectation of each side, and using that for each fixed $i$, the random variables $\{\omega_i, S_i\}$ and $E_i$ are independent, we get

	\begin{equation}\label{eq:edge-wt-after-exp-and-proba}
	\begin{aligned}		\proba\cdot \sum\limits_{i=1}^{t} \sum\limits_{e \in E} \EE\bracket{\frac{\omega_{i} (e)}{\omega_{i}(E)}   \Inc(e,S_i)}
		<~~
		&2\ln(2)\cdot \EE\bracket{\max\limits_{e  \in \tilde E_t} \sum\limits_{i=1}^{t} \Inc(e,S_i)\cdot\round{ \bm 1_{\{e \in E_i\}} - \epsproba }}\\
		&~+2\ln(2)\cdot    \EE\bracket{\min\limits_{e\in \tilde E_t} \sum\limits_{i=1}^{t} \Inc(e,S_i)\cdot\epsproba}
		+ 2\ln |E| \,.
	\end{aligned}
	\end{equation}
	We need the following claim whose proof uses \Cref{lemma:Azuma} and is similar to \Cref{claim:martingale-bound}.
	\begin{claim}
	\begin{align*}
		\PP\bracket{ \max\limits_{e \in \tilde E_t} \sum\limits_{i=1}^{t} \Inc(e,S_i)\cdot\round{ \bm 1_{\{e  \in E_i\}} - \epsproba }
		\geq \peps  \ln(|E|t)}  \leq \frac{1}{t}.
	\end{align*}
	\end{claim}
	This, together with the fact that $\sum\limits_{i=1}^{t} \Inc(e,S_i)\cdot\round{\bm 1_{\{e  \in E_i\}}- \epsproba } \leq t$ always holds imply
	\begin{align*}
		\EE\bracket{\max\limits_{e \in \tilde E_t} \sum\limits_{i=1}^{t} \Inc(e,S_i)\cdot\round{\bm 1_{\{e  \in E_i\}}} - \epsproba }
		\leq
		 \peps  \ln(|E|t)
		 +
		 t \cdot \frac 1 t
		 \leq
		 \peps  \ln(|E|t) + 1.
		 \end{align*}
	Hence \Cref{eq:edge-wt-after-exp-and-proba} yields
	\begin{align*}
	 &\sum\limits_{i=1}^{t} \proba \cdot\sum\limits_{e \in E} \EE\bracket{\frac{\omega_{i} (e)}{\omega_{i}(E)}   \Inc(e,S_i)}\\
	&~< \frac{6\ln 2}{16}  \cdot \EE\bracket{\min\limits_{e\in \tilde E_t} \sum\limits_{i=1}^{t} \Inc(e,S_i)\cdot \proba}
	+   2\ln(2) \cdot \round{\peps  \ln(|E|t) + 1} 	+	2\ln |E|.
	\end{align*}
	Dividing both sides by $\proba$, we get
\begin{align*}
	 \sum\limits_{i=1}^{t} \sum\limits_{e \in E} \EE\bracket{\frac{\omega_{i} (e)}{\omega_{i}(E)}   \Inc(e,S_i)}
	< \frac{3\ln 2}{8}  \cdot \EE\bracket{\min\limits_{e\in \tilde E_t} \sum\limits_{i=1}^{t} \Inc(e,S_i)}
	+   \frac{\round{\frac{32\ln 2}{13} + 2}  \ln(|E|t) }{\proba}.
	\end{align*}
\end{proof}

\noindent
We need one more lemma to tie the previous two together.

\begin{lemma}\label{lemma:double-counting}
	For any $i \in [1,t]$, we have
	\begin{equation*}
		\EE\bracket{  \sum\limits_{S \in \S} \frac{\pi_{i} (S)}{\pi_{i}(\S)}   \Inc(e_i,S)}
		=
		\EE\bracket{  \sum\limits_{e \in E} \frac{\omega_{i} (e)}{\omega_{i}(E)}   \Inc(e,S_i)}.
	\end{equation*}
\end{lemma}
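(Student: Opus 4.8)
The plan is to prove the identity by conditioning on the history of the algorithm up to the beginning of iteration $i$ and then invoking the conditional independence of $e_i$ and $S_i$, together with the symmetry of $\Inc$. Concretely, let $\mathcal H_{i-1}$ denote the $\sigma$-algebra generated by all random choices made in iterations $1,\dots,i-1$ (the sampled edges, sampled sets, colors, and the subsamples $E_1,\S_1,\dots,E_{i-1},\S_{i-1}$). By construction of \nameref{subalgo}, the weight functions $\omega_i$ and $\pi_i$ are $\mathcal H_{i-1}$-measurable, and conditionally on $\mathcal H_{i-1}$ the choices $e_i\sim\omega_i$ and $S_i\sim\pi_i$ are made independently, with $\PP[e_i = e\mid \mathcal H_{i-1}] = \omega_i(e)/\omega_i(E)$ and $\PP[S_i = S\mid \mathcal H_{i-1}] = \pi_i(S)/\pi_i(\S)$.

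With this setup, I would compute the conditional expectation of the left-hand side: since $\pi_i(S)/\pi_i(\S)$ is $\mathcal H_{i-1}$-measurable, linearity of expectation gives
\[
\EE\bracket{\sum_{S\in\S}\frac{\pi_i(S)}{\pi_i(\S)}\Inc(e_i,S)~\bigg|~\mathcal H_{i-1}}
= \sum_{S\in\S}\frac{\pi_i(S)}{\pi_i(\S)}\,\EE\bracket{\Inc(e_i,S)\mid\mathcal H_{i-1}}
= \sum_{S\in\S}\sum_{e\in E}\frac{\pi_i(S)}{\pi_i(\S)}\frac{\omega_i(e)}{\omega_i(E)}\Inc(e,S).
\]
Running the same argument on the right-hand side, using that $\omega_i(e)/\omega_i(E)$ is $\mathcal H_{i-1}$-measurable and $\EE[\Inc(e,S_i)\mid\mathcal H_{i-1}] = \sum_{S}\tfrac{\pi_i(S)}{\pi_i(\S)}\Inc(e,S)$, yields
\[
\EE\bracket{\sum_{e\in E}\frac{\omega_i(e)}{\omega_i(E)}\Inc(e,S_i)~\bigg|~\mathcal H_{i-1}}
= \sum_{e\in E}\sum_{S\in\S}\frac{\omega_i(e)}{\omega_i(E)}\frac{\pi_i(S)}{\pi_i(\S)}\Inc(e,S).
\]
The two resulting double sums coincide: the summands are identical up to reordering (here one uses that $\Inc(e,S)$ depends only on the unordered incidence of $e$ and $S$), and both are finite sums so Fubini applies trivially. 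Taking total expectation over $\mathcal H_{i-1}$ on both displays and using the tower property then gives the claimed equality.

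I do not anticipate a real obstacle here; the only point requiring care is bookkeeping the conditioning correctly — namely checking that $\omega_i$ and $\pi_i$ are indeed determined by $\mathcal H_{i-1}$ (so that they behave as constants inside the conditional expectation) and that $e_i$ and $S_i$ are conditionally independent given $\mathcal H_{i-1}$, which is exactly how the two sampling steps are written in \Cref{algo:match-half}. Once that is in place, the statement is a one-line double-counting identity.
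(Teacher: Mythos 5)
Your proof is correct and follows essentially the same route as the paper: condition on the $\sigma$-algebra of the first $i-1$ iterations (the paper's $F_{i-1}$), use that $\omega_i$ and $\pi_i$ are measurable with respect to it and that $e_i\sim\omega_i$ and $S_i\sim\pi_i$ are drawn (conditionally independently) from these weights, identify both conditional expectations with the same finite double sum, and apply the tower property. No gaps.
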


\begin{proof}

 Let  $F_{i} = \sigma\round{e_1,\dots, e_{i}, S_1, \dots, S_i, E_1, \dots, E_i, \S_1, \dots \S_i}.$ We have
\begin{align*}
 	\EE \bracket{ \sum\limits_{S \in \S} \frac{\pi_{i} (S)}{\pi_{i}(\S)}   \Inc(e_i,S) }
 	&=
 	\EE\bracket{\EE \bracket{ \sum\limits_{S \in \S} \frac{\pi_{i} (S)}{\pi_{i}(\S)}   \Inc(e_i,S) ~\bigg|~ F_{i-1}}  } \quad \text{and}\\
 \EE \bracket{  \sum\limits_{e \in E} \frac{\omega_{i} (e)}{\omega_{i}(E)}   \Inc(e,S_i) }
 	&=
 	\EE\bracket{\EE \bracket{  \sum\limits_{e \in E} \frac{\omega_{i} (e)}{\omega_{i}(E)}   \Inc(e,S_i) ~\bigg|~ F_{i-1}}  }.
\end{align*}
  Observe that $\omega_i$ and $\pi_i$ are measurable with respect to $F_{i-1}$, thus
	\begin{align*}
		&\EE\bracket{ \sum\limits_{S \in \S} \frac{\pi_{i} (S)}{\pi_{i}(\S)}   \Inc(e_i,S) ~\bigg|~ F_{i-1}}
		=
		\sum\limits_{e \in E} \frac{\omega_{i} (e)}{\omega_{i}(E)} \cdot  \round{ \sum\limits_{S \in \S} \frac{\pi_{i} (S)}{\pi_{i}(\S)}   \Inc(e,S)}\\
		&=
		\sum\limits_{e \in E} \sum\limits_{S \in \S} \frac{\omega_{i} (e)}{\omega_{i}(E)} \cdot    \frac{\pi_{i} (S)}{\pi_{i}(\S)}   \Inc(e,S)\\
		&=
		\sum\limits_{S \in \S} \frac{\pi_{i} (S)}{\pi_{i}(\S)}  \cdot \round{  \sum\limits_{e \in E}    \frac{\omega_{i} (e)}{\omega_{i}(E)}   \Inc(e,S) }
		=
		\EE\bracket{ \sum\limits_{e \in E} \frac{\omega_{i} (e)}{\omega_{i}(E)}   \Inc(e,S_i) ~\bigg|~F_{i-1}}.
	\end{align*}
\end{proof}
\noindent
Finally, we combine  Lemmas \ref{lemma:hyperplane-weights}, \ref{lemma:edge-weights}, and \ref{lemma:double-counting} in the following way

\begin{align*}
		&\EE\bracket{
			\max\limits_{S \in \S} \sum\limits_{i=1}^{t} \Inc(e_i,S)}
		\leq
		\frac{4}{3\ln 2}\cdot \sum\limits_{i=1}^t \EE\bracket{ \sum\limits_{S \in \S} \frac{\pi_{i} (S)}{\pi_{i}(\S)}   \Inc(e_i,S)}
		+
		\frac{\round{16 + \frac{4}{\ln 2}}\ln(|\S|t)}{3\probaS }\\
		&~=
		\frac{4}{3\ln 2}\cdot \sum\limits_{i=1}^t \EE\bracket{  \sum\limits_{e \in E} \frac{\omega_{i} (e)}{\omega_{i}(E)}   \Inc(e,S_i)}
		+
		\frac{\round{16 + \frac{4}{\ln 2}}\ln(|\S|t)}{3\probaS }\\
		&~<
		\frac 1 2\cdot\EE\bracket{\min\limits_{e\in \tilde E} \sum\limits_{i=1}^{t} \Inc(e,S_i)}
		+
		\frac{\round{\frac{128}{13} + \frac{8}{\ln 2}}\ln(|E|t)}{3\proba}
		+
		\frac{\round{16 + \frac{4}{\ln 2}}\ln(|\S|t)}{3\probaS }
\end{align*}

\noindent
This conludes the proof of the \Cref{lemma:MW-bound} and thus completes the proof of \Cref{thm:main-matching-result}.
\qed

\subsection{Proof of \nameref{thm:main}}\label{sec:mainthm-proof}

We will prove a more general statement of \nameref{thm:main} using \nameref{assumption}:

 \begin{theorem}\label{thm:main-disc-result}
    Let $(X,\S)$ be a set system that satisfies \nameref{assumption}.
   	The algorithm \nameref{discalgo}$\round{(X,\S), a,b,\gamma}$ constructs a coloring $\chi$ of $X$ of with expected discrepancy at most
   \[
   3\sqrt{ \frac{a n^\gamma\ln m}{\gamma}  + \frac{b\ln m\log n}{2} + 12 \ln^2 m \log n},
   \]
   with an expected number of Oracle calls at most
		\[
		\min\curly{  \frac{24n^{3-\gamma} \ln n}{a}  + \frac{18mn^{1-\gamma}\ln mn}{a}\cdot\min\curly{\frac{2}{1-\gamma},~ \log n} ,~ \frac{n^3}{7} + \frac{mn}{2}}.
		\]
\end{theorem}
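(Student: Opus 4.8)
The plan is to read off the discrepancy guarantee from the low-crossing matching that the algorithm produces. \nameref{discalgo}$\round{(X,\S),a,b,\gamma}$ selects a sequence of vertex-disjoint pairs $e_1,\dots,e_{\lceil n/2\rceil}$ of $X$ — a perfect matching $M$, with a single loop if $n$ is odd — by the same sampling/reweighing process as \nameref{mainalgo}, and then colors $X$ along $M$: for each $e_j=\curly{x_j,y_j}$ it draws an independent fair sign $\sigma_j\in\curly{-1,+1}$ and sets $\chi(x_j)=\sigma_j$, $\chi(y_j)=-\sigma_j$ (the $O(\log n)$ leftover or loop vertices are colored arbitrarily). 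Two facts come for free. First, the coloring coins are independent of everything used to build $M$, so $M$ has exactly the distribution of the output of \nameref{mainalgo}$\round{(X,\S),a,b,\gamma}$; by \Cref{thm:main-matching-result} this bounds the expected crossing number $\EE[\kappa(M,\S)]$ by $\tfrac{3a}{\gamma}n^\gamma+\tfrac{3b\log n}{2}+18\ln(mn)\log n$. Second, assigning colors along $M$ makes no membership queries (the endpoints of each $e_j$ are explicit), so the expected number of Oracle calls of \nameref{discalgo} equals that of \nameref{mainalgo}, i.e.\ the claimed $\min\curly{\cdots}$.

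For the discrepancy I would condition on $M$ and set $\kappa=\kappa(M,\S)$. Fix $S\in\S$. An edge of $M$ with both endpoints inside $S$, or both outside, contributes $0$ to $\chi(S)=\sum_{x\in S}\chi(x)$, while each of the at most $\kappa$ edges crossing $S$ contributes $+\sigma_j$ or $-\sigma_j$. Since the $e_j$ are vertex-disjoint and the $\sigma_j$ are i.i.d.\ Rademacher variables independent of $M$, conditionally on $M$ the value $\chi(S)$ is a sum of at most $\kappa$ independent $\pm1$'s — up to an additive term of size $O(\log n)$ coming from the arbitrarily colored leftover/loop vertices, which I carry along and eventually absorb into lower-order slack. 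The main estimate is then a standard sub-Gaussian maximal inequality: via Hoeffding's mgf bound $\EE[e^{\pm s\chi(S)}\mid M]\le e^{s^2\kappa/2}$ and a union bound over the $2m$ variables $\pm\chi(S)$, one gets $\EE\bracket{e^{s\,\disc_\S(\chi)}\mid M}\le 2m\,e^{s^2\kappa/2}$, and optimizing $s$ yields $\EE\bracket{\disc_\S(\chi)\mid M}\le\sqrt{2\kappa\ln(2m)}+O(\log n)$. Equivalently, one can bound $\EE[\disc_\S(\chi)^2\mid M]$ by $\kappa\bigl(2\ln(2m)+O(1)\bigr)+O(\log^2 n)$ and take a square root.

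Finally I would take expectations over $M$ and apply Jensen's inequality in the form $\EE\sqrt{\cdot}\le\sqrt{\EE(\cdot)}$ (concavity of the square root), obtaining $\EE\bracket{\disc_\S(\chi)}\le\sqrt{2\ln(2m)\,\EE[\kappa(M,\S)]}+O(\log n)$, and substitute the bound on $\EE[\kappa]$ from \Cref{thm:main-matching-result}. Squaring gives a bound of the shape $2\ln(2m)\bigl(\tfrac{3a}{\gamma}n^\gamma+\tfrac{3b\log n}{2}+18\ln(mn)\log n\bigr)$ plus lower-order terms, and the remaining work is to check this is at most $9\bigl(\tfrac{a n^\gamma\ln m}{\gamma}+\tfrac{b\ln m\log n}{2}+12\ln^2 m\log n\bigr)$, the square of the target. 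Here I would invoke \nameref{assumption}, which provides $m\ge n$ and $m\ge 34$, hence $\ln(2m)\le\tfrac{3}{2}\ln m$ and $\ln(mn)\le 2\ln m$ with room to spare, so that the $\ln(2m)$-versus-$\ln m$ and $\ln(mn)$-versus-$\ln m$ gaps together with the $O(\log n)$ and $O(\kappa)$ lower-order contributions all fit inside the generous $12\ln^2 m\log n$ term. I expect no conceptual obstacle — the only real input, \Cref{thm:main-matching-result}, is already in hand — and the fiddly part is purely this constant bookkeeping, together with being careful that the conditioning is legitimate (the coloring randomness is independent of the matching) and that Jensen is used in the right direction.
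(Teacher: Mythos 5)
Your proposal is correct and follows essentially the same route as the paper: bound the expected crossing number of the matching via \Cref{thm:main-matching-result}, convert crossing number to discrepancy by a sub-Gaussian/union-bound argument (the paper packages this as \Cref{lemma:matching-to-disc}, giving $\sqrt{3\kappa\ln m}$ for $m\ge 34$, which matches your $\sqrt{2\kappa\ln(2m)}$ up to the same constant check), and pull the expectation inside the square root by Jensen, after which $\ln(mn)\le 2\ln m$ (from $m\ge n$ in \nameref{assumption}) absorbs everything into the $12\ln^2 m\log n$ term. One tiny inaccuracy that doesn't affect anything: the leftover from \nameref{mainalgo} is at most $3$ vertices (one random edge plus possibly a loop), not $O(\log n)$, so the additive slack you carry is actually $O(1)$.
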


\noindent The algorithm \nameref{discalgo} is presented in \Cref{algo:main-discalgo-compressed}.
It is easy to check that \nameref{thm:main} follows immediately from
   \Cref{thm:main-disc-result} by substituting $a = \frac{(2c)^{1/d} }{2\ln 2(1-1/d)}$, $b = \frac{\ln m}{\ln 2}$, and $\gamma = 1-1/d$.

\begin{algorithm}[ht]
\algotitle{\textsc{LowDiscColor}}{discalgo}
\caption{  \textsc{LowDiscColor}$\big((X,\S),  a, b, \gamma\big)$}
	\label{algo:main-discalgo-compressed}

	$n \leftarrow |X|$

	$ \curly{e_1, \dots, e_{\ceil{n/2}}} \leftarrow \nameref{mainalgo}\big((X,\S),  a, b, \gamma\big)$ \tcp*{see \Cref{algo:main}}

	\For{ $i = 1, \dots, \ceil{n/2}$}{ 

	$\{x_i,y_i\} \leftarrow \mathrm{endpoints}\round{e_i}$

	$\chi(x_i) = \begin{cases} 1 &\text{ with probability } 1/2 \\ -1 &\text{ with probability } 1/2\end{cases}$

	$\chi(y_i) = - \chi (x_i)$ \tcp*{we skip this step if $y_i = x_i$}

	}
	
	\textbf{return} $\chi$
	
\end{algorithm}

\noindent 
We will prove \Cref{thm:main-disc-result} using \Cref{thm:main-matching-result} and  the following lemma.

\begin{restatable}{lemma}{matchtodiscLemma}\label{lemma:matching-to-disc}
	Let $(X,\S)$ be a set system, $n = |X|$, $m = |\S|\geq 34$, and let $M$ be a perfect matching of $X$ with crossing number $\kappa$ with respect to $\S$ and for each edge $\{x,y\} \in M$ define
	 \[
	 \chi_M(x) = \begin{cases} 1 &\text{ with probability } 1/2 \\ -1 &\text{ with probability } 1/2 \end{cases}
	 \] 
	 and $\chi_M(y) = -\chi_M(x)$.
	Then the expected discrepancy of $\chi_M$ is at most $  \sqrt{3\kappa \ln m} $.
\end{restatable}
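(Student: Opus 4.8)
The statement is a standard concentration bound dressed up so as to produce the explicit constant $3$. The plan is: for a fixed range $S$, observe that $\chi_M(S)$ is a sum of independent Rademacher signs, one per edge of $M$ crossed by $S$, of which there are at most $\kappa$; then combine the resulting sub-Gaussian tail with a union bound over the $m$ ranges, routed through the exponential moment of the maximum so that the factor $2$ from $\max|\cdot|$ can be absorbed into $m$.

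\emph{Step 1 (single range).} Fix $S \in \S$ and write $\chi_M(S) = \sum_{e \in M} Z_e$, where for $e = \{x,y\}$ we set $Z_e = \chi_M(x) + \chi_M(y)$. If $S$ contains both or neither of $x,y$ then $Z_e = 0$; if $S$ crosses $e$ then $Z_e = \chi_M(x) + \chi_M(y) \in \{-1,+1\}$ with probability $1/2$ each. For distinct edges of $M$ these signs are independent, so $\chi_M(S)$ is a sum of $c_S$ independent Rademacher variables, where $c_S \le \kappa$ is the number of edges of $M$ crossed by $S$. (If $n$ is odd and $M$ contains a loop $\{x,x\}$, it is treated identically: it contributes the single sign $\chi_M(x)$ to $\chi_M(S)$ exactly when $x \in S$, i.e.\ exactly when $S$ crosses it.) Consequently, for any $\theta > 0$,
\[
\EE\!\left[e^{\theta \chi_M(S)}\right] = (\cosh\theta)^{c_S} \le e^{\theta^2 c_S/2} \le e^{\theta^2 \kappa/2},
\]
using $\cosh\theta \le e^{\theta^2/2}$ and $c_S \le \kappa$.

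\emph{Step 2 (union over ranges).} Using $e^{\theta|a|} \le e^{\theta a} + e^{-\theta a}$ (valid for $\theta>0$ and all real $a$), Jensen's inequality, and the bound of Step~1,
\[
e^{\theta\,\EE[\disc_\S(\chi_M)]} \le \EE\!\left[\max_{S \in \S} e^{\theta|\chi_M(S)|}\right] \le \sum_{S \in \S}\Bigl(\EE\!\left[e^{\theta \chi_M(S)}\right] + \EE\!\left[e^{-\theta \chi_M(S)}\right]\Bigr) \le 2m\, e^{\theta^2 \kappa/2}.
\]
Taking logarithms gives $\EE[\disc_\S(\chi_M)] \le \ln(2m)/\theta + \theta\kappa/2$, and the choice $\theta = \sqrt{2\ln(2m)/\kappa}$ yields $\EE[\disc_\S(\chi_M)] \le \sqrt{2\kappa\ln(2m)}$.

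\emph{Step 3 (cleaning up the constant).} Since $m \ge 34 \ge 4$, we have $2\ln 2 \le \ln m$, hence $2\ln(2m) = 2\ln 2 + 2\ln m \le 3\ln m$, and therefore $\EE[\disc_\S(\chi_M)] \le \sqrt{3\kappa\ln m}$, as claimed (the case $\kappa = 0$ being trivial). There is no genuine obstacle in this argument; the only point needing a little care is obtaining the constant $3$ rather than a generic $O(\sqrt{\kappa\ln m})$, which is precisely why I pass through the exponential moment of the maximum and fold the factor $2$ into $m$ using $m \ge 4$, instead of a cruder tail-integration estimate that would only give this constant once $m$ exceeds a larger absolute threshold.
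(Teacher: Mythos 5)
Your proof is correct, and the substantive part --- decomposing $\chi_M(S)$ into at most $\kappa$ independent Rademacher signs, one per edge of $M$ crossed by $S$ --- is exactly the paper's Step 1. Where you diverge is in converting the per-range sub-Gaussianity into a bound on $\EE[\max_S|\chi_M(S)|]$: the paper applies the Alon--Spencer tail bound $\PP[|\chi_M(S)|>\alpha]\le 2e^{-\alpha^2/2\kappa}$, takes a union bound, and integrates the tail via Fubini, obtaining $\sqrt{2\kappa\ln(2m)}+\sqrt{\kappa/(2\ln(2m))}$ and then needing $m\ge 34$ to absorb the second term into $\sqrt{3\kappa\ln m}$; you instead pass through $e^{\theta\,\EE[\disc]}\le \EE[e^{\theta\disc}]\le 2m\,e^{\theta^2\kappa/2}$ and optimize $\theta$, getting the cleaner $\sqrt{2\kappa\ln(2m)}$ with no lower-order correction, so that $m\ge 4$ already suffices for the constant $3$. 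Your route is slightly slicker and explains why the hypothesis $m\ge 34$ is not actually needed for your argument (it is needed for the paper's tail-integration route); both are standard and both are valid here.

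One notational slip worth fixing: as written, $Z_e=\chi_M(x)+\chi_M(y)$ is identically $0$ because the two endpoints always receive opposite colors. What you mean (and what your case analysis describes) is the contribution of $e$ to $\chi_M(S)=\sum_{z\in S}\chi_M(z)$, i.e.\ $Z_e=\chi_M(x)\bm{1}_{\{x\in S\}}+\chi_M(y)\bm{1}_{\{y\in S\}}$, which is $0$ when $S$ contains both or neither endpoint and a single Rademacher sign when $S$ crosses $e$. With that definition everything else in your argument goes through verbatim.
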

\begin{remark*}
A `high probability version' of \Cref{lemma:matching-to-disc} is well-known \cite[Lemma 2.5]{MWW93} and implies the above bound through basic probabilistic calculations, see Appendix for the precise proof.
\end{remark*}

\subsection*{Proof of \Cref{thm:main-disc-result}}

Let $M$ be the matching returned by \nameref{mainalgo}$\round{(X,\S), a,b, \gamma}$ during the run of \nameref{discalgo}$\round{(X,\S), a,b, \gamma}$ and let $\kappa(M)$ denote its crossing number with respect to $\S$. By \Cref{thm:main-matching-result},
\[
	\EE\bracket{\kappa(M)} \leq \frac{3a}{\gamma}  n^\gamma + \frac{3b\log n}{2} + 18 \ln \round{nm}\log n
\]
Using \Cref{lemma:matching-to-disc}, taking total expectation over the matchings returned by \nameref{mainalgo}, and applying Jensen's inequality, we get
\[
	\EE\bracket{\disc_{ \S } \round{\chi_M}} \leq \EE\bracket{\sqrt{3\kappa (M) \ln m}}
	\leq \sqrt{3\EE\bracket{\kappa (M)} \ln m}\,.
\]
Therefore, the expected discrepancy of the coloring returned by \nameref{discalgo}$\round{(X,\S), a,b, \gamma}$ is at most
\[
	\sqrt{3\round{\frac{3a}{\gamma}  n^\gamma + \frac{3b\log n}{2} + 18 \ln \round{nm}\log n} \ln m}\,.
\]
Each call of the membership Oracle is performed during the call of \nameref{mainalgo}, thus the bound on the expected number of membership Oracle calls follows immediately from \Cref{thm:main-matching-result}. This concludes the proof of \Cref{thm:main-disc-result} and thus of \nameref{thm:main}. \qed

\subsection{Proof of Corollaries \ref{cor:epsapproximations} and \ref{cor:vcdim-apx-result-inverted-dualvc}}

\subsubsection*{Proof of \Cref{cor:epsapproximations}}
The problems of low-discrepancy colorings and $\eps$-approximations are naturally connected:  finding a set of $|X|/2$ elements with low approximation error is essentially equivalent to finding a low-discrepancy coloring of $(X,\S)$: 

\begin{restatable}[{\cite[Lemma 2.1]{MWW93}}]{lemma}{disctoapx}\label{lemma:disc-to-apx}
	Let $(X,\S)$ be a set system with $|X| = n$, $X \in \S$ and let $\chi$ be a coloring with discrepancy $\disc_\S(\chi) = \Delta$ and let $A \subset X$ be a set of $\ceil{n/2}$ elements from the larger color class of $\chi$. Then $A$ is a $\round{2\Delta/n}$-approximation of $(X,\S)$.
\end{restatable}

\noindent
One can obtain lower order approximations by iteratively halving the point-set along a low-discrepancy colorings. The final approximation error can be bound  using \Cref{lemma:disc-to-apx} and the following basic fact. 
\begin{fact}\label{fact:apx-of-apx}
If $A_1$ is an $\eps_1$-approximation of $(X,\S)$ and $A_2$ is an $\eps_2$-approximation of $(A_1, \S|_{A_1})$, then $A_2$ is an $(\eps_1 + \eps_2)$-approximation of $(X,\S)$.
\end{fact}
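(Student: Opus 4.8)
The plan is a one-line triangle inequality argument, once the definitions are unwound, so I do not expect any genuine obstacle here — the only thing to be careful about is the bookkeeping with the restricted set system. Fix an arbitrary range $S \in \S$; the goal is to bound $\bigl| \tfrac{|S|}{|X|} - \tfrac{|A_2 \cap S|}{|A_2|} \bigr|$ by $\eps_1 + \eps_2$. First I would apply the hypothesis that $A_1$ is an $\eps_1$-approximation of $(X,\S)$ to the range $S$, which gives $\bigl| \tfrac{|S|}{|X|} - \tfrac{|A_1 \cap S|}{|A_1|} \bigr| \le \eps_1$.

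Next I would apply the hypothesis that $A_2$ is an $\eps_2$-approximation of $(A_1, \S|_{A_1})$. The point requiring a moment's care is that the relevant range of the restricted set system $\S|_{A_1}$ corresponding to $S$ is $S \cap A_1$, and that since $A_2$ is a subset of the ground set $A_1$ we have $A_2 \cap (S \cap A_1) = A_2 \cap S$. Hence the $\eps_2$-approximation guarantee reads $\bigl| \tfrac{|S \cap A_1|}{|A_1|} - \tfrac{|A_2 \cap S|}{|A_2|} \bigr| \le \eps_2$.

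Finally, summing the two inequalities and using $|u - w| \le |u - v| + |v - w|$ with $v = \tfrac{|A_1 \cap S|}{|A_1|}$ yields $\bigl| \tfrac{|S|}{|X|} - \tfrac{|A_2 \cap S|}{|A_2|} \bigr| \le \eps_1 + \eps_2$; since $S \in \S$ was arbitrary, this is precisely the assertion that $A_2$ is an $(\eps_1+\eps_2)$-approximation of $(X,\S)$. As noted, there is no real difficulty — the only step one should not gloss over is the identity $A_2 \cap S = A_2 \cap (S \cap A_1)$, which is exactly where the nesting $A_2 \subseteq A_1$ is used.
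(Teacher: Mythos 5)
Your proof is correct; the paper states this as a basic fact without proof, and your triangle-inequality argument (with the observation that $A_2 \cap (S \cap A_1) = A_2 \cap S$ because $A_2 \subseteq A_1$) is exactly the standard justification it implicitly relies on.
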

\noindent
These ideas yield the following corollary of \Cref{thm:main-disc-result}, which immediately implies \Cref{cor:epsapproximations} by substituting $a = \frac{(2c)^{1/d} }{2\ln 2(1-1/d)}$, $b = \frac{\ln m}{\ln 2}$, and $\gamma = 1-1/d$.

\begin{corollary}\label{thm::main-apx-result}
    Let $(X,\S)$ be a set system that satisfies \nameref{assumption} and let $\eps  \in (0,1) $.
    Then
     \nameref{apxalgo}$\big((X,\S),  a, b, \gamma, \eps \big)$ returns a set $A \subset X$ of size at most
     \[
     2 \max\curly{ \round{30\sqrt{\frac{a\ln m }{\gamma}}\cdot \frac{1}{\eps}}^\frac{2}{2-\gamma}, \frac{12 \sqrt{\round{\frac b 2 + 12\ln m}\ln m\log n }}{\eps}} +1,
     \]
     with expected approximation guarantee
    $
    	\EE[\eps(A,X,\S)] \leq \eps,
    $
    and with an expected
    \[
    	\min\curly{ \frac{8n^{3-\gamma}\ln n}{a} + \frac{18 mn^{1-\gamma} \ln(mn)}{a}  \min\curly{\frac{4}{(1-\gamma)^2}, \log^2 n},~ \frac{n^3}{49}  + \frac{mn}{2}  }
    \]
    calls to the membership Oracle of $(X,\S)$.
\end{corollary}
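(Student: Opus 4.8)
The plan is to obtain $A$ by iteratively halving the ground set along low-discrepancy colorings: \nameref{apxalgo} produces a chain $X = X_0 \supset X_1 \supset \dots \supset X_k = A$ where, at step $j$, it runs \nameref{discalgo}$\round{X_j, \S|_{X_j}, a, b, \gamma}$ and lets $X_{j+1}$ consist of $\ceil{n_j/2}$ points from the larger color class, so that $n_j := |X_j| = \ceil{n_{j-1}/2}$ is a deterministic sequence with $n_j < n/2^j + 1$. Before starting we add $X$ to $\S$ (this at most doubles $\pi^*_\S$ and increases $m$ by one, hence is absorbed into the constants), so $X_j \in \S|_{X_j}$ and \Cref{lemma:disc-to-apx} applies at every level. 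Two facts let us invoke \Cref{thm:main-disc-result} on $(X_j,\S|_{X_j})$: (i) \nameref{assumption} is hereditary --- ``every $Y\subseteq X$ admits a perfect matching with crossing number $\le a|Y|^\gamma+b$'' restricts verbatim to the subsystem (the crossing number is unchanged under restriction), so $(X_j,\S|_{X_j})$ satisfies \nameref{assumption} with the same parameters $(a,b,\gamma)$; (ii) the number $m_j$ of distinct ranges induced on $X_j$ satisfies $m_j\le m$ and $\log n_j\le\log n$, so \Cref{thm:main-disc-result} yields expected discrepancy $\EE[\Delta_j]\le 3\sqrt{\tfrac{a n_j^\gamma\ln m}{\gamma}+\tfrac{b\ln m\log n}{2}+12\ln^2 m\log n}$ for the $j$-th coloring. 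By \Cref{lemma:disc-to-apx}, $X_{j+1}$ is a $\round{2\Delta_j/n_j}$-approximation of $(X_j,\S|_{X_j})$, so \Cref{fact:apx-of-apx} gives, pointwise over the randomness, $\eps(A,X,\S)\le\sum_{j=0}^{k-1}2\Delta_j/n_j$, and taking expectations, $\EE[\eps(A,X,\S)]\le\sum_{j=0}^{k-1}2\EE[\Delta_j]/n_j$.

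The crux is that this sum is dominated, up to a constant, by its bottom level. Splitting the square root,
\[
\frac{2\,\EE[\Delta_j]}{n_j}\ \le\ 6\sqrt{\tfrac{a\ln m}{\gamma}}\;n_j^{-(2-\gamma)/2}\ +\ 6\sqrt{\round{\tfrac b2+12\ln m}\ln m\log n}\;n_j^{-1},
\]
and since $n_j$ at least halves at each step, the sequences $n_j^{-(2-\gamma)/2}$ and $n_j^{-1}$ grow geometrically with ratios $2^{(2-\gamma)/2}$ and $2$, both bounded away from $1$; hence each of the two series over $j=0,\dots,k-1$ is within a constant factor of its $j=k-1$ term, which is comparable to $n_k=|A|$ through $n_{k-1}\le 2n_k$. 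Fixing $k$ as the first level at which this upper bound on $\EE[\eps(A,X,\S)]$ falls to $\eps$ --- equivalently, choosing $n_k$ just large enough that each of the two contributions is at most $\eps/2$, which pins down $n_k$ up to the geometric-series constants --- yields, after tracking those constants, $|A|\le 2\max\curly{\round{30\sqrt{\tfrac{a\ln m}{\gamma}}\cdot\tfrac1\eps}^{\frac{2}{2-\gamma}},\ \tfrac{12\sqrt{\round{\tfrac b2+12\ln m}\ln m\log n}}{\eps}}+1$, the outer factor $2$ and the ``$+1$'' absorbing the one-level overshoot $n_{k-1}=2n_k$ together with the ceilings. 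If the computed target already exceeds $n/2$, \nameref{apxalgo} simply returns $X$, or $\ceil{n/2}$ arbitrary points, directly.

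For the oracle-call count, \Cref{thm:main-disc-result} bounds the $j$-th invocation of \nameref{discalgo} by $\min\curly{\tfrac{24 n_j^{3-\gamma}\ln n_j}{a}+\tfrac{18 m n_j^{1-\gamma}\ln(mn_j)}{a}\min\curly{\tfrac{2}{1-\gamma},\log n_j},\ \tfrac{n_j^3}{7}+\tfrac{mn_j}{2}}$; summing over $j=0,\dots,k-1$, using $\ln n_j\le\ln n$ and $\ln(mn_j)\le\ln(mn)$, that $\sum_j n_j^{3-\gamma}$ and $\sum_j n_j^3$ are geometric with ratio $<1$, and that $\sum_j n_j^{1-\gamma}\log n_j$ produces the factor $\min\{4/(1-\gamma)^2,\log^2 n\}$, gives the stated $\min\curly{\tfrac{8 n^{3-\gamma}\ln n}{a}+\tfrac{18 m n^{1-\gamma}\ln(mn)}{a}\min\curly{\tfrac{4}{(1-\gamma)^2},\log^2 n},\ \tfrac{n^3}{49}+\tfrac{mn}{2}}$. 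I expect the main obstacle to be essentially \emph{bookkeeping}: keeping the constants in the two-regime error split and in the geometric sums tight enough to land exactly the claimed size and oracle bounds, and handling the last few levels where $m_j$ may fall below $n_j$ or below $34$, so that \nameref{assumption} is not literally satisfied --- there $m_j$ is bounded by an absolute constant and a single random (or directly analyzed) coloring already achieves discrepancy $O(\sqrt{n_j\ln m_j})$, which only improves the estimates.
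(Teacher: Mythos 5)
Your proposal is correct and follows essentially the same route as the paper's proof in the Appendix: iterated halving via \nameref{discalgo}, applying \Cref{lemma:disc-to-apx} and \Cref{fact:apx-of-apx} level by level, splitting the discrepancy bound into its two terms, summing the resulting geometric series, and choosing the stopping level $j$ so that each contribution is at most $\eps/2$; the oracle-call count is summed the same way. Your explicit handling of the $X\in\S$ hypothesis of \Cref{lemma:disc-to-apx} and of the degenerate last levels is a small technical refinement the paper glosses over, not a different argument.
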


\noindent
The precise analysis of the `halving process' (used in the algorithm \nameref{apxalgo} and in the deduction of \Cref{thm::main-apx-result} from \Cref{thm:main-disc-result}) is well-known, therefore we only present it in the Appendix (\Cref{sec:main-apx-proof}).
\qed

\begin{algorithm}[ht]
\algotitle{\textsc{Approximate}}{apxalgo}
\caption{  \textsc{Approximate}$\big((X,\S),  a, b, \gamma, \eps\big)$}
	\label{algo:main-apx}

	$A_0 \leftarrow X$

	\BlankLine

	$j = \floor{\log|X| +\min\curly{ \frac{2}{2-\gamma} \log \frac{\eps\sqrt{\gamma}}{30\sqrt{a\ln(|\S|)}},   \log \frac{\eps}{12 \sqrt{\round{\frac b 2 + 12\ln(|\S|)}\ln(|\S|)\log|X| }}} }$

	\BlankLine

	\For{$i = 1, \dots, j$}{

	$\chi\leftarrow \nameref{discalgo}\big((A_{i-1},\S|_{A_{i-1}}),  a, b, \gamma\big)$

	$A_i \leftarrow \chi^{-1}(1)$

	}

	\BlankLine

	\textbf{return} $A_j$

\end{algorithm}

\subsubsection*{Proof of \Cref{cor:vcdim-apx-result-inverted-dualvc}}
For set systems with bounded VC-dimension, one can obtain small-sized $\eps$-approximations via uniform sampling:

\begin{theorem}[{\cite[Theorem 8.3.23]{vershynin2018high}}]\label{thm:uniform-apx-bound-in-expectation}
	There is a universal constant $\capx$ such that for any constant $\dvc$ and any set system $(X,\S)$ with VC-dimension at most $\dvc$, a uniform random sample $A$ of $X$ satisfies
	\[
    	\EE[\eps(A,X,\S)] \leq  \sqrt{\frac{\capx\cdot \dvc}{|A|}}.
    \]
\end{theorem}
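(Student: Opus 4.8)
The plan is to read $\eps(A,X,\S)$ as the supremum of an empirical process and to bound its expectation by the Rademacher complexity of the indicator class $\mathcal{F} = \curly{\bm{1}_S : S \in \S}$, which for a VC class of dimension $\dvc$ is $O\round{\sqrt{\dvc/|A|}}$. Write $k = |A|$ and regard $A = (a_1,\dots,a_k)$ as $k$ i.i.d.\ draws from the uniform measure $\mu$ on $X$; then $\frac{|A\cap S|}{k} = \frac1k\sum_{i=1}^k \bm{1}_S(a_i)$ and $\frac{|S|}{|X|} = \EE_\mu \bm{1}_S$, so
\[
\eps(A,X,\S) = \sup_{S\in\S}\abs{\frac1k\sum_{i=1}^k \bm{1}_S(a_i) - \EE_\mu \bm{1}_S}.
\]
First I would apply the standard symmetrization inequality: introducing an independent ghost sample $(a_1',\dots,a_k')$ and Rademacher signs $\sigma_1,\dots,\sigma_k$,
\[
\EE\bracket{\eps(A,X,\S)} \leq 2\,\EE\bracket{\sup_{S\in\S}\abs{\frac1k\sum_{i=1}^k \sigma_i\,\bm{1}_S(a_i)}}.
\]
Writing $\mathcal{R}_k$ for this last expectation, it remains to show $\mathcal{R}_k \leq \frac12\sqrt{\capx\dvc/k}$ for a suitable universal constant $\capx$.

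The bound on $\mathcal{R}_k$ is carried out conditionally on $a_1,\dots,a_k$. A crude union bound does not suffice: the trace of $\S$ on the sample has at most $\pi_\S(k) \leq (ek/\dvc)^{\dvc}$ distinct sets, so Massart's finite-class lemma only yields $\mathcal{R}_k = O\round{\sqrt{\dvc\ln(k/\dvc)/k}}$, which carries a spurious $\sqrt{\log k}$ factor. To obtain the logarithm-free rate I would instead use Dudley's entropy integral,
\[
\mathcal{R}_k \leq \frac{C}{\sqrt k}\int_0^1 \sqrt{\ln N\round{\mathcal{F}, \|\cdot\|_{L_2(\mu_k)}, t}}\, dt,
\]
where $\mu_k$ is the empirical measure on $a_1,\dots,a_k$ and $N(\cdot)$ is the $L_2(\mu_k)$-covering number, together with Haussler's packing bound for VC classes, $N\round{\mathcal{F}, \|\cdot\|_{L_2(\mu_k)}, t} \leq \round{C'/t}^{C''\dvc}$ for $0 < t \leq 1$, which holds uniformly in $\mu_k$ (and the class has $L_2(\mu_k)$-diameter at most $1$ since the functions are $\{0,1\}$-valued). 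Plugging this in, $\int_0^1\sqrt{C''\dvc\ln(C'/t)}\,dt = O(\sqrt{\dvc})$ converges, hence $\mathcal{R}_k = O(\sqrt{\dvc/k})$; taking expectation over $a_1,\dots,a_k$ and absorbing all universal constants into $\capx$ completes the proof.

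The main obstacle is precisely the passage from the elementary $O\round{\sqrt{\dvc\ln(k/\dvc)/k}}$ estimate to the sharp $\sqrt{\dvc/k}$ bound: symmetrization together with Sauer--Shelah and Massart loses a $\sqrt{\log k}$, and removing it genuinely requires the chaining argument driven by Haussler's covering-number bound (equivalently, the sharp form of the VC inequality). If the weaker logarithmic bound were acceptable --- as it essentially is for the applications in \Cref{cor:vcdim-apx-result-inverted-dualvc} --- the argument would be very short. For the sampling-without-replacement variant one additionally observes that the relevant empirical process is, if anything, more concentrated (the indicators are negatively associated), so the same bound goes through; alternatively one reduces to the i.i.d.\ case.
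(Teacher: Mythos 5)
The paper gives no proof of this statement—it is quoted directly from \cite[Theorem 8.3.23]{vershynin2018high}—and your argument (symmetrization, then Dudley's entropy integral combined with the Haussler/Pajor covering-number bound for VC classes to remove the spurious $\sqrt{\log k}$ from the naive Sauer--Shelah--Massart estimate) is precisely the proof given in that cited source. Your proposal is correct and, modulo the routine handling of sampling without replacement that you already flag, matches the standard argument.
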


\noindent
Let $A_0$ be a uniform random sample of 
$$
\frac{4\capx\dvc}{\eps^{2}}
$$
elements from $X$. 
By \Cref{thm::main-apx-result}, the algorithm \nameref{apxalgo}$\big((A_0, \S|_{A_0}), a,b,\gamma, \eps/2 \big)$ returns a set $A_1$ with $\EE[\eps(A_1,A_0,\S_{A_0})] \leq \eps/2$. 
By \Cref{thm:uniform-apx-bound-in-expectation}, $A_0$ satisfies $\EE[\eps(A_0,X,\S)] \leq \eps/2$, and thus, by \Cref{fact:apx-of-apx},
\[
	\EE[\eps(A_1,X,\S)] \leq \EE[\eps(A_1,A_0,\S|_{A_0})] + \EE[\eps(A_0,X,\S)] \leq \eps.\\
\]

\noindent
The resulting guarantees are summarized in the next corollary.

\begin{corollary}\label{cor:vcdim-apx-result-inverted}
    Let $(X,\S)$ be a set system that satisfies \nameref{assumption}, let $\dvc$ denote the VC-dimension of $(X,\S)$, and let $\eps  \in (0,1) $.
    Let $A_0$ be a uniform random sample of $\frac{\capx\dvc}{(\eps/2)^2}$ elements from $X$.
    Then
     \nameref{apxalgo}$\big((A_0,\S|_{A_0}),  a, b, \gamma, \eps/2 \big)$ returns a set $A \subset X$ of size at most
     \[
     2 \max\curly{ \round{30\sqrt{\frac{a\ln |\S|_{A_0}| }{\gamma}}\cdot \frac{2}{\eps}}^\frac{2}{2-\gamma}, \frac{24 \sqrt{\round{\frac b 2 + 12\ln  |\S|_{A_0}|}\ln  |\S|_{A_0}|\log  |A_0| }}{\eps}} +1,
     \]
     with expected approximation guarantee
    $
    	\EE[\eps(A,X,\S)] \leq \eps,
    $
    and with an expected
    \begin{align*}
    	\min\Bigg\{ &\frac{8|A_0|^{3-\gamma}\ln |A_0|}{a} + \frac{18 |\S|_{A_0}||A_0|^{1-\gamma} \ln\round{|\S|_{A_0}||A_0|}}{a}  \min\curly{\frac{4}{(1-\gamma)^2}, \log^2 |A_0|},\\ &\frac{|A_0|^3}{49}  + \frac{|\S|_{A_0}||A_0|}{2}  \Bigg\}
    \end{align*}
    calls to the membership Oracle of $(X,\S)$.
\end{corollary}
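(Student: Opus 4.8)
The plan is to prove \Cref{cor:vcdim-apx-result-inverted} as a two-stage composition, exactly along the lines of the computation preceding its statement. In the first stage I would replace $X$ by a plain uniform sample $A_0$ of $\frac{\capx\dvc}{(\eps/2)^2}$ elements; since the running time of \nameref{apxalgo} is polynomial in the size of its input ground set, this collapses the dependence on $n$ to a dependence on $|A_0| = \Theta(\dvc/\eps^2)$. In the second stage I would run \nameref{apxalgo} on $(A_0,\S|_{A_0})$ with target error $\eps/2$. The size of $A_0$ is tuned so that $\sqrt{\capx\dvc/|A_0|}=\eps/2$, hence \Cref{thm:uniform-apx-bound-in-expectation} gives $\EE[\eps(A_0,X,\S)]\le\eps/2$, and \Cref{fact:apx-of-apx} will glue the two stages into a total error of $\eps$.

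Before invoking \Cref{thm::main-apx-result} I would check that the restricted system $(A_0,\S|_{A_0})$ still satisfies \nameref{assumption}. The matching property transfers for free: for $x,y\in A_0$, a range $S\in\S$ crosses $\{x,y\}$ if and only if $S\cap A_0$ does, so every $Y\subseteq A_0$ inherits a perfect matching with crossing number at most $a|Y|^\gamma+b$ with respect to $\S|_{A_0}$ (with the same $a$, $\gamma$, and the same $b$, which is still a valid upper bound since $|\S|_{A_0}|\le|\S|$). One also needs the side conditions $|\S|_{A_0}|\ge|A_0|$ and $|\S|_{A_0}|\ge 34$ (and $A_0\in\S|_{A_0}$ for the internal use of \Cref{lemma:disc-to-apx}, which costs at most one extra range); these hold in the regime of interest where $\eps$ is small relative to $\dvc$, and I would dispose of the remaining boundary cases by noting that the claimed size bound is then $\Omega(|A_0|)$, so returning $A_0$ itself already satisfies the statement.

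Next I would apply \Cref{thm::main-apx-result} to $(A_0,\S|_{A_0})$ with parameters $a,b,\gamma$ and error $\eps/2$, obtaining $A\subset A_0$ with $\EE[\eps(A,A_0,\S|_{A_0})\mid A_0]\le\eps/2$, of size equal to the size bound of \Cref{thm::main-apx-result} under the substitution $n\mapsto|A_0|$, $m\mapsto|\S|_{A_0}|$, $\eps\mapsto\eps/2$ — the substitution $\eps\mapsto\eps/2$ turning the $\frac1\eps$ into $\frac2\eps$ and the leading constant $12$ into $24$, which is precisely the stated size — and with the stated number of Oracle calls obtained by the identical substitution in the Oracle-call bound of \Cref{thm::main-apx-result}. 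Finally, by \Cref{fact:apx-of-apx}, $A$ is an $\bigl(\eps(A_0,X,\S)+\eps(A,A_0,\S|_{A_0})\bigr)$-approximation of $(X,\S)$, so taking expectations and using the tower property to condition on $A_0$,
\[
\EE[\eps(A,X,\S)]\le\EE[\eps(A_0,X,\S)]+\EE\bigl[\EE[\eps(A,A_0,\S|_{A_0})\mid A_0]\bigr]\le\eps/2+\eps/2=\eps.
\]

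The main obstacle I anticipate lies entirely in the first of these steps: confirming that passing to the random subset $A_0$ does not break \nameref{assumption} or the hypotheses of \Cref{thm::main-apx-result}, and cleanly handling the small-$\dvc$/large-$\eps$ boundary where those hypotheses could fail. Once that is in place, the rest is a mechanical substitution into \Cref{thm::main-apx-result} together with a one-line application of \Cref{fact:apx-of-apx} under expectation.
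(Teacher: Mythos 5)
Your proposal is correct and follows exactly the paper's route: uniformly sample $A_0$ of size $\frac{\capx\dvc}{(\eps/2)^2}$ so that \Cref{thm:uniform-apx-bound-in-expectation} gives $\EE[\eps(A_0,X,\S)]\le\eps/2$, run \nameref{apxalgo} on $(A_0,\S|_{A_0})$ with error $\eps/2$ and read off the bounds from \Cref{thm::main-apx-result} under the substitution $n\mapsto|A_0|$, $m\mapsto|\S|_{A_0}|$, $\eps\mapsto\eps/2$, and combine via \Cref{fact:apx-of-apx} and the tower property. Your extra care in checking that $(A_0,\S|_{A_0})$ inherits \nameref{assumption} is a point the paper leaves implicit, but it does not change the argument.
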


\noindent
We can deduce \Cref{cor:vcdim-apx-result-inverted-dualvc} from \Cref{cor:vcdim-apx-result-inverted} by using that $|\S_{A_0}| = O(|A_0|^\dvc)$ by the Sauer-Shelah lemma~\citep{Sa72,Sh72} and substituting $a = \frac{(2c)^{1/d} }{2\ln 2(1-1/d)}$, $b = \frac{\ln m}{\ln 2}$, and $\gamma = 1-1/d$.



\noindent

\subsection{Proof of \Cref{thm:presampled-disc-guarantee}}\label{sec:Presampling-proof}



\noindent
We will deduce \Cref{lemma:presampling-guarantee} implies \Cref{thm:presampled-disc-guarantee}.

\subsection*{Proof of \Cref{thm:presampled-disc-guarantee}}

The randomized algorithm that achieves the guarantees of \Cref{thm:presampled-disc-guarantee} is presented in \Cref{algo:pres-discalgo}. 

\begin{algorithm}[ht]
\algotitle{\textsc{LowDiscColorPresampled}}{sampleddiscalgo}
\caption{  \textsc{LowDiscColorPresampled}$\big((X,\S), d, \alpha\big)$}
	\label{algo:pres-discalgo}

	$n \leftarrow |X|$

	$ \curly{e_1, \dots, e_{\ceil{n/2}}} \leftarrow \nameref{sampledalgo}\big((X,\S), d, \alpha\big)$

	\For{ $i = 1, \dots, \ceil{n/2}$}{ 

	$\{x_i,y_i\} \leftarrow \mathrm{endpoints}\round{e_i}$

	$\chi(x_i) = \begin{cases} 1 &\text{ with probability } 1/2 \\ -1 &\text{ with probability } 1/2\end{cases}$

	$\chi(y_i) = - \chi (x_i)$ \tcp*{we skip this step if $y_i = x_i$}

	}
	
	\textbf{return} $\chi$
	
\end{algorithm}
\begin{algorithm}[ht]
\algotitle{\textsc{MatchingPresampled}}{sampledalgo}
\caption{  \textsc{MatchingPresampled}$\big((X,\S),  d, \alpha\big)$}
	\label{algo:pres-matchalgo}

	$M \leftarrow \emptyset$

	\While{$|X| > 16$}{

	$n \leftarrow |X|$

	$E \leftarrow$ sample of $O(n^{1+\alpha} \ln n)$ edges from $\binom{X}{2}$

	$ \curly{e_1, \dots, e_{\ceil{n/16}}} \leftarrow \nameref{subalgo}\big((X,\S), E,  (2c)^{1/d}, \ln m, 1-\alpha/d, \lceil n/16\rceil\big)$

	$M \leftarrow M \cup \curly{e_1, \dots, e_{\ceil{n/16}}}$

	$X \leftarrow X \setminus \mathrm{endpoints}(M)$

	}

	match the remaining elements of $X$ randomly and add the edges to $M$

	\textbf{return} $M$

\end{algorithm}
















\noindent
Recall the following lemma from \Cref{sec:mainthm-proof}:
\matchtodiscLemma*

\noindent
By \Cref{lemma:matching-to-disc}, it is sufficient to show that the algorithm \nameref{sampledalgo} constructs a matching with expected crossing number $O\round{ n^{1- \alpha/d} + \ln |\S|\log n}$.
To prove this, recall the following statement on \nameref{subalgo}.

\mainlemma*
\noindent
Substituting $(a,b,\gamma,t) = \big((2c)^{1/d}, \ln m, 1-\alpha/d, \lceil n/16\rceil\big)$ and the proper values for $\proba$ and $\probaS$, we get the following bound on the expected crossing number of $\curly{e_1, \dots, e_{\ceil{n/16}}}$:
\begin{equation}\label{eq:pres-MW-bound}
		\EE\bracket{
			\max\limits_{S \in \S} \sum\limits_{i=1}^{\ceil{n/16}} \Inc(e_i,S)}
		\leq
\frac 1 2\cdot \EE\bracket{\min\limits_{e\in \tilde E_{\lceil n/16\rceil}} \sum\limits_{i=1}^{\ceil{n/16}} \Inc(e,S_i)}
		+
		O\round{n^{1-\alpha/d}}.
	\end{equation}
It remains to bound the expectation on the right-hand side of \Cref{eq:pres-MW-bound}.
By \Cref{lemma:presampling-guarantee}, with probability at least $1-\frac 1 n$, the initial sample $E$ contains a matching $M_0$ of size $\lceil n/4\rceil$ with crossing number
    \[
       c_0\cdot \round{ n^{1- \alpha/d} + \ln |\S|}
    \]
for some fixed constant $c_0$.
Assume that it happens, then clearly $M_0 \cap \tilde E_{\ceil{n/16}}$ also has crossing number at most $c_0\cdot \round{ n^{1- \alpha/d} + \ln |\S|}$ with respect to $\S$. Moreover,  since we only zeroed the weights of edges adjacent to $2\cdot \lceil n/16\rceil$ distinct vertices of $X$, there are at least $\lceil n/8-2\rceil $ edges of $M_0$  with positive weight when \nameref{subalgo} terminates. That is, $\abs{M_0 \cap \tilde E_{\ceil{n/16}} }\geq \lceil n/8-2\rceil$ and $\lceil n/8-2\rceil >0$ since $n > 16$ at each call of \nameref{subalgo} . By the pigeonhole principle, there is an edge in $M_0 \cap \tilde E_{\ceil{n/16}}$ which is crossed by at most
\[
	\frac{c_0\cdot \round{ n^{1- \alpha/d} + \ln |\S|} \cdot \ceil{n/16} }{\lceil n/8-2\rceil} = O\round{ n^{1- \alpha/d} + \ln |\S|}
\]
sets from $S_1, \dots, S_{\ceil{n/16}}$.
Therefore, we have 
\begin{equation*}
		\EE\bracket{
			\max\limits_{S \in \S} \sum\limits_{i=1}^{\ceil{n/16}} \Inc(e_i,S)}
			\leq
\frac 1 2\cdot \EE\bracket{\min\limits_{e\in \tilde E_{\lceil n/16\rceil}} \sum\limits_{i=1}^{\ceil{n/16}} \Inc(e,S_i)}
		{+}
		O\round{n^{1-\alpha/d}}
		=
		O\round{ n^{1- \alpha/d} + \ln |\S|},
	\end{equation*}
	where the last bound holds with with probability at least $1-\frac 1 n$.
Since the crossing number of any matching of $X$ is $O(n)$, the expected crossing number of the matching returned by the subroutine $\nameref{subalgo}\big((X,\S), E,  (2c)^{1/d}, \ln m, 1-\alpha/d, \lceil n/16\rceil\big)$ is $O\round{ n^{1- \alpha/d} + \ln |\S|}$. The algorithm \nameref{sampledalgo} makes $\log n$ calls to \nameref{subalgo} with exponentially decreasing input sizes. It can easily be deduced (with calculations analogous to the ones in \Cref{subsec:matching-proof}) that the  expected crossing number of the matching returned by \nameref{sampledalgo} is $O\round{ n^{1- \alpha/d} + \ln |\S|\log n}$. Hence we have shown that  \Cref{thm:presampled-disc-guarantee} is a consequence of \Cref{lemma:presampling-guarantee}.
\qed

\subsection*{Proof of \Cref{lemma:presampling-guarantee}}
\noindent
Our starting point is the following algorithm which is a variant of the classical MWU method \citep{Wel88,CW89, Wel92}.


\begin{algorithm}[ht]
\algotitle{\textsc{RelaxedMWU}}{RCWRalgo}
\caption{  \textsc{RelaxedMWU}$\big((X,\S), \alpha, E \big)$}
	\label{algo:RelaxCW}
	$\omega_1(S) \leftarrow 1$ for all $S\in \S$

	$X_1 \leftarrow X$

	\For{$i = 1, \dots, n/2$}{

	$\mathcal{E}_i\leftarrow$ the $|X_i|^{2-\alpha}$ lightest edges in $\binom{X_i}{2}$ w.r.t. $\omega_i$

	\BlankLine

	\If{$E \cap \mathcal{E}_i = \emptyset$}{
		set $ T = i-1$ and \Return $\curly{e_1, \dots, e_{i-1}}$
		}
		\Else{Pick an edge $e_i$ from $E \cap \mathcal{E}_i$ uniformly at random

	Define $\omega_{i+1}$ from $\omega_i$ by doubling the weights of each set crossing $e_i$

	$X_{i+1} \leftarrow X_{i} \setminus \mathrm{endpoints}(e_i)$
	}
	}
	set $T=n/2$ and \Return $\curly{e_1, \dots, e_{n/2}}$
\end{algorithm}

\noindent
The first part of  \Cref{lemma:presampling-guarantee} is implied by the following two properties of \nameref{RCWRalgo}:
\begin{enumerate}
	\item for any halting time $T=t$, the edges returned by \nameref{RCWRalgo} have crossing number $O\round{t^{1- \alpha/d}}$;
	\item if $E \subseteq \binom{X}{2}$ is an i.i.d. sample where each edge is picked with probability
\[
	p =  \min \curly{\frac{2\ln n}{n^{1-\alpha}} + \frac{4\ln(2/\delta)}{n^{2-\alpha}}, ~1}~,
\]
then $T \geq n/4$ with probability at least $1-\delta$. In other words,  \nameref{RCWRalgo}$((X,\S), \alpha, E)$  returns at least $n/4$ edges with probability at least $1-\delta$.
\end{enumerate}
\paragraph{1. Bounding the crossing number of the output.} Assume that $\tau: \NN \times \RR \to \RR$ is a function such that at iteration $i$, \nameref{RCWRalgo} picks an edge which is crossed by ranges of total weight at most $\tau(|X_i|,\omega_i(\S))$ or it terminates.
Then at each iteration, the total weight of ranges in $\S$ changes as
    \begin{align*}
        \omega_{i+1}(\S)
        &\leq
        \omega_i(\S) + \tau(|X_i|, \omega_i(\S))
        =
        \omega_i(\S)\round{ 1 + \frac{\tau(|X_i|, \omega_i(\S))}{\omega_i(\S)}}\\
        &\leq
        \omega_1(\S) \prod_{j=1}^{i}\round{1+ \frac{\tau(|X_j|, \omega_j(\S))}{\omega_j(\S)}}
        =
        |\S|\cdot \prod_{j=1}^{i}\round{1+ \frac{\tau(|X_j|, \omega_j(\S))}{\omega_j(\S)}}
    \end{align*}
Let $ t \in [1, n/2]$ be a stopping time and let $\kappa_t$ denote the maximum number of edges in $\{e_1, \dots,e_{t} \}$ that are crossed by any set in $\S$, then by the update rule,
\[
    \omega_{t+1}(\S) \geq \max_{S \in \S} \omega_{t+1}(S) = 2^{\kappa_t}.
\]
We get that
\begin{align*}
    2^{\kappa_t}
    \leq
    \omega_{t+1}(\S)
    \leq
    |\S|\cdot \prod_{j=1}^{t}\round{1+ \frac{\tau(|X_j|, \omega_j(\S))}{\omega_j(\S)}}
    \leq
    |\S|\cdot \exp\round{\sum_{j=1}^{t} \frac{\tau(|X_j|, \omega_j(\S))}{\omega_j(\S)}}
\end{align*}
which implies
\begin{equation}\label{eq:CW-crnr-bound}
    \kappa_t
    \leq
    \frac{1}{\ln 2} \round{\ln |\S| + \sum_{j=1}^{t} \frac{\tau(|X_j|, \omega_j(\S))}{\omega_j(\S)}}.
\end{equation}

\noindent
We use the following lemma to bound the function $\tau(\cdot,\cdot)$ for set systems with polynomially bounded dual shatter function.

        \begin{lemma}\label{lemma:robust-short-edge-presample}
            Let $(X, \S)$ be a set system with dual shatter function $\pi^*_{\S}(k) \leq c_1 \cdot k^d$. Then for any $Y \subset X$,  $w: \S \to \NN$, and  parameter $ |Y| \leq \ell \leq \binom{|Y|}{2}$ there are at least $\ell$ distinct edges in $\binom{Y}{2}$ such that any of these edges are crossed by sets of total weight at most $\tau_\ell(|Y|, w(\S)) = (10c_1)^{1/d} \cdot \frac{w(\S)\cdot \ell^{1/d}}{|Y|^{2/d}}$.
        \end{lemma}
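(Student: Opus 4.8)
The plan is to use a classical packing argument based on the dual shatter function, combined with a counting/pigeonhole step. First I would recall the standard construction behind matchings with low crossing number: for a set system $(Y,\R)$ one defines the \emph{distance} (or pseudometric) $d_\R(x,y)$ to be the number of ranges of $\R$ separating $x$ and $y$, and shows that if $\pi^*_\R(k)\le c_1k^d$ then for every $r$ the number of pairs at distance less than $r$ is large — equivalently, a ball of radius $r$ in this pseudometric cannot be too small on average. Concretely, the key fact (see \cite[Chapter 5]{MatDiscBook}) is that if we take a maximal subset $P\subseteq Y$ that is $r$-separated (pairwise distance $\ge r$), then $|P|\le \pi^*_\R(\text{something involving } r)$; balancing the parameters yields that most pairs of $Y$ are separated by few ranges. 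I would run exactly this argument but with $\R$ weighted by $w$: replace "number of separating ranges" by "total $w$-weight of separating ranges", and replace $|\R|$ by $w(\S)$ throughout, which is legitimate since the shatter-function bound only constrains the combinatorial structure, not the weights.

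The key steps, in order: (1) Fix $Y$ and the threshold $\tau_\ell := (10c_1)^{1/d}\cdot \frac{w(\S)\,\ell^{1/d}}{|Y|^{2/d}}$, and call an edge $\{x,y\}\in\binom{Y}{2}$ \emph{heavy} if the total $w$-weight of ranges crossing it exceeds $\tau_\ell$. I want to show fewer than $\binom{|Y|}{2}-\ell$ edges are heavy, i.e.\ at least $\ell$ are light. (2) Suppose for contradiction that more than $\binom{|Y|}{2}-\ell$ edges are heavy. Build a graph $G$ on $Y$ whose edges are the \emph{light} pairs; then $G$ has fewer than $\ell$ edges, so it has an independent set $P\subseteq Y$ of size at least $|Y|^2/(2\ell)$ (by Turán-type bound, using $\ell\ge|Y|$ so the bound is meaningful), and every pair in $P$ is heavy — i.e.\ $P$ is "$\tau_\ell$-separated" in the weighted pseudometric. (3) Now bound $|P|$ from above using the dual shatter function: a standard argument shows that a set of $N$ points pairwise separated by ranges of total weight $> t$ forces $w(\S) \gtrsim t\cdot \frac{\log N}{\ldots}$ — but more precisely, the clean polynomial version is that such a $P$ can be realized by sampling $O\big(\frac{w(\S)}{t}\log|P|\big)$ ranges (each picked proportionally to weight) that still separate all pairs, hence $|P|\le \pi^*_\S\big(O(\tfrac{w(\S)}{t}\log|P|)\big)\le c_1\big(O(\tfrac{w(\S)}{t}\log|P|)\big)^d$. (4) Plug $t=\tau_\ell$ and $|P|\ge |Y|^2/(2\ell)$ into this inequality and check that the constant $10$ inside $(10c_1)^{1/d}$ makes the two sides contradict, giving the result.

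The main obstacle I expect is step (3) — making the "$P$ is $\tau_\ell$-separated $\Rightarrow$ $|P|$ small" implication quantitatively tight enough to land the constant $10$, because the naive sampling bound carries an extra $\log|P|$ factor that is \emph{not} present in the target $\tau_\ell$. The standard way around this, which I would adopt, is to avoid random sampling of ranges altogether and instead use the \emph{direct} recursive/partitioning bound on the pseudometric balls: one shows by induction on $|Y|$ (the exact computation done in \cite[Chapter 5]{MatDiscBook} and implicitly in the \textsc{MainAssumption} box) that the sum over all pairs of (weight of separating ranges) is at most $O(1)\cdot w(\S)\cdot |Y|^{2-1/d}\cdot(\text{const})^{1/d}$, with no logarithmic loss; then a straight averaging over the $\ell$ lightest pairs gives each of them weight at most $O(1)\cdot w(\S)\cdot|Y|^{2-1/d}/\ell$, and reconciling the exponents (note $|Y|^{2-1/d}/\ell$ versus $w(\S)\ell^{1/d}/|Y|^{2/d}$ — these match precisely when $\ell = |Y|^2$, and for smaller $\ell$ one restricts attention to the $\ell$ lightest and re-runs the bound on a subgraph) yields exactly the stated $\tau_\ell$. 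The remaining work is then purely arithmetic bookkeeping of constants, which I would not grind through here.
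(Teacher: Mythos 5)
Your skeleton is the same as the paper's: argue by contraposition, use a Tur\'an-type bound to extract from the ``heavy'' pairs a large set $P\subseteq Y$ whose pairs are all separated by weight more than $\tau_\ell$, and then contradict a packing bound coming from $\pi^*_\S$. Steps (1), (2) and (4) are fine. The gap is entirely in step (3), and neither of the two routes you offer closes it. Route A (sampling $O(\tfrac{w(\S)}{\tau_\ell}\log|P|)$ ranges proportionally to weight) gives $|P|\leq c_1\bigl(O(\tfrac{w(\S)}{\tau_\ell}\log|P|)\bigr)^d$, and the resulting $(\log|P|)^d$ factor is not a matter of constants: with $|P|\geq|Y|^2/(3\ell)$ and $(w(\S)/\tau_\ell)^d=\tfrac{|Y|^2}{10c_1\ell}$ the two sides differ exactly by that $\operatorname{polylog}$ factor, so no choice of the constant $10$ produces a contradiction --- you correctly sense this. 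Route B, however, rests on a false claim: the sum over \emph{all} pairs of $\binom{Y}{2}$ of the weight of separating ranges is \emph{not} $O(w(\S)|Y|^{2-1/d})$. A single range $S$ with $|S\cap Y|=|Y|/2$ already contributes $w(S)\cdot|Y|^2/4$, so the correct order of that sum is $\Theta(w(\S)|Y|^2)$ in general. What is $O(w(\S)|Y|^{1-1/d})$ is the sum over the edges of a well-chosen \emph{matching}, not over all pairs; confusing the two is why your averaging appears to work at $\ell=|Y|^2$ but (as your own exponent check hints) fails for small $\ell$, e.g.\ at $\ell=|Y|$ it would give only $w(\S)|Y|^{-1/(2d)}$ against the required $w(\S)|Y|^{-1/d}$.

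The missing ingredient is Haussler's packing lemma. Pass to the dual weighted system: let $\S_w$ contain $w(S)$ copies of each $S$ and set $R_y=\{S\in\S_w: y\in S\}$; then the primal shatter function of $\{R_y\}_{y\in Y}$ is the dual shatter function of $(Y,\S)$, and $|R_x\,\Delta\,R_y|$ is exactly the total weight crossing $\{x,y\}$. Haussler's lemma bounds any $\delta$-separated family by $2c_1(w(\S)/\delta)^d$ with \emph{no} logarithmic loss; with $\delta=\tau_\ell$ this forces $|P|\leq|Y|^2/(5\ell)$, which contradicts $|P|\geq|Y|^2/(3\ell)$ (equivalently, in the paper's clique formulation, Tur\'an then yields at least $5\ell/2-3|Y|/2\geq\ell$ light pairs using $|Y|\leq\ell$). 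With that substitution your argument goes through; without it, the lemma as stated is not proved.
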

        \begin{proof}
    	Let $(\S_w, \R_Y)$ denote the set system where $\S_w$ contains $w(S)$ copies of each $S \in \S$, $\R_Y =  \{ R_y : y \in Y\}$,  and $R_y = \{ S\in \S_w : y \in S\}$.
		Note that $|\S_w| = w(\S)$ and the shatter function of $(\S_w, \R_X)$ is the dual shatter function of $(Y,\S)$.
    	Recall the following lemma of  \cite{Haussler92spherepacking}.
    \begin{restatable}[Packing Lemma]{thm}{packinglemma}
         Let  $(X,\S)$ be a set system with shatter function $\pi_{\S} (k) \leq c_1\cdot k^d$ and  $1 < \delta <|X|$ be a parameter. Furthermore, let $\P \subset \S$ be a $\delta$-separated set, that is, $|S_1 \Delta S_2| \geq \delta$  for any $S_1,S_2 \in \P$. Then
         \[
         	|\P| \leq 2c_1 \round{\frac{|X|}{\delta}}^d.
         \]
    \end{restatable}
    	\noindent
    	For the choice of 
    	\[
    		\delta_\ell = \round{ 10c_1 \cdot \frac{w(\S)^{d}\ell}{|Y|^2} }^{1/d},
    	\]
    	the Packing Lemma implies that any $\delta_{\ell}$-separated subset of ranges in $\R_Y$  has cardinality at most
    	\[
    		C_{\ell} = 2c_1 \round{\frac{w(\S)}{\delta_{\ell}}}^d = \frac{|Y|^2}{5\ell}
    	\]
    	Observe that for any pair $x,y \in Y$, the set $R_x \Delta R_y$ contains precisely the sets in $\S_w$ that cross the edge $xy$. Consider the graph $G_Y$ on $Y$ defined by the edges that are crossed by at least $\delta_{\ell}$ sets in $\S_w$. The Packing Lemma implies that $G_Y$ does not contain a clique on $C_{\ell} + 1$ vertices.
    	Thus by the classical theorem of extremal graph theory \cite{turan1941external}, the number of pairs that are \emph{not} edges in $G_Y$ is at least
    	\[
    		C_{\ell} \binom{ \left\lfloor |Y| / C_{\ell} \right\rfloor}{2}
    		\geq
    		C_{\ell} \cdot \frac{\round{|Y| / C_{\ell} -1} \round{|Y| / C_{\ell}-2}}{2}
    		\geq
    		\frac{|Y|^2}{2C_{\ell}} - \frac{3 |Y|}{2}
    		=
    		\frac{5\ell}{2} - \frac{3|Y|}{2}
    		=\ell,
    	\]
    	where we used that $|Y| \leq \ell$. That is, there are at least $\ell$ edges which cross ranges of total weight at most $\delta_\ell$.
    	This concludes the proof of \Cref{lemma:robust-short-edge-presample}.
        \end{proof}

\noindent
At iteration $i$, we have $|X_i|=n-2i+2$ and we pick one of the $|\mathcal{E}_i| =|X_i|^{2-\alpha}$ lightest edges of $\binom{X_i}{2}$. By 
\Cref{lemma:robust-short-edge-presample}, each edge in $\mathcal{E}_i$ crosses ranges of total weight at most
\[
		(10c_1)^{1/d} \cdot \frac{w_i(\S)\cdot |\mathcal{E}_i|^{1/d}}{|X_i|^{2/d}} 
		=
		\frac{(10c_1)^{1/d}w_i(\S)}{|X_i|^{\alpha/d}}
		=
		\frac{(10c_1)^{1/d}w_i(\S)}{(n-2i+2)^{\alpha/d}}~,
\]
which bounds $\tau(|X_i|, \omega_i(\S)) $. Thus 
 \Cref{eq:CW-crnr-bound} implies that for any stopping time $ t \in [1, n/2]$, the matching $\curly{e_1,\dots,e_t}$ returned by \nameref{RCWRalgo} has crossing number at most
    \begin{equation}\label{eq:pres-cr-bound}
        \frac{\ln|\S|}{\ln 2}  + \frac{(10c_1)^{1/d}}{\ln 2}  \sum_{j=1}^{t} \frac{1}{(n-2j+2)^{\alpha/d}}
        \leq
        \frac{\ln|\S|}{\ln 2} +  \frac{(10c_1)^{1/d}}{\ln 2} \cdot \frac{t^{1- \alpha/d}}{1-\alpha/d}~.
    \end{equation}



\paragraph{2. Halting time on a random input.} 
Now we show that if $E \subseteq \binom{X}{2}$ is a random edge-set, where each edge is picked i.i.d with probability
\[
	p = \min \curly{ \frac{2\ln n}{n^{1-\alpha}} + \frac{4\ln(2/\delta)}{n^{2-\alpha}},~1}
\]
then with probability at least $1-\delta$, the algorithm \nameref{RCWRalgo}$((X,\S), \alpha, E)$  
satisfies
\[
	\PP\bracket{T \leq n/4} \leq \delta.
\]
If $p = 1$, then the statement is trivially true, therefore  we assume that $p < 1$. We will bound the probabilities $\PP[T = i]$ for each $i = 1, \dots, n/4$.
Since $E$ is an i.i.d. uniform random sample of $\binom{ X}{2}$,
$$
    \PP [T= 1] = \PP[E \cap \mathcal{E}_1 = \emptyset] = (1-p)^{|\mathcal{E}_1|} = (1-p)^{n^{2-\alpha}}.
$$
Observe that in iteration $i\geq 2$ of the algorithm \nameref{RCWRalgo}, the edge-set $\mathcal{E}_i$ depends on the previously picked edges. To signify this, for any set of edges $e^1, \dots, e^{i-1}$, we denote the set of $(n-2i+2)^{2-\alpha}$ shortest edges of $X\setminus\mathrm{endpoints}( e^1, \dots, e^{i-1} )$ as $\mathcal{E}_i(e^1, \dots, e^{i-1})$.  We say that a vector of edges $(e^1, \ldots, e^{i})$ is \emph{feasible} if $e^1 \in \mathcal{E}_1, e^2 \in \mathcal{E}_2(e^1), \dots, e^i \in \mathcal{E}_{i}(e^1, \ldots, e^{i-1})$. For brevity, we write $\mathbf{e}^{i} = (e^1, \dots, e^i)$ for a vector of edges with the agreement $\mathbf e^0 = \emptyset$ and $\mathbf{e}_{i} = (e_1, \dots, e_i)$ for the vector of random variables from \nameref{RCWRalgo}. Observe that
\begin{align}\label{eq:One1}
\begin{split}
    &\PP[T = i + 1]
    =
    \PP\bracket{E \cap \mathcal{E}_1 \neq \emptyset, E \cap \mathcal{E}_2 \neq \emptyset, \ldots, E \cap \mathcal{E}_{i + 1} = \emptyset}\\
    &=
   \sum_{\mathbf e^{i} \text{ feasible}}\PP\bracket{ E \cap \mathcal{E}_{i + 1}(\mathbf e^{i}) = \emptyset,\, E \cap \mathcal{E}_j(\mathbf e^{j-1}) \neq \emptyset ~\forall j \in [1,i] ~\big|~ \mathbf e_{i} = \mathbf e^i}
    \cdot \PP\bracket{\mathbf e_{i} = \mathbf e^i}\\
    &\leq
    \sum_{\mathbf e^{i} \text{ feasible}}\PP\bracket{E \cap \mathcal{E}_{i + 1}(\mathbf e^{i}) = \emptyset ~\big|~ \mathbf e_{i} = \mathbf e^i}\PP\bracket{\mathbf e_{i} = \mathbf e^i}
    .
    \end{split}
\end{align}
Note that $\mathcal{E}_{i + 1}(e^1, \dots, e^i)$ is a fixed, non-random set. Using  Bayes' theorem 
we can express the conditional probabilities in the right hand side of \Cref{eq:One1} as
\begin{align*}
  &\PP\bracket{E \cap \mathcal{E}_{i + 1}(\mathbf e^i) = \emptyset ~\big|~ \mathbf e_{i} = \mathbf e^i}
  =
  \frac{ \PP\bracket{\mathbf e_{i} = \mathbf e^i ~\big|~ E \cap \mathcal{E}_{i + 1}(\mathbf e^i) = \emptyset } \cdot \PP\bracket{E \cap \mathcal{E}_{i + 1}(\mathbf e^i) = \emptyset} }{ \PP\bracket{\mathbf e_{i} = \mathbf e^i} }~.
\end{align*}
Substituting this back to \Cref{eq:One1}, we get
\begin{align}\label{eq:Two2}
\begin{split}
\PP\bracket{T = i+1}
&=
\sum_{\mathbf e^i \text{ feasible}} \PP\bracket{\mathbf e_{i} = \mathbf e^i \mid E \cap \mathcal{E}_{i + 1}(\mathbf e^i) = \emptyset }
\cdot \PP\bracket{E \cap \mathcal{E}_{i + 1}(\mathbf e^i) = \emptyset}\\
&=  \sum_{\mathbf e^i \text{ feasible}} \PP\bracket{\mathbf e_{i} = \mathbf e^i \mid E \cap \mathcal{E}_{i + 1}(\mathbf e^i) = \emptyset } \cdot (1-p)^{|\mathcal{E}_{i + 1}(\mathbf e^i)|}\\
&=  \sum_{\mathbf e^i \text{ feasible}} \PP\bracket{\mathbf e_{i} = \mathbf e^i \mid E \cap \mathcal{E}_{i + 1}(\mathbf e^i) = \emptyset } \cdot (1-p)^{(n-2i)^{2-\alpha}}~.
\end{split}
\end{align}
as we have $|\mathcal{E}_{i + 1}(\mathbf e^i)| = (n-2i)^{2-\alpha}$ for any $\mathbf e^i = (e^1, \dots, e^i)$. We proceed by bounding the probability $\PP\bracket{\mathbf e_{i} = \mathbf e^i \mid E \cap \mathcal{E}_{i + 1}(\mathbf e^i) = \emptyset }$.
 Observe that
\begin{align*}
    &\PP\bracket{\mathbf e_{i} = \mathbf e^i \mid E \cap \mathcal{E}_{i + 1}(\mathbf e^i) = \emptyset }\\
&= \PP\bracket{e_{i} = e^i \mid E \cap \mathcal{E}_{i + 1}(\mathbf e^i ) = \emptyset,~ \mathbf e_{i-1} = \mathbf e^{i-1} }\cdot \PP \bracket{\mathbf e_{i-1} = \mathbf e^{i-1} \mid E \cap  \mathcal{E}_{i + 1}(\mathbf e^i) = \emptyset}\\
&= \prod_{j=2}^i \PP\bracket{e_{j} = e^j \mid E \cap \mathcal{E}_{i + 1}(\mathbf e ) = \emptyset, \mathbf e_{j-1} = \mathbf e^{j-1} } \cdot \PP\bracket{e_{1} = e^1 \mid E \cap \mathcal{E}_{i + 1}(\mathbf e^i ) = \emptyset}
\end{align*}
Recall that $e_1$ was picked uniformly at random from $\mathcal{E}_1 \cap E$, where $\mathcal{E}_1$ is a fixed set such that  $e^1 \in \mathcal{E}_1$ for any feasible $e^1$, and that $E$ is a uniform random sample. This implies the following 
\begin{align*}
 &\PP\bracket{e_{1} = e^1 ~\big|~ E \cap \mathcal{E}_{i + 1}(\mathbf e^i ) = \emptyset}\\
   &=
    \sum_{ S' \subset \mathcal{E}_1} \PP \bracket{e_1 = e^1 ~\big|~ E \cap \mathcal{E}_{i + 1}(\mathbf e^i ) = \emptyset, \, E \cap \mathcal{E}_1 = S'}
    \cdot \PP \bracket{E \cap \mathcal{E}_1 = S' ~\big|~ E \cap \mathcal{E}_{i + 1}(\mathbf e^i ) = \emptyset}\\
    &=
    \sum_{e^1 \in  S' \subset \mathcal{E}_1 \setminus \mathcal{E}_{i+1}(\mathbf e^i)} \frac{1}{|S'|}\cdot
        p^{|S'|}\cdot (1-p)^{|\mathcal{E}_1 \setminus \mathcal{E}_{i+1}(\mathbf e^i)| - |S'|}\\
    &=
    \sum_{\ell=1}^{|\mathcal{E}_1 \setminus \mathcal{E}_{i+1}(\mathbf e^i)|} \binom{|\mathcal{E}_1 \setminus \mathcal{E}_{i+1}(\mathbf e^i)|-1}{  \ell-1}\cdot  \frac{1}{\ell}\cdot  p^\ell \cdot  (1-p)^{|\mathcal{E}_1 \setminus \mathcal{E}_{i+1}(\mathbf e^i)| - \ell} \\
    &=
    \sum_{\ell=1}^{|\mathcal{E}_1 \setminus \mathcal{E}_{i+1}(\mathbf e^i)|}
    \frac{1}{|\mathcal{E}_1 \setminus \mathcal{E}_{i+1}(\mathbf e^i)|} \binom{|\mathcal{E}_1 \setminus \mathcal{E}_{i+1}(\mathbf e^i)|}{ \ell}\cdot    p^\ell \cdot  (1-p)^{|\mathcal{E}_1 \setminus \mathcal{E}_{i+1}(\mathbf e^i)| - \ell}\\
    &=
    \frac{1}{|\mathcal{E}_1 \setminus \mathcal{E}_{i+1}(\mathbf e^i)|}  \round{ p + (1-p)}^{|\mathcal{E}_1 \setminus \mathcal{E}_{i+1}(\mathbf e^i)|} - (1-p)^{|\mathcal{E}_1 \setminus \mathcal{E}_{i+1}(\mathbf e^i)|}\\
    &= \frac{1}{|\mathcal{E}_1 \setminus \mathcal{E}_{i+1}(\mathbf e^i)|} \round{1 - (1-p)^{|\mathcal{E}_1 \setminus \mathcal{E}_{i+1}(\mathbf e^i)|} }.
\end{align*}

\noindent
The last step is to bound the probabilities
$
    \PP\bracket{e_{j} = e^j \mid E \cap \mathcal{E}_{i + 1}(\mathbf e^i ) = \emptyset,~ \mathbf e_{j-1} = \mathbf e^{j-1} }
$
for $j \geq 2$.
Note that, given the realization $\mathbf e_{j-1} = \mathbf e^{j-1}$, the set $\mathcal{E}_j(\mathbf e^{j-1})$ is not random, and thus we have a similar relation as before
\begin{align*}
    &\PP\bracket{e_{j} = e^j \mid E \cap \mathcal{E}_{i + 1}(\mathbf e^i ) = \emptyset, ~\mathbf e_{j-1} = \mathbf e^{j-1} }
    =
    \PP\bracket{e_{j} = e^j \mid E \cap \mathcal{E}_{i + 1}(\mathbf e^i ) = \emptyset}\\
    &=\sum_{e^j \in   S' \subset \mathcal{E}_j(\mathbf e^{j-1}) \setminus \mathcal{E}_{i+1}(\mathbf e^i)} \frac{1}{|S'|}\cdot
        p^{|S'|}\cdot (1-p)^{|\mathcal{E}_j(\mathbf e^{j-1}) \setminus \mathcal{E}_{i+1}(\mathbf e^i)| - |S'|}\\
    &=
    \sum_{\ell=1}^{|\mathcal{E}_j(\mathbf e^{j-1}) \setminus \mathcal{E}_{i+1}(\mathbf e^i)|} \binom{|\mathcal{E}_j(\mathbf e^{j-1}) \setminus \mathcal{E}_{i+1}(\mathbf e^i)|-1 }{ \ell-1}\cdot  \frac{1}{\ell}\cdot  p^\ell \cdot  (1-p)^{|\mathcal{E}_j(\mathbf e^{j-1}) \setminus \mathcal{E}_{i+1}(\mathbf e^i)| - \ell} \\
    &=
    \frac{1}{|\mathcal{E}_j(\mathbf e^{j-1}) \setminus \mathcal{E}_{i+1}(\mathbf e^i)|} \round{1 - (1-p)^{|\mathcal{E}_j(\mathbf e^{j-1}) \setminus \mathcal{E}_{i+1}(\mathbf e^i)|} }.
\end{align*}
Recall that $\abs{\mathcal E_i (\mathbf e^{i-1})} = (n-2i+2)^{2-\alpha}$ for all $i \in [1,T]$. Thus for each $1 \leq j \leq i$ we have $$ (n-2(j-1))^{2-\alpha} - (n-2i)^{2-\alpha} \leq |\mathcal{E}_j (\mathbf e^{j-1}) \setminus \mathcal{E}_{i+1}(\mathbf e^i)| \leq(n-2(j-1))^{2-\alpha}$$ and so the  probability  $\PP\bracket{e_{j} = e^j \mid E \cap \mathcal{E}_{i + 1}(\mathbf e^i ) = \emptyset, ~\mathbf e_{j-1} = \mathbf e^{j-1} }$ is maximized if $$|\mathcal{E}_j(\mathbf e^{j-1}) \setminus \mathcal{E}_{i+1}(\mathbf e^i)| = (n-2(j-1))^{2-\alpha} - (n-2i)^{2-\alpha}$$
Putting everything together and using the notation $k_i = \abs{\mathcal E_i(\mathbf e^{i-1})} = (n-2(i-1))^{2-\alpha}$, we get
\begin{align*}
\PP\bracket{T = i+1}
&\leq
(1-p)^{k_{i+1}} \cdot  \sum_{\mathbf e^i \text{ feasible }} \prod_{j=1}^{i} \frac{1}{k_j - k_{i+1}} \round{1 - (1-p)^{ k_j - k_{i+1} } }\\
&=(1-p)^{k_{i+1}} \cdot k_1 \cdot k_2 \cdots k_i \cdot\prod_{j=1}^{i} \frac{1}{k_j - k_{i+1}} \round{1 - (1-p)^{ k_j - k_{i+1} } }\\
&= (1-p)^{k_{i+1}} \cdot \prod_{j=1}^i \frac{k_j}{k_j - k_{i+1}} \round{1 - (1-p)^{ k_j - k_{i+1} } }
\end{align*}
For any $i \geq 1$, we conclude the following bound on the probability of $T \leq i+1$
\begin{align*}
    \PP \bracket{T \leq i+1}
    =
    \sum\limits_{j=1}^{i+1} \PP \bracket{T = j}
    \leq
    (1-p)^{k_1} +
    \sum_{\ell =1}^{i+1} (1-p)^{k_{\ell+1}} \cdot \prod_{j=1}^\ell \frac{k_j}{k_j - k_{\ell+1}} \round{1 - (1-p)^{ k_j - k_{\ell+1} } }
\end{align*}
Using the bounds $k_1\geq k_2\geq  \dots\geq k_{n/4-1} \geq k_{n/4}$ and $k_j - k_{i+1} \geq 1$ for all $1 \leq j \leq i$, we can bound the probability of $T \leq n/4$ as
\begin{align*}
    \PP\bracket{T\leq n/4}
    &\leq
    (1-p)^{k_{n/4}}+
    (1-p)^{k_{n/4}} \sum_{\ell=1}^{n/4} \prod_{j=1}^{\ell} k_j \cdot p
    \leq
    (1-p)^{k_{n/4}} \sum_{\ell=0}^{n/4} \round{k_1 \cdot p }^\ell\\
    &=
    (1-p)^{k_{n/4}}
    \frac{1-(pk_1)^{n/4+1}}{1-pk_1}
    \leq
    2(1-p)^{k_{n/4}} \cdot (pk_1)^{n/4}
    \leq
    2\exp (-pk_{n/4}) \cdot k_1^{n/4}
\end{align*}
Substituting $k_1 = n^{2-\alpha}$, $k_{n/4} \geq (n/2)^{2-\alpha}\geq n^{2-\alpha}/4$ and $p = \frac{2\ln n}{n^{1-\alpha}} + \frac{4\ln(2/\delta)}{n^{2-\alpha}}$, we conclude
\begin{align*}
	\PP\bracket{T\leq n/4}
	\leq
	2\exp\round{ -\frac{n\ln n}{2} - \ln \frac 2 \delta} \cdot \round{n^{2-\alpha}}^{n/4}
	=
	2\cdot \frac{n^{n/2-\alpha n/4}}{n^{n/2}} \cdot \frac{\delta}{2}
	\leq
	\delta
\end{align*}
Therefore, with probability at least $1 - \delta$, \nameref{RCWRalgo} returns a matching of size $n/4$. This, together with \Cref{eq:pres-cr-bound} implies the first part (upper bound) of \Cref{lemma:presampling-guarantee}.






\paragraph{Lower bound construction. }
	The example is a geometric set system induced by half-spaces on a subset of the integer grid, more precisely, let $ X$ be the set of $n = d \cdot \left\lceil n_0^{1/d} \right\rceil^d$ points defined as $\times_{i=1}^d \bracket{1, \left\lceil n_0^{1/d} \right\rceil } \subset \mathbb Z^d$ and let $\S$ consist of the $d \cdot \left\lfloor n_0^{1/d}\right\rfloor$ subsets of $X$ induced by half-spaces of the form
	\[
		\left\{ x_i \leq j + 1/2 ~\bigg|~ i = 1, \dots, d,~ j = 1, \dots \left\lfloor n_0^{1/d} \right\rfloor \right\}.
	\]


	Observe that for any edge $\{\mathbf x, \mathbf y \} \in \binom{X}{2}$, the number of ranges in $\S$ that crosses $\{\mathbf x, \mathbf y\}$ is precisely the $\ell_1$-distance of $\mathbf x$ and $\mathbf y$, which is defined as
	\[
		\ell_1(\mathbf x, \mathbf y) = \sum\limits_{i=1}^d \abs{ \mathbf x_i - \mathbf y_i}~.
	\]
	Using this observation, it is easy to see that for any fixed $k$, the number of edges crossed by at most $k$ sets from $\S$ is at most $nk^d$. We refer to these edges as \emph{$k$-good} and denote their set with $\mathcal G_k$.
	 Let $p(n) = o(n^{\alpha-1})$ be a function and define $k_p(n) = \round{\frac{1}{16p(n)}}^{1/d}$. The expected number of $k_p(n)$-good edges in $E$ is
	\[
		\EE\bracket{|E\cap \mathcal G_{k_p(n)}| } \leq n\round{k_p(n)}^d \cdot p(n) = \frac{n}{16}~.
	\]
	Thus, by Markov's inequality, we have $|E\cap \mathcal G_{k_p(n)}| \leq \frac n 8$ with probability at least $1/2$. Assume that $|E\cap \mathcal G_{k_p(n)}| \leq \frac n 8$ holds and let
	 $M \subset E$ be any subset of size $n/4$. Then $M$ contains at least $n/8$ edges which are not $k_p(n)$-good.
	Therefore, the number of crossings between the edges of $M$ and the sets of $\S$ is at least
	\[
		\frac{n}{8} \cdot \round{\frac{1}{16p(n)}}^{1/d}~.
	\]
	Recall that $|\S| = d \cdot \left\lfloor n_0^{1/d}\right\rfloor \leq d n^{1/d}$ and so by the pigeonhole principle, we get that there is a range in $\S$ that crosses at least
	\[
		\frac{\frac{n}{8} \cdot \round{\frac{1}{16p(n)}}^{1/d}}{|\S|}\geq
		\frac{\frac{n}{8} \cdot \round{\frac{1}{16p(n)}}^{1/d}}{dn^{1/d}} = \frac{n^{1-1/d}}{8d\cdot(16)^{1/d}} \cdot \underbrace{\round{\frac{1}{p(n)}}^{1/d}}_{\omega\round{n^{(1-\alpha)/d}}} = \omega\round{n^{1-\alpha/d}}
	\]
	edges of $M$, which concludes the proof of \Cref{lemma:presampling-guarantee} and thus completes the proof of \Cref{thm:presampled-disc-guarantee}.

\qed

\section{Geometric Set Systems} \label{sec:corollaries}

In this section, we apply our algorithms for set systems induced by geometric objects.  
We will show that \Cref{thm:main-disc-result} implies improved constructions of low-discrepancy colorings in several  geometric set systems, see \Cref{table:disc-for-dualshat}.

\begin{table}[ht!]

		\centering
		\resizebox{\textwidth}{!}{
			\begin{tabular}{ccccc}
								\toprule



				\multicolumn{1}{c}{}
				& \multicolumn{2}{|c|}{\textsc{\textbf{Our Method}}}
				& \multicolumn{2}{c}{\textsc{Previous-best}}

				\\ 

				\multicolumn{1}{c|}{\textsc{Set system}}
				& \multicolumn{1}{c}{\textsc{Discrepancy}}
				& \multicolumn{1}{c|}{\textsc{Time}}
				& \multicolumn{1}{c}{\textsc{Discrepancy}}
				& \multicolumn{1}{c}{\textsc{Time}}

				\\ \midrule




				\multicolumn{1}{c|}{	\begin{tabular}{@{}c@{}} geometric \\ induced by $\B_d$  \end{tabular}}
				&    $\round{12d+o(d) } \sqrt{n^{1{-}\sfrac{1}{d}} \ln m }$
				& \multicolumn{1}{c|}{
					\begin{tabular}{@{}c@{}}
						$\bm{\aO\round{dn^{2+\sfrac{1}{d}}}}$
						\\
						(\Cref{cor:our-bound-balls})
					\end{tabular}
				}
				&          $O\round{\sqrt{n^{1{-}\sfrac{1}{d}}\ln m}}$
				& \multicolumn{1}{c}{
					\begin{tabular}{@{}c@{}}
						$O\round{n^{4+ \sfrac{1}{d}}}$
						\\
						\cite{MWW93}
					\end{tabular}
				}

				\\  \midrule

				\multicolumn{1}{c|}{
					\begin{tabular}{@{}c@{}} geometric \\ induced by  $\Gamma_{d,\Delta,s}$  \end{tabular}  }
				&$\round{24\sqrt{\Delta s} + o(1)} \sqrt{n^{1{-}\sfrac{1}{d}}\ln m}$
				& \multicolumn{1}{c|}{ \begin{tabular}{@{}c@{}} $\bm{\aO\round{s\Delta^d\round{mn^{\sfrac{1}{d}}{+}n^{2{+}\sfrac{1}{d}}}}}$ \\ {(\Cref{cor:our-semialg-matching-bound}) } \end{tabular}  }

				& $O\round{\sqrt{10^ds\Delta n^{1{-}\sfrac{1}{d}}\ln m}}$
				& \multicolumn{1}{c}{\begin{tabular}{@{}c@{}} $O\round{n^{O(d^3)}}$ \\  \cite{AMS13}  \end{tabular}}

				\\

				\multicolumn{1}{c|}{}
				&
				&\multicolumn{1}{c|}{}
				& $O\round{\sqrt{\Delta s n^{1{-}\sfrac{1}{d}}\ln m}}$
				& \multicolumn{1}{c}{\begin{tabular}{@{}c@{}} $\aO(s\Delta^d mn^3)$ \\ { \cite{MWW93}  }  \end{tabular}}

				\\  \midrule

				\multicolumn{1}{c|}{
					\begin{tabular}{@{}c@{}} geometric \\ induced by $\cH_d$  \end{tabular}
				}
				& $\round{8d +o(d)}\sqrt{n^{1{-}\sfrac{1}{d}} \ln m }$
				& \multicolumn{1}{c|}{\begin{tabular}{@{}c@{}} $\aO\round{ dn^{2{+}\sfrac{1}{d}}}$ \\ {(\Cref{cor:halfspace})  }  \end{tabular} }
				&          $O\round{ \sqrt{n^{1{-}\sfrac{1}{d}}\ln m}}$
				& \multicolumn{1}{c}{\begin{tabular}{@{}c@{}} $\aO(n)$ \\ { \cite{Chan12} }  \end{tabular}}

				 \\
				 \bottomrule

			\end{tabular}%
		}
		\caption{Summary of guarantees for geometric set systems. We use the notation $\cH_d$ for half-spaces in $\RR^d$, $\B_d$ for balls in $\RR^d$, and $\Gamma_{d,\Delta,s}$ for semialgebraic sets in $\RR^d$ described by at most $s$ equations of degree at most $\Delta$.  }
		\label{table:disc-for-dualshat}
	\end{table}

\noindent
Formally, given a set $X$ of $n$ points and a collection $\C$ of geometric objects in $\RR^d$,  we say that a set $Y \subset X$ is \emph{induced} by $\C$ if $Y = X \cap C$ 
for some $ C \in \C$. 
We say that a set system $(X, \S)$ is induced by $\C$ if each range in $\S$ is induced by $\C$.

\subsection{Semialgebraic set systems.}
Let $\Gamma_{d,\Delta,s}$ denote the collection of semialgebraic sets in $\RR^d$ that can be defined  as the solution set of a Boolean combination of at most $s$ polynomial inequalities of degree at most $\Delta$.
First, we give a bound on the VC-dimension and dual shatter function of set systems induced by $\Gamma_{d,\Delta,s}$.

\begin{lemma}\label{lemma:dual-shat-fn-bound}
    Let $X$ be a set of points in $\RR^d$ and $(X,\S)$ be a set system induced by $\Gamma_{d,\Delta,s}$. Then $\vcdim(X,\S) \leq 2s \log(es) \binom{\Delta + d }{ d} $ and the dual shatter function of $(X,\S)$ can be upper-bounded as $\pi^*_{\S}(k) \leq (4e\Delta s)^d\cdot k ^d$.
\end{lemma}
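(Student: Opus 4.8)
The plan is to derive both estimates by \emph{linearization}: lift the points to the space of monomials of degree at most $\Delta$, so that polynomial inequalities become half-spaces, and then read off the two bounds from known facts about half-space systems and about arrangements on varieties.

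First I would treat the VC-dimension bound. Let $D=\binom{\Delta+d}{d}$ be the number of monomials in $x_1,\dots,x_d$ of degree at most $\Delta$, and let $\nu\colon\RR^d\to\RR^{D}$ send a point to the tuple of all these monomials evaluated at it (one of the coordinates being the constant monomial $1$). A polynomial inequality $p(x)\ge 0$ of degree at most $\Delta$ is the $\nu$-preimage of a half-space $\{y:\langle c_p,y\rangle\ge 0\}$; since the coordinate $\equiv 1$ is present, these may be regarded as affine half-spaces in $\RR^{D-1}$, a family of VC-dimension $D$. Hence $(X,\S)$ is a subsystem of the set system on $\nu(X)$ whose ranges are Boolean combinations of at most $s$ such half-spaces. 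I would then invoke the standard bound that the class of $s$-fold Boolean combinations of a set system of VC-dimension $v$ has VC-dimension at most $2sv\log(es)$ --- which follows from a Sauer--Shelah counting argument, since on any set of $N$ shattered points each of the $s$ constituent half-space traces takes at most $(eN/v)^{v}$ values --- and substitute $v=D=\binom{\Delta+d}{d}$.

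Next, the dual shatter function. Fix $k$ ranges $R_1,\dots,R_k\in\S$. Each $R_j$ is a Boolean combination of at most $s$ polynomial inequalities of degree at most $\Delta$, so these $k$ ranges are collectively defined by $m\le ks$ polynomials $p_1,\dots,p_m$ of degree at most $\Delta$ in $\RR^d$. If two points $x,y\in X$ satisfy $\mathrm{sign}\,p_\ell(x)=\mathrm{sign}\,p_\ell(y)$ for every $\ell$, then $x$ and $y$ belong to exactly the same ranges among $R_1,\dots,R_k$, since membership in each $R_j$ is a fixed Boolean function of the sign vector $(\mathrm{sign}\,p_1(\cdot),\dots,\mathrm{sign}\,p_m(\cdot))$. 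Therefore $\pi^*_{\S}(k)$ is at most the number of realizable sign patterns of $p_1,\dots,p_m$ on $\RR^d$. Linearizing as above, this is the number of cells that $m$ hyperplanes cut out on the $d$-dimensional Veronese image $\nu(\RR^d)$, a variety of degree at most $\Delta^d$; by the classical sign-pattern bound (equivalently, the estimate for arrangements of hyperplanes on a bounded-degree variety) this count is at most $(e\Delta m/d)^d\le(4e\Delta s)^{d}k^{d}$, as desired once $d$ is a constant.

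The main obstacle is entirely in the exponent of the second part: the naive count for an arrangement of $m$ hyperplanes in $\RR^{D}$ gives $\sum_{i=0}^{D}\binom{m}{i}=O((ks)^{D})$, which is useless because $D=\binom{\Delta+d}{d}$ can be far larger than $d$. The whole point is that the lifted points lie on a $d$-dimensional variety of degree at most $\Delta^d$, so that the correct count has exponent $d$; obtaining this, with the explicit constant, requires invoking the appropriate form of the Warren / Milnor--Thom / Basu--Pollack--Roy bound on the number of sign conditions of $m$ polynomials of degree $\le\Delta$ in $d$ variables, rather than the raw hyperplane-arrangement estimate. In the first part the only non-routine input is the (standard) VC-dimension bound for Boolean combinations.
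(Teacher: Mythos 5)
Your proposal is correct and follows essentially the same route as the paper: linearization plus the standard VC-dimension bound for $s$-fold Boolean combinations (the paper cites Propositions 10.3.2--10.3.3 of Mat\v{o}u\v{s}ek) for the first part, and for the second part the reduction of equivalence classes to sign patterns of the $\le ks$ defining polynomials followed by Warren's sign-pattern bound $(4e\Delta m/d)^d$. The only cosmetic difference is your detour through hyperplane arrangements on the Veronese variety, which you yourself resolve by invoking the same Warren-type bound the paper uses directly.
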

\begin{proof}
The bound on the VC-dimension can be deduced from Propositions 10.3.2 and Proposition 10.3.3 in \cite{matousek2013lectures}.
To bound $\pi^*_\S(k)$, let  $\R \subseteq \Gamma_{d,\Delta,s}$ be a set of $k$ ranges with defining  polynomials $\P = \{p_{ij} ~:~ 1 \leq i \leq k, 1 \leq j \leq s\}$, where each $p_{ij}$ is  a $d$-variate polynomial  of degree at most $\Delta$. Observe that  if  $\mathrm{sign}\bracket{p(x)} = \mathrm{sign} \bracket{p(y)}$ for all $p \in \P$, then $x,y$ are equivalent with respect to $\R$. Therefore, $\pi^*_{\Gamma_{d,\Delta,s}}(k)$ can be upper-bounded by the number of different sign patterns in $\{-1,1\}^{ks}$ induced by $ks$ $d$-variate polynomials of degree at most $\Delta$.
This quantity is bounded by $(4e\Delta s)^d\cdot k ^d $, see \cite[Theorem 3]{warren1968lower}.
\end{proof}

\noindent
By \Cref{lemma:dual-shat-fn-bound}, we get that set systems induced by $\Gamma_{d,\Delta,s}$ satisfy \nameref{assumption} with parameters $a=\frac{4e\Delta s }{\ln 2(1-1/d)}$, $b = \frac{\ln |\S|}{\ln 2}$, and $\gamma = 1-1/d$. Furthermore,  any set system induced by $\Gamma_{d,\Delta,s}$ has a membership Oracle of time complexity $O\round{s\binom{\Delta+d }{ d}}$. Thus, we can apply Theorems \ref{thm:main-disc-result}, \ref{thm:main-matching-result}, and \Cref{thm::main-apx-result} and obtain the following.

\begin{corollary}\label{cor:our-semialg-matching-bound}
	Let $X$ be a set of $n$ points in $\RR^d$ and $(X,\S)$ be a set system with $m$ ranges induced by $\Gamma_{d,\Delta,s}$.    Then
	\begin{enumerate}[i)]
	\item \nameref{discalgo}$\big((X,\S),~\frac{4e\Delta s }{\ln 2(1-1/d)} ,~ \frac{\ln m}{\ln 2} ,~1-\frac{1}{d}\big)$ constructs a coloring $\chi$ of $X$ of with expected discrepancy at most
	   \[
	   3\sqrt{ \frac{ 4e\Delta s  \ln m}{\ln 2(1-1/d)^2} \cdot n^{1-1/d}  + 19 \ln^2 m \ln n}
	   \]
	   in expected time $O \round{s\binom{\Delta + d }{ d} \round{mn^{1/d}\ln(mn) \ln n + n^{2+1/d}\ln n}}$.
	\item \nameref{mainalgo}$\big((X,\S),~\frac{4e\Delta s }{\ln 2(1-1/d)} ,~ \frac{\ln m}{\ln 2} ,~1-\frac{1}{d}\big)$ returns a perfect matching of $X$ with expected crossing number at most
	\[
		\frac{12e s \Delta}{(1-1/d)^2 \ln 2}  \cdot n^{1-1/d} + O\big(\ln m\ln n \big)
	\]
	in expected time $O \round{s\binom{\Delta + d }{ d} \round{mn^{1/d}\ln(mn) \ln n + n^{2+1/d}\ln n}}$.
   \item if $\eps \in (0,1)$, $\dvc := \vcdim(X,\S)$, and $A_0$ is a uniform random sample of $X$ of size $\frac{4\capx \dvc}{\eps^2} $, then \nameref{apxalgo}$\big((A_0,\S|_{A_0}),~\frac{4e\Delta s }{\ln 2(1-1/d)} ,~ \frac{\ln |\S|_{A_0}|}{\ln 2} ,~1-\frac{1}{d},~ \eps/2\big)$ returns a set $A \subset X$ of size 
     \begin{align*}
     O\round{ 
     \max\curly{
     	\round{\Delta s \cdot \frac{\dvc}{\eps^2} \ln\frac{1}{\eps}}^\frac{d}{d+1}, \frac{\dvc  }{\eps}  \ln^{3/2} \round{\frac{\dvc}{\eps}}
     }}
     \end{align*}
     with expected approximation guarantee satisfying
    	$\EE[\eps(A,X,\S)] \leq \eps$,
    and in expected  time
    $
    	O\round{ n + s\binom{\Delta + d}{d}\bracket{\round{\frac \dvc {\eps^2}}^{2 +1/d}\ln \frac \dvc {\eps^2} +  \round{\frac \dvc {\eps^2}}^{\dvc + 1/d} \ln\round{\frac \dvc {\eps^2}}^{\dvc + 1}  \ln^2 \frac \dvc {\eps^2}} }.
    $
\end{enumerate}
\end{corollary}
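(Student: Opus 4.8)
The plan is to obtain the three statements as direct instantiations of the general results already proven---\Cref{thm:main-disc-result}, \Cref{thm:main-matching-result}, and \Cref{thm::main-apx-result}---once we record two facts about set systems induced by $\Gamma_{d,\Delta,s}$: their parameters in \nameref{assumption}, and the cost of one membership query. The first is exactly \Cref{lemma:dual-shat-fn-bound}: since $\pi^*_\S(k)\le(4e\Delta s)^d k^d$, the constant $c$ of \nameref{assumption} has $c^{1/d}=4e\Delta s$, and one checks that the value $a=\frac{4e\Delta s}{\ln 2\,(1-1/d)}$ upper-bounds $\frac{(2c)^{1/d}}{2\ln 2\,(1-1/d)}$; since \nameref{assumption} is monotone in $a$, the set system satisfies it with $a=\frac{4e\Delta s}{\ln 2\,(1-1/d)}$, $b=\frac{\ln m}{\ln 2}$, $\gamma=1-1/d$. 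For the second fact: a range of $\Gamma_{d,\Delta,s}$ is a Boolean combination of at most $s$ polynomial inequalities, each of degree $\le\Delta$ in $d$ variables and hence with at most $\binom{\Delta+d}{d}$ monomials, so testing whether a given point lies in a given range costs $O\round{s\binom{\Delta+d}{d}}$ arithmetic operations; the total running time is therefore the number of Oracle calls multiplied by this factor.

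For part (i), I would substitute $a,b,\gamma$ above into \Cref{thm:main-disc-result}. The dominant term of the discrepancy bound becomes $\frac{a n^\gamma\ln m}{\gamma}=\frac{4e\Delta s}{\ln 2\,(1-1/d)^2}\,n^{1-1/d}\ln m$, while the two lower-order contributions $\frac{b\ln m\log n}{2}$ and $12\ln^2 m\log n$ combine---after using $b=\ln m/\ln 2$ and $\log n=\ln n/\ln 2$---into at most $19\ln^2 m\ln n$, giving the claimed $3\sqrt{\frac{4e\Delta s\ln m}{\ln 2(1-1/d)^2}n^{1-1/d}+19\ln^2 m\ln n}$. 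For the running time, note that $a>1$ for every $d\ge 2$, so the $1/a$ factor in the Oracle-call bound of \Cref{thm:main-disc-result} is $<1$ and $\min\curly{\frac{2}{1-\gamma},\log n}\le\log n=O(\ln n)$; hence the number of Oracle calls is $O\round{n^{2+1/d}\ln n+mn^{1/d}\ln(mn)\ln n}$, and multiplying by $O\round{s\binom{\Delta+d}{d}}$ yields the stated time. Part (ii) is handled identically via \Cref{thm:main-matching-result}: the crossing-number bound simplifies the same way, with $\frac{3a}{\gamma}n^\gamma$ becoming $\frac{12es\Delta}{(1-1/d)^2\ln 2}n^{1-1/d}$ and the terms $\frac{3b\log n}{2}+18\ln(mn)\log n$ absorbed into $O(\ln m\ln n)$ because $m\ge n$; the time bound is identical to that of (i).

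For part (iii), I would follow the two-stage template of \Cref{cor:vcdim-apx-result-inverted}: first draw a uniform sample $A_0\subset X$ of $\frac{4\capx\dvc}{\eps^2}$ points, which by \Cref{thm:uniform-apx-bound-in-expectation} satisfies $\EE[\eps(A_0,X,\S)]\le\eps/2$ because $\vcdim(X,\S)=\dvc$; then run \nameref{apxalgo} on $(A_0,\S|_{A_0})$ with error parameter $\eps/2$, the parameters $a,\gamma$ as above, and $b=\frac{\ln|\S|_{A_0}|}{\ln 2}$. By \Cref{thm::main-apx-result}, the output $A$ is an $(\eps/2)$-approximation of $(A_0,\S|_{A_0})$ in expectation, and \Cref{fact:apx-of-apx} then gives $\EE[\eps(A,X,\S)]\le\eps$. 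The size and query bounds follow by substituting $|\S|_{A_0}|=O\round{|A_0|^{\dvc}}$ (Sauer--Shelah), $|A_0|=O(\dvc/\eps^2)$, and $\frac{2}{2-\gamma}=\frac{2d}{d+1}$ into the bounds of \Cref{thm::main-apx-result}; the additive $O(n)$ term accounts for reading $X$ to draw $A_0$, and the query count is again multiplied by $O\round{s\binom{\Delta+d}{d}}$.

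None of this is conceptual; the one place that needs care is the logarithm bookkeeping in part (iii), where the Sauer--Shelah blow-up $|\S|_{A_0}|$ enters twice---once through the algorithm parameter $b=\Theta\round{\ln|\S|_{A_0}|}=\Theta\round{\dvc\ln(\dvc/\eps)}$ inside the size bound, and once as a multiplicative factor of order $|A_0|^{\dvc}$ in the running time---so that the query complexity picks up the $\round{\dvc/\eps^2}^{\dvc+1/d}$ term and the extra logarithmic factor $\ln\round{\dvc/\eps^2}^{\dvc+1}$ appearing in the stated bound. Verifying that the per-query cost $O\round{s\binom{\Delta+d}{d}}$ is independent of the iteration and simply multiplies through the Oracle-call counts is the only other routine point to check.
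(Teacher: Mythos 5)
Your proposal is correct and follows exactly the paper's route: instantiate Theorems \ref{thm:main-disc-result}, \ref{thm:main-matching-result}, and \Cref{thm::main-apx-result} (the latter via the two-stage sampling template of \Cref{cor:vcdim-apx-result-inverted}) with the parameters supplied by \Cref{lemma:dual-shat-fn-bound}, and multiply the Oracle-call counts by the $O\round{s\binom{\Delta+d}{d}}$ per-query cost. The constant-tracking you describe (the $2^{1/d-1}\le 1$ slack in $a$, the $19\ln^2 m\ln n$ absorption, and the double appearance of $\ln|\S|_{A_0}|$ in part (iii)) matches what the paper's one-paragraph derivation implicitly relies on.
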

\begin{remark*}
The previous best algorithm for constructing matchings with low crossing numbers with respect to $\Gamma_{d,\Delta,s}$ relies on the polynomial partitioning technique \citep{AMS13}. It computes a perfect matching of $n$ points in general position with crossing number $O(10^d s \Delta n^{1-1/d})$ with respect to \emph{any} set in $\Gamma_{d,\Delta,s}$ in time $O(n^{O(d^3)})$, notably the running time is independent of $m$.
Our algorithm provides improved running time bounds for specific instances with $m=n^{o(d^3)}$.
\end{remark*}

\subsection{Balls and half-spaces.}
 Let $\cH_d$ and $\B_d$ denote the set of all half-spaces and balls in $\RR^d$ respectively.
Half-spaces and balls are semialgebraic sets, in particular, $\cH_d = \Gamma_{d,1,1}$ and $\B_d \subset \Gamma_{d,2,1}$.
What distinguishes their case from the general one is the existence of \emph{test-sets}. Test-sets are small-sized subfamilies (of half-spaces and balls) such that if a matching has low crossing number with respect to this subfamily, then it is guaranteed to have low crossing number with respect to any member of the family (of half-spaces and balls respectively): 


\begin{lemma}[Test-set lemma~\citep{Mat92}]\label{lemma:test-set}
	Let $X$ be a set of $n$ points in $\RR^d$ and  $t$ be a parameter. There exists a set $\T(t)$ of at most $(d+1)  t^d$  hyperplanes such that  if a perfect matching of $X$ has crossing number $\kappa$ with respect to $\T(t)$, then its crossing number with respect to any half-space in $\RR^d$ is at most $(d+1)\kappa + \frac{6d^2n}{t}$.
\end{lemma}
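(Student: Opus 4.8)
I would prove the lemma by the classical \emph{cutting} technique: dualize $X$ to an arrangement of $n$ hyperplanes, take a $(1/t)$-cutting of that arrangement, and let $\T(t)$ consist of the primal hyperplanes dual to the vertices of the cutting cells. The reason to pass to the dual is that whether a fixed hyperplane $h$ crosses a fixed edge $\{a,b\}$ depends only on which cell of the arrangement of the $n$ dual hyperplanes $x^{*}$ ($x\in X$) contains the dual point $h^{*}$; since that arrangement has $\Theta(n^{d})$ cells, the cutting replaces it by $O(t^{d})$ ``representative'' halfspaces while controlling how much information is lost per cell.

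\textbf{Steps, in order.} (1) Fix the standard point--hyperplane duality on $\RR^{d}$, let $\mathcal{A}$ be the arrangement of $\{x^{*}:x\in X\}$, and recall that (up to the side-swap of the duality) a point $x$ lies strictly on a prescribed side of a hyperplane $h$ iff $h^{*}$ lies strictly on the corresponding side of $x^{*}$; in particular, two hyperplanes whose dual points lie in the same cell of $\mathcal{A}$ have exactly the same crossing behaviour on $X$. (2) Build a $(1/r)$-cutting $\Xi$ of $\mathcal{A}$ with $r = t/K_{d}^{1/d}$, where $K_{d}$ is the ($d$-dependent) constant in the $K_{d}r^{d}$ bound on the number of simplicial cells; then $\Xi$ has at most $t^{d}$ cells, each of whose interior is met by at most $n/r = K_{d}^{1/d}\, n/t$ hyperplanes of $\mathcal{A}$. (3) For each cell $\sigma\in\Xi$ with vertices $v_{0},\dots,v_{d}$, add $v_{0}^{*},\dots,v_{d}^{*}$ to $\T(t)$, so $|\T(t)|\le (d+1)\,t^{d}$. (4) Given an arbitrary halfspace $H$ with boundary $h$, replace $h$ (if necessary) by an equivalent hyperplane whose dual point $q$ lies inside the bounding region that was cut, let $\sigma$ be the cell of $\Xi$ containing $q$, and set $B = \{\,x\in X : x^{*}\text{ meets }\mathrm{int}\,\sigma\,\}$, so $|B|\le K_{d}^{1/d}\,n/t$; after an infinitesimal generic perturbation of $\T(t)$ and $h$, for every $x\notin B$ the whole simplex $\sigma$ lies strictly on one side of $x^{*}$, hence $q$ and all the $v_{j}$ lie on that side, so $x$ lies on the same side of $h$ as of each $v_{j}^{*}$. (5) Therefore, for any perfect matching $M$ of $X$, an edge of $M$ crossed by $h$ is either crossed by $v_{0}^{*}\in\T(t)$ or has an endpoint in $B$; since $M$ is a matching there are at most $|B|$ edges of the latter kind, so the crossing number of $M$ with respect to $h$ is at most $\kappa + K_{d}^{1/d}\,n/t \le (d+1)\kappa + 6d^{2}\,n/t$, using $K_{d}^{1/d}\le 6d^{2}$ for a suitable cutting construction.

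\textbf{Main obstacle.} The combinatorial core --- the ``symmetric difference $=$ bad points'' estimate together with the one-edge-per-vertex property of matchings --- is routine; the real work is the bookkeeping that keeps the bounds at $(d+1)\,t^{d}$ test hyperplanes and $6d^{2}\,n/t$ error. Concretely, one must (i) choose the cutting parameter so that the cell count is genuinely at most $t^{d}$ while the per-cell hyperplane count stays within $6d^{2}\,n/t$, which forces a close look at the constant $K_{d}$ in the cutting-size bound; (ii) deal with \emph{unbounded} cells, whose vertices ``at infinity'' do not dualize to affine hyperplanes --- handled by first intersecting everything with a large box (chosen to meet every cell of $\mathcal{A}$) and replacing a far-away query halfspace by an equivalent one whose dual point lies in the box, then cutting only the bounded part; and (iii) manage degeneracies (points of $X$ lying on boundary hyperplanes of $\sigma$, or on $h$) by a generic perturbation. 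I would follow the construction of \cite{Mat92} for these details, which is exactly where the stated constants $(d+1)$ and $6d^{2}$ originate.
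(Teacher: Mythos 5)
The paper does not prove this lemma at all: it is imported verbatim as a black box from \cite{Mat92}, and your sketch is precisely the classical dual-cutting argument from that reference (dualize $X$, take a $(1/r)$-cutting, collect the $d+1$ vertices of each cell). Your combinatorial core is sound and in fact yields the slightly stronger bound $\kappa + 6d^2 n/t$, since an edge crossed by $h$ with both endpoints outside the ``bad'' set $B$ is crossed by \emph{every} $v_j^{*}$, and the matching property caps the remaining edges at $|B|$; the stated $(d+1)\kappa$ coefficient is just a cruder accounting. The only items you leave open --- verifying that the cutting constant satisfies $K_d^{1/d}\le 6d^2$, handling unbounded cells and vertical hyperplanes via the bounding box, and the generic-perturbation treatment of degeneracies --- are exactly the bookkeeping carried out in \cite{Mat92}, and you correctly identify them as such rather than glossing over them.
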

We will use \Cref{lemma:test-set} as black-box to obtain a test-set lemma for balls. It is well known that there are  mappings $\alpha : \RR^d \to \RR^{d+1}$ and $\beta: \B_d \to \cH_{d+1}$ such that for any $p \in \RR^d$ and $B \in \B_d$, we have $p \in B$ if and only if $\alpha(p) \in \beta (B)$, see e.g. \cite[Chap. 10]{matousek2013lectures}. This mapping  and \Cref{lemma:test-set} applied in $\RR^{d+1}$ with $t = n^{1/d}$ give the following lemma.

\begin{lemma}\label{lemma:test-set-balls}
	Let $X$ be a set of $n$ points in $\RR^d$. There exists a set $\Q$ of at most $(d+2)n^{1+1/d}$ balls such that  if a perfect matching of $X$ has crossing number $\kappa$ with respect to $\Q$, then its crossing number with respect to any ball in $\RR^d$ is at most $(d+2)\kappa + 6(d+1)^2n^{1-1/d}$.
\end{lemma}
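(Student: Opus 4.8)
The plan is to lift $X$ to $\RR^{d+1}$ along $\alpha$, invoke the half-space test-set lemma there, and then pull the resulting hyperplanes back to balls in $\RR^d$. First I would set $X' = \alpha(X)$; since $\alpha$ recovers its argument from the first $d$ coordinates it is injective, so $X'$ is a set of $n$ points in $\RR^{d+1}$. Applying \Cref{lemma:test-set} in $\RR^{d+1}$ with the parameter $t = n^{1/d}$ produces a family $\T$ of at most $\bigl((d+1)+1\bigr)\,t^{d+1} = (d+2)\,n^{(d+1)/d} = (d+2)\,n^{1+1/d}$ hyperplanes such that every perfect matching of $X'$ with crossing number $\kappa$ with respect to $\T$ has crossing number at most $(d+2)\kappa + \tfrac{6(d+1)^2 n}{n^{1/d}} = (d+2)\kappa + 6(d+1)^2 n^{1-1/d}$ with respect to every half-space of $\RR^{d+1}$. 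These are already the target constants; the rest is transporting them back through $\alpha$ and $\beta$.

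Next I would replace each $h \in \T$ by a ball $B_h \subset \RR^d$ that induces the same crossings on $X$. The standard lifting identities show that the $\alpha$-preimage of a side of a non-vertical hyperplane $\{x_{d+1} = \langle a,x\rangle + b\}$ is, after completing the square, a Euclidean ball (or its complement, or empty/full), while a vertical hyperplane pulls back to a half-space of $\RR^d$. Because $X$ is finite, in each of these cases there is a genuine ball $B_h$ whose intersection with $X$ equals the trace on $X$ of that preimage — or the trace of its complement, which is equally good, since an edge crosses a set if and only if it crosses the complement. Hence for every pair $\{x,y\}\subseteq X$, the edge $\{\alpha(x),\alpha(y)\}$ is crossed by $h$ precisely when $\{x,y\}$ is crossed by $B_h$. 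I would then set $\Q = \{B_h : h \in \T\}$, so that $|\Q| \le |\T| \le (d+2)n^{1+1/d}$.

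Finally I would run the transfer. Let $M$ be a perfect matching of $X$ with crossing number $\kappa$ with respect to $\Q$, and let $M' = \{\{\alpha(x),\alpha(y)\} : \{x,y\}\in M\}$, a perfect matching of $X'$. By the crossing correspondence, each $h\in\T$ crosses at most $\kappa$ edges of $M'$, so $M'$ has crossing number at most $\kappa$ with respect to $\T$; by the property of $\T$ it therefore has crossing number at most $(d+2)\kappa + 6(d+1)^2 n^{1-1/d}$ with respect to every half-space of $\RR^{d+1}$. For an arbitrary ball $B \in \B_d$, the defining property of $\alpha,\beta$ gives that $\{x,y\}$ is crossed by $B$ exactly when $\{\alpha(x),\alpha(y)\}$ is crossed by $\beta(B) \in \cH_{d+1}$; applying the previous bound to the half-space $\beta(B)$ yields that $M$ has crossing number at most $(d+2)\kappa + 6(d+1)^2 n^{1-1/d}$ with respect to $B$, which is the claim.

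The main obstacle is the middle step: a test-set hyperplane in $\RR^{d+1}$ need not itself be $\beta(B)$ for any ball $B$ (vertical hyperplanes map to half-spaces of $\RR^d$, and the "wrong" side of a hyperplane maps to the complement of a ball), so one must argue that on a fixed finite point set each degenerate case can be rounded to an actual ball without disturbing the crossing pattern, using complementation-invariance of the crossing relation. Everything else is bookkeeping, and the constants fall out because $t^{d+1} = n^{1+1/d}$ and $n/t = n^{1-1/d}$ when $t = n^{1/d}$.
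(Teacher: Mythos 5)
Your proof is correct and follows the same route as the paper: lift $X$ to $\RR^{d+1}$ via the standard ball-to-half-space map, apply \Cref{lemma:test-set} there with $t=n^{1/d}$, and pull the resulting hyperplanes back to balls. In fact you supply more detail than the paper does on the one delicate point (rounding vertical hyperplanes and wrong-side preimages to genuine balls via complementation-invariance of the crossing relation on a finite point set), and the constants check out.
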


In contrast to previous setups (where we required to have a finite set of $m$ ranges as an input), 
test-sets allow us to efficiently construct matchings with low crossing number with respect to \emph{any} half-space or ball in $\RR^d$. 
For half-spaces, we have a membership Oracle of time complexity $O(d)$, thus \Cref{cor:our-semialg-matching-bound} and \Cref{lemma:test-set} imply the following.

\begin{corollary}\label{cor:halfspace-matching}
	Let $X$ be a set of $n$ points in $\RR^d$ and $\T =  \T(n^{1/d})$ be the set of half-spaces provided by \Cref{lemma:test-set}. Then
	\nameref{mainalgo}$\big( (X,\T),~ \frac{4e}{(1-1/d)\ln 2},~ \frac{\ln \round{(d+1)n}}{\ln 2}, 1-\frac{1}{d}\big)$ returns a perfect matching of $X$ with expected crossing number at most
	\[
		\round{6d^2 + \frac{12e (d+1)}{ (1-1/d)^2\ln 2} }n^{1-1/d}   + O \big(\ln(dn)\ln n\big)
	\]
	 with respect to half-spaces in $\RR^d$,
	 in expected  time  $O \round{d  n^{2+1/d}\ln n }$.
\end{corollary}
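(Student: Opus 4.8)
The plan is to combine the general matching result of \Cref{cor:our-semialg-matching-bound}(ii) with the test-set lemma for half-spaces (\Cref{lemma:test-set}), applied to the set system $(X,\T)$ where $\T = \T(n^{1/d})$. First I would observe that since half-spaces are exactly $\Gamma_{d,1,1}$ (i.e.\ $\Delta = s = 1$), \Cref{cor:our-semialg-matching-bound}(ii) applies to the set system $(X,\T)$ with parameters $a = \frac{4e}{(1-1/d)\ln 2}$, $b = \frac{\ln|\T|}{\ln 2}$, and $\gamma = 1-1/d$. By \Cref{lemma:test-set}, $|\T| \leq (d+1)(n^{1/d})^d = (d+1)n$, so the choice $b = \frac{\ln((d+1)n)}{\ln 2}$ matches the invocation in the statement. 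Running \nameref{mainalgo} on $(X,\T)$ thus returns a perfect matching $M$ of $X$ with expected crossing number with respect to $\T$ at most
\[
	\frac{12e}{(1-1/d)^2\ln 2} n^{1-1/d} + O(\ln(dn)\ln n),
\]
substituting $\Delta = s = 1$ into the bound of part (ii).

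Next I would transfer this bound from $\T$ to all of $\cH_d$ via \Cref{lemma:test-set}. That lemma, applied with parameter $t = n^{1/d}$, states that if a perfect matching of $X$ has crossing number $\kappa$ with respect to $\T(n^{1/d})$, then its crossing number with respect to any half-space in $\RR^d$ is at most $(d+1)\kappa + \frac{6d^2 n}{n^{1/d}} = (d+1)\kappa + 6d^2 n^{1-1/d}$. Since this is a deterministic, pointwise statement about a fixed matching, I would simply apply it with $\kappa = \kappa_\T(M)$ and then take expectations, using linearity: the expected crossing number of $M$ with respect to half-spaces is at most
\[
	(d+1)\,\EE[\kappa_\T(M)] + 6d^2 n^{1-1/d}
	\leq (d+1)\left(\frac{12e}{(1-1/d)^2\ln 2} n^{1-1/d} + O(\ln(dn)\ln n)\right) + 6d^2 n^{1-1/d},
\]
which rearranges to $\left(6d^2 + \frac{12e(d+1)}{(1-1/d)^2\ln 2}\right) n^{1-1/d} + O(\ln(dn)\ln n)$, exactly the claimed bound (the $(d+1)$ factor is absorbed into the $O(\cdot)$ term since $d$ is constant—or kept explicit if one prefers, but the asymptotic form matches either way).

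Finally, for the running time I would note that a half-space in $\RR^d$ admits a membership Oracle running in $O(d)$ time (one inner product and a comparison). The time bound of \Cref{cor:our-semialg-matching-bound}(ii) with $m = |\T| \leq (d+1)n$, $\Delta = s = 1$ is $O\!\left(d\binom{1+d}{d}\bigl(mn^{1/d}\ln(mn)\ln n + n^{2+1/d}\ln n\bigr)\right)$; since $\binom{d+1}{d} = d+1 = O(d)$ (constant), $m = O(dn)$, and the $mn^{1/d}\ln(mn)\ln n = O(dn^{1+1/d}\ln n \cdot \ln(dn))$ term is dominated by $n^{2+1/d}\ln n$ for $n$ large, the total is $\aO(dn^{2+1/d})$, or more precisely $O(dn^{2+1/d}\ln n)$ as stated. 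The only mild subtlety—and the one place warranting care—is making sure the test-set is actually \emph{constructed} within this time budget and that its size $(d+1)n$ does not dominate: since $|\T| = O(dn) \ll n^{2+1/d}$, feeding $\T$ as the input range family to \nameref{mainalgo} is harmless, and the construction of $\T(n^{1/d})$ itself (a standard cutting/duality construction from \citep{Mat92}) runs well within $\aO(dn^{2+1/d})$. I do not anticipate a genuine obstacle here; the argument is a direct composition of two already-established results, and the bookkeeping is entirely routine.
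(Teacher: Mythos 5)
Your proposal is correct and follows exactly the route the paper intends: instantiate \Cref{cor:our-semialg-matching-bound}(ii) with $\Delta=s=1$ on the test-set $\T(n^{1/d})$ of size $(d+1)n$, then transfer the crossing-number bound to all of $\cH_d$ via \Cref{lemma:test-set} (applied pointwise, then in expectation), with the $O(d)$ membership oracle giving the stated running time. The paper offers no additional detail beyond this composition, and your bookkeeping (including absorbing the extra $(d+1)$ factor on the lower-order term for constant $d$) matches.
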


Similarly, we can apply \Cref{cor:our-semialg-matching-bound} and \Cref{lemma:test-set-balls} to set systems induced by balls. Note that in case of balls, the Oracle complexity can be improved to $O(d)$ from the $O(d^2)$ bound used in  \Cref{cor:our-semialg-matching-bound} for $\Gamma_{d,2,1}$.

\begin{corollary}\label{cor:ball-matching}
	Let $X$ be a set of $n$ points in $\RR^d$ and let $\Q$ be the set of balls provided by \Cref{lemma:test-set-balls}. Then
	\nameref{mainalgo}$\big( (X,\Q),~ \frac{8e}{(1-1/d)\ln 2},~ \frac{\ln \round{(d+2)n^{1+1/d}}}{\ln 2}, 1-\frac{1}{d}\big)$ returns a matching $\curly{e_1, \dots, e_{n/2}}$ with expected crossing number at most
	\[
		\round{6(d+1)^2 + \frac{24e (d+2)}{(1-1/d)^2 \ln 2}  } n^{1-1/d} + O \big(\ln(dn)\ln n\big)
	\]
	 with respect to balls in $\RR^d$,
	 in expected  time  $O \round{d  n^{2+1/d}\ln n }$.
\end{corollary}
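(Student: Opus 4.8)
The plan is to apply the finite-range matching result \Cref{thm:main-matching-result} to the (finite) test-set system $(X,\Q)$, and then lift the resulting bound to all of $\B_d$ via \Cref{lemma:test-set-balls}. First I would check that $(X,\Q)$ satisfies \nameref{assumption} with the parameters handed to \nameref{mainalgo}. Since every ball is a semialgebraic set of complexity $\Gamma_{d,2,1}$, \Cref{lemma:dual-shat-fn-bound} with $\Delta=2$, $s=1$ gives $\pi^*_\Q(k)\le(8e)^d k^d$, so by the note accompanying \nameref{assumption} the system satisfies it with $\gamma=1-\tfrac1d$, $b=\frac{\ln|\Q|}{\ln 2}$, and $a_0=\frac{(2(8e)^d)^{1/d}}{2\ln 2(1-1/d)}=\frac{2^{1/d}\cdot 8e}{2\ln 2(1-1/d)}\le\frac{8e}{(1-1/d)\ln 2}$. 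Enlarging $a$ from $a_0$ to the value $\frac{8e}{(1-1/d)\ln 2}$ actually passed is harmless: it only weakens the hypothesis $a|Y|^\gamma+b$ and, correspondingly, the guarantee of \Cref{thm:main-matching-result}. I also record that $|\Q|\le(d+2)n^{1+1/d}$, which is at least $n$ (and can be padded to at least $34$ using copies of $\RR^d$, which cross no edge), so the passed $b$ is a legitimate upper bound for $\frac{\ln|\Q|}{\ln2}$ and the hypotheses $m\ge n$, $m\ge 34$ hold.

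Next I would invoke \Cref{thm:main-matching-result} on $(X,\Q)$: \nameref{mainalgo} returns a perfect matching $M=\{e_1,\dots,e_{n/2}\}$ whose expected crossing number with respect to $\Q$ is at most $\frac{3a}{\gamma}n^\gamma+\frac{3b\log n}{2}+18\ln(|\Q|n)\log n$, which with the above parameters equals $\frac{24e}{(1-1/d)^2\ln 2}\,n^{1-1/d}+O(\ln(dn)\log n)$. The key step is then to feed this into \Cref{lemma:test-set-balls} pointwise and average: for every realization of $M$, its crossing number $\kappa_{\B_d}(M)$ with respect to an arbitrary ball in $\RR^d$ is at most $(d+2)\,\kappa_\Q(M)+6(d+1)^2n^{1-1/d}$, where $\kappa_\Q(M)$ is the crossing number with respect to $\Q$, so by linearity of expectation
\[
\EE[\kappa_{\B_d}(M)]\le(d+2)\,\EE[\kappa_\Q(M)]+6(d+1)^2n^{1-1/d}=\Bigl(6(d+1)^2+\frac{24e(d+2)}{(1-1/d)^2\ln 2}\Bigr)n^{1-1/d}+O(\ln(dn)\log n),
\]
where the factor $d+2$ on the lower-order term is absorbed into $O(\cdot)$ (treating $d$ as fixed). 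This is exactly the claimed crossing-number bound.

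Finally, for the running time I would reuse the oracle-call bound of \Cref{thm:main-matching-result}: substituting $\gamma=1-1/d$, $a=\Theta(1)$ (it lies between two absolute constants for every $d\ge 2$), and $m=|\Q|\le(d+2)n^{1+1/d}$ into $\frac{24n^{3-\gamma}\ln n}{a}+\frac{18mn^{1-\gamma}\ln(mn)}{a}\cdot\min\{\frac2{1-\gamma},\log n\}$ yields $O(n^{2+1/d}\ln n)$, once one checks that the $n^{2+1/d}\ln n$ term dominates the $n^{1+2/d}$ term for $n$ large relative to $d$. Since deciding whether $p\in B(c,r)$ costs $O(d)$ time — one evaluation of $\|p-c\|^2$ against $r^2$, and this is the single place where the ball structure improves on the generic $\Gamma_{d,2,1}$ oracle of cost $O(d^2)$ used in \Cref{cor:our-semialg-matching-bound} — multiplying the call count by the per-call cost gives the stated $O(dn^{2+1/d}\ln n)$ expected time.

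I do not expect any real obstacle: the whole argument is bookkeeping layered on top of the two already-proven results \Cref{thm:main-matching-result} and \Cref{lemma:test-set-balls}. The only points requiring genuine care are (a) confirming that the constant $a$ handed to \nameref{mainalgo} dominates the one produced by the dual-shatter estimate, so that \nameref{assumption} truly holds for the parameters used (and hence that the output is a matching of the whole point set, not just of a subsample), and (b) the running-time simplification, i.e.\ isolating which of the two summands in the oracle-call bound is dominant and verifying that the $O(d)$ cost of a ball-membership query is exactly what produces the single factor of $d$ in the final time bound.
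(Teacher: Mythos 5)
Your proposal is correct and follows essentially the same route as the paper: apply the finite-range matching guarantee (\Cref{thm:main-matching-result}, equivalently \Cref{cor:our-semialg-matching-bound} specialized to $\Delta=2$, $s=1$) to the test set $\Q$, lift the crossing-number bound to all of $\B_d$ via \Cref{lemma:test-set-balls}, and multiply the oracle-call count by the $O(d)$ cost of a ball-membership query. The parameter checks and the dominance of the $n^{2+1/d}\ln n$ term are exactly the bookkeeping the paper leaves implicit.
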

\begin{remark*}
The previous-best algorithm to construct matchings with crossing number $O(n^{1-1/d})$ with respect to $\B_d$ had time complexity $\aO(mn^3)$ \cite{MWW93}, which combined with \Cref{lemma:test-set-balls} yields an $\aO\round{n^{4 + \sfrac{1}{d}}}$ time algorithm. Alternatively, one can obtain a matching with sub-optimal crossing number $O\round{n^{1- 1/(d+1)}}$ by lifting $X$ into $\RR^{d+1}$, where the image of each range in $\B_d$ can be represented by a range in $\cH_{d+1}$ and applying the algorithm of \cite{Chan12} with time complexity $\aO(n)$.
\end{remark*}

We can show that test-sets can also be used as an input the algorithms \nameref{discalgo} and \nameref{apxalgo} using \Cref{lemma:matching-to-disc}:

\matchtodiscLemma*

Notice that the algorithm \nameref{discalgo} creates a coloring from the output of \nameref{mainalgo} precisely as it is defined in \Cref{lemma:matching-to-disc}. Therefore, a matching $M$ returned by \nameref{mainalgo} on a test-set (with low expected crossing number with respect to $\cH_d$) can be used to construct a coloring $\chi_M$ with low expected discrepancy with respect to $\cH_d$. Similarly, $\chi_M$ can be used to construct a small-sized $\eps$-approximation of $(X, \cH_d)$. 
These observations lead to the last two corollaries of this section.

\begin{corollary}\label{cor:halfspace}
	Let $X$ be a set of $n$ points in $\RR^d$ and $\T =  \T(n^{1/d})$ be the set of $(d+1)n$ half-spaces provided by \Cref{lemma:test-set}. Then
	\begin{enumerate}[i)]

	\item \nameref{discalgo}$\big((X,\T),~\frac{4e }{\ln 2(1-1/d)} ,~ \frac{\ln \round{(d+1)n}}{\ln 2} ,~1-\frac{1}{d}\big)$ constructs a coloring $\chi$ of $X$ of with expected discrepancy at most
	   \[
	   3\sqrt{ \round{6d^2 + \frac{12e (d+1)}{ (1-1/d)^2\ln 2} }n^{1-1/d} \ln m  + O \big( \ln(dn)\ln n\ln m \big) }
	   \]
	  with respect to half-spaces in $\RR^d$,
	  in expected  time  $O \round{d n^{2+1/d}\ln n}$.

   \item if $\eps \in (0,1)$ and  $A_0$ is a uniform random sample of $X$ of size $\frac{4\capx (d+1)}{\eps^2} $, then
   \nameref{apxalgo}$\big((A_0,\T|_{A_0}),~\frac{4e }{\ln 2(1-1/d)} ,~ \frac{\ln |\T|_{A_0}|}{\ln 2} ,~1-\frac{1}{d},~ \eps\big)$ returns a set $A \subset X$ of size 
     \begin{align*}
     O\round{ \max\curly{
     	\round{\frac{d  }{\eps^2}\ln\frac 1 \eps }^{\frac{d}{d+1}},
     	\frac{d}{\eps}\ln^{3/2} \round{\frac d \eps}
     	}}
     \end{align*}
     with expected approximation guarantee satisfying
    	$\EE[\eps(A,X,\cH_d)] \leq \eps$,
    and in expected  time
    $
    	O \round{ n + d\round{\frac d {\eps^2}}^{2+1/d}\ln \frac d \eps }.
    $
\end{enumerate}
\end{corollary}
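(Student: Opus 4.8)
The plan is to instantiate the finite-set-system results \Cref{thm:main-matching-result}, \Cref{thm:main-disc-result} and \Cref{thm::main-apx-result} on the \emph{finite} test-set $\T=\T(n^{1/d})$ of \Cref{lemma:test-set} --- which has size only $(d+1)n$ --- and then to upgrade ``low crossing number / low discrepancy with respect to $\T$'' to the same statement for \emph{all} of $\cH_d$ via \Cref{lemma:test-set}, at the price of one extra factor $d+1$ and an additive $6d^2n/t=6d^2n^{1-1/d}$ term.

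For part i), I would first note that $(X,\T)$ satisfies \nameref{assumption} with $a=\frac{4e}{\ln 2\,(1-1/d)}$, $b=\frac{\ln((d{+}1)n)}{\ln 2}$, $\gamma=1-\frac1d$, since $\T\subseteq\cH_d=\Gamma_{d,1,1}$ gives $\pi^*_\T(k)\le(4e)^d k^d$ by \Cref{lemma:dual-shat-fn-bound} and $|\T|=(d+1)n\ge\max\{n,34\}$. Then I would run \nameref{discalgo}$\big((X,\T),a,b,\gamma\big)$, which internally calls \nameref{mainalgo} to build a perfect matching $M$ of $X$ and colours the endpoints of each edge of $M$ by independent fair coins --- that is, $M$ produces exactly the coloring $\chi_M$ of \Cref{lemma:matching-to-disc}. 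By \Cref{thm:main-matching-result} the expected crossing number of $M$ with respect to $\T$ is at most $\frac{3a}{\gamma}n^\gamma+O(\ln(dn)\log n)$, so by \Cref{lemma:test-set} with $t=n^{1/d}$ its expected crossing number with respect to \emph{any} half-space is at most $\big(6d^2+\frac{12e(d+1)}{(1-1/d)^2\ln 2}\big)n^{1-1/d}+O\big(d\ln(dn)\log n\big)$. Since $\cH_d|_X$ is a finite system with $m:=|\cH_d|_X|=O(n^d)$ (Sauer--Shelah), applying \Cref{lemma:matching-to-disc} to $(X,\cH_d|_X)$ together with Jensen's inequality gives $\EE[\disc_{\cH_d}(\chi_M)]\le\sqrt{3\,\EE[\kappa(M)]\ln m}$, the claimed bound. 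All membership-oracle calls occur inside \nameref{mainalgo}$\big((X,\T),a,b,\gamma\big)$, whose expected count is $O\big(\frac{n^{3-\gamma}\ln n}{a}+\frac{|\T|n^{1-\gamma}\ln(|\T|n)}{a}\min\{\frac{2}{1-\gamma},\log n\}\big)=O(n^{2+1/d}\ln n)$ by \Cref{thm:main-matching-result} (using $a=\Theta(1)$, $|\T|=O(dn)$); multiplying by the $O(d)$ cost of a half-space membership test, and noting that building $\T$ costs $O(n^{2-1/d})$, yields $O(dn^{2+1/d}\ln n)$.

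For part ii), since $\vcdim(X,\cH_d)\le d+1$, \Cref{thm:uniform-apx-bound-in-expectation} applied to the sample $A_0$ of size $\frac{4\capx(d+1)}{\eps^2}$ gives $\EE[\eps(A_0,X,\cH_d)]\le\sqrt{\capx(d+1)/|A_0|}=\eps/2$; by \Cref{fact:apx-of-apx} it then suffices to build an $\eps/2$-approximation of $(A_0,\cH_d|_{A_0})$, which is what \nameref{apxalgo}$\big((A_0,\T|_{A_0}),a,b,\gamma,\eps\big)$ does. It repeatedly halves the current set $A$ by keeping the larger color class of the coloring output by \nameref{discalgo} on $A$ and a test-set for $A$ (from \Cref{lemma:test-set}, possibly obtained by restricting the one for $A_0$); each such coloring has low discrepancy with respect to all of $\cH_d$ by the part-i) argument applied at scale $|A|$. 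The number of halving rounds, the final size of $A$, and the total number of oracle calls are then derived exactly as in the proofs of \Cref{thm::main-apx-result} and \Cref{cor:vcdim-apx-result-inverted}, now substituting the half-space parameters $a,b,\gamma$ and $\ln|\cH_d|_{A_0}|=O(d\log(d/\eps))$; this gives size $O\big(\max\{(\tfrac{d}{\eps^2}\ln\tfrac1\eps)^{d/(d+1)},\ \tfrac d\eps\ln^{3/2}\tfrac d\eps\}\big)$, expected error $\le\eps$, and $O\big(n+d(\tfrac d{\eps^2})^{2+1/d}\ln\tfrac d\eps\big)$ oracle calls (the ``$+n$'' being the cost of drawing $A_0$).

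The hard part is not conceptual but the bookkeeping around the test-set reduction. I expect two spots to need care: (a) at every scale of the halving recursion the additive $6d^2|A|^{1-1/d}$ error from \Cref{lemma:test-set} must stay dominated by the $\Theta(d|A|^{1-1/d})$ main crossing-number term --- this works because both scale like $|A|^{1-1/d}$, but it forces using (or carefully restricting to) a test-set of the correct scale for the current $A$; and (b) after feeding a test-set to \nameref{discalgo}, the invocation of \Cref{lemma:matching-to-disc} must be with $\S=\cH_d|_A$ --- a finite system with $\ln|\cH_d|_A|=O(d\log|A|)$ --- rather than with the test-set, so that the lower-order terms of \Cref{thm:main-disc-result} and \Cref{thm::main-apx-result} come out exactly as stated.
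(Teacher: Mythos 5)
Your proposal matches the paper's route exactly: run the matching/discrepancy/approximation machinery on the finite test-set $\T(n^{1/d})$ of size $(d+1)n$, transfer the crossing-number guarantee of the resulting matching to all of $\cH_d$ via \Cref{lemma:test-set}, and then invoke \Cref{lemma:matching-to-disc} with $\S=\cH_d|_X$ (with $\ln|\cH_d|_X| = O(d\ln n)$ by Sauer--Shelah) — which is precisely the content of the short remark the paper offers as justification before \Cref{cor:halfspace}. Your caveat (a) is well taken: the additive $6d^2n/t$ term in \Cref{lemma:test-set} is measured relative to the point set for which the test-set was built and does not shrink along the halving recursion, so one genuinely needs a test-set at the scale of the current $A_i$ at each round rather than a single restricted test-set; on this point your plan is, if anything, more careful than the paper's own treatment.
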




\begin{corollary}\label{cor:our-bound-balls}
	Let $X$ be a set of $n$ points in $\RR^d$ and let $\Q$ be the set of balls provided by \Cref{lemma:test-set-balls}. Then
	\begin{enumerate}[i)]

	\item \nameref{discalgo}$\big((X,\Q),~\frac{8e }{\ln 2(1-1/d)} ,~ \frac{\ln \round{(d+2)n^{1+1/d}}}{\ln 2} ,~1-\frac{1}{d}\big)$ constructs a coloring $\chi$ of $X$ of with expected discrepancy at most
	   \[
	   3\sqrt{ \round{6(d+1)^2 + \frac{24e (d+2)}{(1-1/d)^2 \ln 2}  } n^{1-1/d} \ln m  + O \big( \ln(dn)\ln n\ln m \big)}
	   \]
	  with respect to balls in $\RR^d$, in
	   expected  time  $O \round{d  n^{2+1/d}\ln n }$.

   \item if $\eps \in (0,1)$,  $A_0$ is a uniform random sample of $X$ of size $\frac{4\capx (d+2)}{\eps^2} $, then the algorithm
   \nameref{apxalgo}$\big((A_0,\Q|_{A_0}),~\frac{8e }{\ln 2(1-1/d)} ,~ \frac{\ln |\Q|_{A_0}|}{\ln 2} ,~1-\frac{1}{d},~ \eps \big)$ returns a set $A \subset X$ of size at most
     \begin{align*}
     O\round{ \max\curly{
     	\round{\frac{d }{\eps^2}\ln \frac 1 \eps }^{\frac{d}{d+1}},
     	\frac{  d}{\eps}\ln^{3/2}\round{ \frac d \eps}
     	}}
     \end{align*}
     with expected approximation guarantee satisfying
    	$\EE[\eps(A,X,\B_d)] \leq \eps$,
    and in expected  time
    $
    	O \round{ n + d\round{\frac d {\eps^2}}^{2+1/d}\ln\frac d \eps  }.
    $
\end{enumerate}
\end{corollary}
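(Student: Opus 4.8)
The corollary is the ball-analogue of \Cref{cor:halfspace}, so the plan is to run exactly the same argument, replacing the half-space test-set lemma (\Cref{lemma:test-set}) by its ball version (\Cref{lemma:test-set-balls}) and the half-space matching bound (\Cref{cor:halfspace-matching}) by the ball matching bound (\Cref{cor:ball-matching}).

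For part~i), I would start from \Cref{cor:ball-matching}: the call to \nameref{mainalgo} made inside \nameref{discalgo}$\big((X,\Q),\,\frac{8e}{\ln 2(1-1/d)},\,\frac{\ln((d+2)n^{1+1/d})}{\ln 2},\,1-\frac1d\big)$ returns a perfect matching $M$ whose expected crossing number with respect to \emph{all} balls in $\RR^d$ is at most $\big(6(d+1)^2+\frac{24e(d+2)}{(1-1/d)^2\ln 2}\big)n^{1-1/d}+O(\ln(dn)\ln n)$, and (charging the $O(d)$-time membership oracle for balls) the whole call costs $O(dn^{2+1/d}\ln n)$ in expectation. Since \nameref{discalgo} colours the endpoints of the edges of $M$ exactly in the way \Cref{lemma:matching-to-disc} prescribes, and the number of distinct subsets of $X$ cut out by balls is $m=O(n^{d+1})$, \Cref{lemma:matching-to-disc} applied with this finite family of ranges gives $\EE[\disc_{\B_d}(\chi)\mid M]\le\sqrt{3\,\kappa_{\B_d}(M)\ln m}$, where $\kappa_{\B_d}(M)$ denotes the crossing number of $M$ against balls. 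Taking total expectation, applying Jensen's inequality, and then using $\sqrt{3x}\le 3\sqrt x$ — precisely as in the proof of \Cref{thm:main-disc-result} — yields the stated discrepancy bound, while the running-time estimate is inherited verbatim from \Cref{cor:ball-matching}.

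For part~ii), I would feed part~i) into the ``halving'' machinery behind \Cref{thm::main-apx-result} and \Cref{cor:vcdim-apx-result-inverted}. First draw the uniform sample $A_0$ of size $4\capx(d+2)/\eps^2$; since $\vcdim(X,\B_d)\le d+2$, \Cref{thm:uniform-apx-bound-in-expectation} gives $\EE[\eps(A_0,X,\B_d)]\le\sqrt{\capx(d+2)/|A_0|}=\eps/2$. Then run the halving loop of \nameref{apxalgo} on $A_0$ towards target $\eps/2$: at step $i$ build a fresh test-set $\Q_i$ for the current set $A_{i-1}$ via \Cref{lemma:test-set-balls}, run \nameref{discalgo}$\big((A_{i-1},\Q_i),\dots\big)$, and take $A_i$ to be the larger colour class. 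By part~i) (with $d$ treated as a constant) this colouring has expected discrepancy $O\big(\sqrt{|A_{i-1}|^{1-1/d}\ln|A_{i-1}|}\big)$ against $\B_d$, so by \Cref{lemma:disc-to-apx} the set $A_i$ is an approximation of $(A_{i-1},\B_d)$ with error $O\big(\sqrt{\ln|A_{i-1}|}\cdot|A_{i-1}|^{-(1+1/d)/2}\big)$; these per-level errors grow geometrically along the loop, so their sum is dominated by the last term, and choosing the number of halvings $j$ as in \nameref{apxalgo} forces the total to be at most $\eps/2$. Two applications of \Cref{fact:apx-of-apx} (along $A_0\supseteq\cdots\supseteq A_j$, and then against $X$) give $\EE[\eps(A_j,X,\B_d)]\le\eps$; solving for $|A_j|$ in the above inequality gives the stated size, and summing the per-level costs of \Cref{cor:ball-matching} over $j=O(\frac1d\log\frac1\eps)$ levels, plus the $O(n)$ cost of sampling $A_0$, gives the stated running time.

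The one genuinely delicate point — and the place I expect to need care — is the interface between \Cref{lemma:test-set-balls}, which certifies a small crossing number only for matchings of the exact point set it was built on, and the recursion inside \nameref{apxalgo}: the test-set must be rebuilt for every $A_{i-1}$ so that the additive slack $6(d+1)^2|A_{i-1}|^{1-1/d}$ stays at the scale of the current set. Reusing a single test-set (for $A_0$, or worse for $X$) throughout would leave a spurious $|A_0|^{1-1/d}$-sized term at every level and break the final size bound. Everything else — the Jensen step, the geometric summation of per-level errors, solving for $j$, and the oracle-complexity bookkeeping — is routine and identical to the half-space case of \Cref{cor:halfspace}.
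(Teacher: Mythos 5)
Your part~i) is exactly the paper's argument: the paper proves this corollary only by the remark preceding it (``\nameref{discalgo} creates a coloring from the output of \nameref{mainalgo} precisely as it is defined in \Cref{lemma:matching-to-disc}''), i.e.\ Corollary~\ref{cor:ball-matching} gives the expected crossing number of the matching with respect to all balls, Lemma~\ref{lemma:matching-to-disc} (applied to the $O(n^{d+1})$ distinct ball-induced ranges) plus Jensen converts it into the expected discrepancy, and the running time is inherited. Nothing to add there.

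For part~ii) you follow the same halving machinery (\Cref{thm:uniform-apx-bound-in-expectation} for $A_0$, then \nameref{apxalgo}, \Cref{lemma:disc-to-apx} and \Cref{fact:apx-of-apx}), but you deviate from the algorithm as literally stated: the corollary runs \nameref{apxalgo} on $(A_0,\Q|_{A_0})$ with a \emph{single} test set $\Q$ (built, as written, even for the full set $X$), and \nameref{apxalgo} only restricts that family to $A_{i-1}$ at each level --- it never rebuilds it. Your ``delicate point'' is well taken and is not addressed anywhere in the paper: the additive slack in \Cref{lemma:test-set}/\Cref{lemma:test-set-balls} comes from bounding the symmetric difference $|(h\,\Delta\, h')\cap X|$ for the ground set the test set was built on, so for a matching of $A_{i-1}\subset A_0$ it stays at scale $|A_0|^{1-1/d}$ (or $n^{1-1/d}$), not $|A_{i-1}|^{1-1/d}$. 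Carrying that fixed slack through the geometric summation gives a final size of order $(1/\eps^{2})^{1-1/(2d)}$ rather than the claimed $(1/\eps^{2})^{d/(d+1)}$, so your per-level rebuilding of the test set is indeed what is needed to obtain the stated bound. In short: your proof is correct, but it proves (and silently repairs) a variant of the corollary in which \nameref{apxalgo} regenerates $\Q_i$ from \Cref{lemma:test-set-balls} for each $A_{i-1}$; the paper's own text offers no argument for the fixed-test-set version it states. It would be worth saying explicitly in your write-up that this is a modification of the stated algorithm, not just a proof detail.
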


\bibliographystyle{plainnat}
\bibliography{spanningtreealgo.bib}

\section{Appendix}

\subsection{Proof of \Cref{lemma:matching-to-disc}}\label{sec:Appendix-match-to-disc}
	Let $ S \in \S$ be a fixed range. We express the sum $\chi_M(S)$ of colors over elements of $S$ as 
	\[
		\chi_M(S) 
		= \sum\limits_{\{x,y\} \in M; x,y \in S} \round{\chi_M(x) + \chi_M(y)} + \sum\limits_{x \in \mathrm{cr}(S,M)} \chi_M(x) = \sum\limits_{x \in \mathrm{cr}(S,M)} \chi_M(x)~,
	\]
	where $\mathrm{cr}(S,M) = \{x\in S  : \{x,y\} \in M, y \notin S\}$.
	Since $\mathrm{cr}(S,M)\leq \kappa$ for any $S \in \S$, $\disc(S,\chi_M)$ is a sum of at most $\kappa$ \emph{independent} random variables. 
	We use the following concentration bound from \cite{alonspencerprobabilistic}
	\begin{claim}[Theorem A.1.1 from \cite{alonspencerprobabilistic}]\label{claim:Alon-concentration}
		Let $X_1, \dots, X_k$ be independent $\{-1,1\}$-valued random variables with $\PP[X_i = -1] = \PP[X_i = 1] = 1/2$. Then for any $\alpha \geq 0$
			\[
				\PP\bracket{\abs{\sum_{i=1}^k X_i} > \alpha} \leq 2e^{-{\alpha^2}/{2k}}.
			\]
	\end{claim}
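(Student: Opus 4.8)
The plan is to use the standard exponential-moment (Chernoff) method. Write $S = \sum_{i=1}^k X_i$. First I would establish a bound on the moment generating function of a single term: for every $\lambda \in \RR$,
\[
	\EE\bracket{e^{\lambda X_i}} = \frac{e^{\lambda} + e^{-\lambda}}{2} = \cosh \lambda \leq e^{\lambda^2/2}.
\]
This last inequality I would prove by comparing Taylor expansions coefficient by coefficient: $\cosh\lambda = \sum_{n \geq 0} \frac{\lambda^{2n}}{(2n)!}$ and $e^{\lambda^2/2} = \sum_{n \geq 0} \frac{\lambda^{2n}}{2^n n!}$, so it suffices that $(2n)! \geq 2^n n!$ for every integer $n \geq 0$, which holds since the product $\prod_{j=n+1}^{2n} j$ has $n$ factors each at least $2$. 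By independence of the $X_i$, this gives $\EE\bracket{e^{\lambda S}} = \prod_{i=1}^k \EE\bracket{e^{\lambda X_i}} \leq e^{k\lambda^2/2}$.

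Next, for $\lambda > 0$, applying Markov's inequality to the nonnegative random variable $e^{\lambda S}$ yields
\[
	\PP\bracket{S > \alpha} \leq \PP\bracket{e^{\lambda S} \geq e^{\lambda \alpha}} \leq e^{-\lambda\alpha}\,\EE\bracket{e^{\lambda S}} \leq \exp\round{-\lambda\alpha + \tfrac{k\lambda^2}{2}}.
\]
I would then optimize the exponent over $\lambda > 0$: the choice $\lambda = \alpha/k$ minimizes $-\lambda\alpha + \tfrac{k\lambda^2}{2}$ and produces $\PP\bracket{S > \alpha} \leq e^{-\alpha^2/(2k)}$. (If $\alpha = 0$ the asserted bound is trivial, since its right-hand side equals $2 \geq 1$.)

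Finally, since each $X_i$ is symmetric, $-S$ has the same distribution as $S$, so the identical argument gives $\PP\bracket{S < -\alpha} \leq e^{-\alpha^2/(2k)}$; a union bound over the two one-sided events then gives
\[
	\PP\bracket{\abs{S} > \alpha} \leq \PP\bracket{S > \alpha} + \PP\bracket{S < -\alpha} \leq 2 e^{-\alpha^2/(2k)},
\]
which is the claim. The only step requiring any care is the scalar inequality $\cosh\lambda \leq e^{\lambda^2/2}$; the rest is routine, and I do not anticipate a genuine obstacle.
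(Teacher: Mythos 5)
Your proof is correct and is exactly the standard exponential-moment argument (the bound $\cosh\lambda \le e^{\lambda^2/2}$, Markov's inequality with the optimized $\lambda=\alpha/k$, then symmetry and a union bound), which is the proof of Theorem A.1.1 in Alon--Spencer that the paper simply cites rather than reproves. No gaps; the self-contained justification of $(2n)!\ge 2^n n!$ for the $\cosh$ bound is a fine touch.
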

	\noindent
	Applying \Cref{claim:Alon-concentration}, we get that for any fixed $S \in \S$ and $\alpha > 0$,
		\[
			\PP\bracket{\abs{\chi_M(S)}  > \alpha}  \leq 2e^{-{\alpha^2}/{2\kappa}}.
		\]
	By the union bound, we get
		\[
			\PP \bracket{ \disc_\S(\chi_M) > \alpha } = \PP \bracket{\max\limits_{S \in \S} \abs{\chi_M(S)}> \alpha } \leq m\cdot 2e^{-{\alpha^2}/{2\kappa}}.
		\]
	Finally, we bound the expected discrepancy by applying Fubini's theorem
		\begin{align*}
			\EE\bracket{ \disc_\S(\chi_M) }
			&\overset{\text{def}}{=}
			\int\limits_{0}^\infty \PP \bracket{ \disc_\S(\chi_M) > \alpha } d\alpha
			\leq
			\int\limits_{0}^\infty \min\left\{2m\cdot e^{-{\alpha^2}/{2\kappa}}, ~ 1 \right\} d\alpha\\
			&=
			\int\limits_{0}^{\sqrt{2\kappa\ln(2m)}} 1 d\alpha
			+
			\int\limits_{\sqrt{2\kappa\ln(2m)}}^\infty 2m\cdot e^{-{\alpha^2}/{2\kappa}} d\alpha\\
			&=
			\sqrt{2\kappa\ln(2m)}
			+
			2m\sqrt{2\kappa}\int\limits_{\sqrt{\ln(2m)}}^\infty  e^{-t^2}  ~dt\\
			&=
			\sqrt{2\kappa\ln(2m)}
			+
			2m\sqrt{2\kappa}\int\limits_{\sqrt{\ln(2m)}}^\infty  \frac{t}{t}\cdot e^{-t^2}  ~dt\\
			&\leq
			\sqrt{2\kappa\ln(2m)}
			+
			2m\sqrt{\frac{2\kappa}{\ln(2m)}}\int\limits_{\sqrt{\ln(2m)}}^\infty  te^{-t^2}  ~dt\\
			&=
			\sqrt{2\kappa\ln(2m)}
			+
			2m\sqrt{\frac{2\kappa}{\ln(2m)}}\bracket{  -\frac{e^{-t^2}}{2}}_{\sqrt{\ln(2m)}}^\infty \\
			&=
			\sqrt{2\kappa\ln(2m)}
			+
			\sqrt{\frac{\kappa}{2\ln(2m)}}
			\leq
			\sqrt{3\kappa\ln m},
		\end{align*}
		if $m \geq 34$. This concludes the proof of \Cref{lemma:matching-to-disc}.
\qed

\subsection{Proof of \Cref{thm::main-apx-result}}\label{sec:main-apx-proof}

In this section, we show how \Cref{thm::main-apx-result} is implied by \Cref{thm:main-disc-result}.
Recall that the algorithm \nameref{apxalgo} constructs a sequence of sets $A_0, A_1, A_2, \dots, A_j \subseteq X$ iteratively. In particular, it sets $A_0 = X$ and for $i = 1, \dots, j$, $A_i \subseteq A_{i-1}$ is defined as $\chi_i^{-1}(1)$, where $\chi_i: A_i \to \{-1,1\}$ is the coloring provided by \nameref{discalgo}$\round{(A_i,\S|_{A_i}), a,b,\gamma}$. Note that $ |A_{i+1}| = \ceil{|A_i|/2} = \ceil{n/2^{i+1}}$.
We bound the approximation guarantee using the following lemma.

\disctoapx*

\noindent
By \Cref{thm:main-disc-result},
	\[
	  \EE\bracket{\disc_{\S|_{A_i}} \round{\chi_i}} \leq  3\sqrt{ \frac{a}{\gamma}  |A_i|^\gamma\ln |\S|_{A_i}| + \round{\frac{b}{2} + 12\ln |\S|_{A_i}|}\log |A_i| \cdot \ln |\S|_{A_i}| } .
	\]
Thus for $ i = 0, \dots, j-1$, by \Cref{lemma:disc-to-apx},
\begin{equation}\label{eq:apx-bound-once}
	\EE\bracket{\eps(A_{i+1}, A_i, \S|_{A_i})} \leq \frac{6}{\ceil{n/2^i}} \cdot \sqrt{\frac{a}{\gamma}  \cdot \ceil{n/2^i}^{\gamma}\ln m + \round{\frac{b}{2} + 12\ln m}\log \ceil{n/2^i} \ln m}.
\end{equation}
Recall that if $A_1$ is an $\eps_1$-approximation of $(X,\S)$ and $A_2$ is an $\eps_2$-approximation of $(A_1, \S|_{A_1})$, then $A_2$ is an $(\eps_1 + \eps_2)$-approximation of $(X,\S)$.
Therefore,
\[
	\eps(A_{j},X,\S) \leq \eps(A_{j}, A_{j-1}, \S|_{A_{j-1}}) + \eps(A_{j-1}, A_{j-2}, \S|_{A_{j-2}}) + \dots + \eps(A_2,A_1, \S|_{A_1}) + \eps(A_1,X, \S),
\]
which by linearity of expectation and \Cref{eq:apx-bound-once} yield
\begin{align*}
	\EE\bracket{\eps(A_{j},X,\S) }
	&\leq
	\sum\limits_{i=0}^{j-1} \frac{6}{\ceil{n/2^i}} \cdot \sqrt{\frac{ a}{\gamma}  \cdot \ceil{n/2^i}^{\gamma}\ln m + \round{\frac{b}{2} + 12\ln m}\log \ceil{n/2^i} \ln m}\\
	&\leq
	\frac{6}{n^{1-\gamma/2}} \sqrt{ \frac{a\ln m}{\gamma}} \cdot \sum_{i=0}^{j-1} \round{2^{1-\gamma/2}}^i  + \frac{6}{n} \sqrt{ \frac{b\log n \ln m}{2} + 12\log n \ln^2 m} \cdot \sum_{i=0}^{j-1} 2^i \\
	&\leq
	\frac{15}{n^{1-\gamma/2}} \sqrt{ \frac{a\ln m}{\gamma}} \cdot \round{2^{1-\gamma/2}}^j
	+ \frac{6\cdot (2^j-1)}{n}\sqrt{ \frac{b\log n \ln m}{2} + 12\log n \ln^2 m} \\
	&\leq
	15 \sqrt{ \frac{a\ln m}{\gamma}}  \cdot \round{\frac{2^j}{n}}^{1-\gamma/2}
	+ \frac{6\cdot 2^j}{n}\sqrt{ \frac{b\log n \ln m}{2} + 12\log n \ln^2 m}.
\end{align*}
Substituting
\[
	j = \floor{\log n +\min\curly{ \frac{2}{2-\gamma} \log \frac{\eps\sqrt{\gamma}}{30\sqrt{a\ln m}},   \log \frac{\eps}{12 \sqrt{\round{\frac b 2 + 12\ln m }\ln(m)\log n }}} },
\]
we get that $\EE\bracket{\eps(A_{j},X,\S) } \leq \eps$ and
\[	
	|A_j| = \frac{n}{2^j} \leq 2 \max \curly{ \round{\frac{30\sqrt{a \ln m }}{\eps \sqrt{\gamma}} }^{\frac{2}{2-\gamma}}, \frac{12 \sqrt{\round{\frac b 2 + 12\ln m }\ln(m)\log n }}{\eps} } .
\]
By \Cref{thm:main-disc-result}, constructing a the coloring $\chi_i$ requires at most
\[
	\min\curly{ \frac{ 24 {|A_i|}^{3-\gamma} \ln|A_i|}{ a}  + \frac{18 m{|A_i|}^{1-\gamma} \ln\round{m|A_i|}}{a} \min\curly{\frac{2}{1-\gamma}, \log |A_i|},~ \frac 1 {7} {|A_i|}^3 + \frac{m|A_i|}{2} }
\]
calls to the membership Oracle, in expectation. Since $|A_i| = \ceil{n/2^i}$, the expected number of membership Oracle  calls that \nameref{apxalgo}$\big((X,\S),  a, b, \gamma, j \big)$ performs is at most
\begin{align*}
&
\sum_{i = 0}^{j} \min\curly{ \frac{ 24 \round{\frac{n}{2^i}}^{3-\gamma} \ln\frac{n}{2^i}}{ a}  + \frac{18 m\round{\frac{n}{2^i}}^{1-\gamma} \ln\frac{mn}{2^i}}{a} \min\curly{\frac{2}{1-\gamma}, \log \frac{n}{2^i}},~ \frac 1 {7 }\round{\frac{n}{2^i}}^3 + {\frac{mn}{2^{i+1}}} }\\
&\leq
 \min\curly{ \sum_{i = 0}^j \round{ \frac{ 24 \round{\frac{n}{2^i}}^{3-\gamma} \ln\frac{n}{2^i}}{ a}  + \frac{18 m\round{\frac{n}{2^i}}^{1-\gamma} \ln\frac{mn}{2^i}}{a} \min\curly{\frac{2}{1-\gamma}, \log \frac{n}{2^i}}},~ \sum_{i = 0}^j \round{ \frac{n^3}{7 \cdot 2^{3i}} + {\frac{mn}{2^{i+1}}}}}\\
 &\leq
 \min\curly{ \frac{32n^{3-\gamma}\ln n}{a} + \frac{18 mn^{1-\gamma} \ln(mn)}{a} \round{ \min\curly{\frac{2}{1-\gamma}, \log n}}^2,~ \frac{8n^3}{49}  + mn  }.
\end{align*}
This concludes the proof of \Cref{thm::main-apx-result}.\qed

	\end{document}